\newcommand{\mr}{\mathrm}
\newcommand{\ra}{\rightarrow}
\newcommand{\tr}{\mathrm{tr}}
\newcommand{\CC}{\mathbb{C}}
\newcommand{\RR}{\mathbb{R}}
\newcommand{\HH}{\mathcal{H}}
\newcommand{\xra}{\xrightarrow}
\newcommand{\til}{\widetilde}
\newcommand{\bt}{\bullet}
\newcommand{\bl}{\color{blue}}
\newcommand{\Geom}{\mathrm{Geom}}
\newcommand{\sGeom}{\mathrm{sGeom}}
\newcommand{\mc}{\mathcal}
\newcommand{\MM}{\mathcal{M}}
\newcommand{\tot}{\mathrm{tot}}
\newcommand{\ol}{\overline}
\newcommand{\fr}{\mathrm{fr}}
\newcommand{\Hom}{\mathrm{Hom}}
\newcommand{\ii}{\mathrm{in}}
\newcommand{\oo}{\mathrm{out}}
\newcommand{\gr}{\color{green}}
\newcommand{\flip}{\mathrm{flip}}
\newcommand{\wh}{\widehat}
\newcommand{\cc}{\mathsf{c}}
\newcommand{\dd}{\partial}
\newcommand{\ccirc}{\circ_c}
\newcommand{\ZZ}{\mathbb{Z}}
\newcommand{\sym}{\mr{sym}}
\newcommand{\Sp}{\mathrm{sp}}
\newcommand{\Conv}{\mathrm{Conv}}
\newcommand{\p}{\mathsf{p}}
\newcommand{\Adm}{\mathrm{Adm}}
\newcommand{\Ahat}{\widehat{A}_\infty}
\newcommand{\SP}{\mathsf{SP}}
\newcommand{\kk}{\mathbbm{k}}
\newcommand{\D}{\mathcal{D}}
\newcommand{\cyl}{\mathrm{cyl}}
\newcommand{\BV}{\mathrm{BV}}
\newcommand{\rd}{\mathrm{red}}
\theoremstyle{remark}
\newtheorem{remark}{Remark}[section]
\theoremstyle{plain}
\newtheorem{proposition}[remark]{Proposition}
\newtheorem{thm}[remark]{Theorem}
\newtheorem{corollary}[remark]{Corollary} 
\theoremstyle{definition}
\newtheorem{definition}[remark]{Definition}
\newtheorem{example}[remark]{Example}
\newtheorem{conjecture}[remark]{Conjecture}
\newtheorem{question}[remark]{Question}
\title[Combinatorial 2d HTQFT from a local cyclic $A_\infty$ algebra]{
Combinatorial 2d higher topological quantum field theory from a local cyclic $A_\infty$ algebra
}
\author[J.Beck]{Justin Beck}
\address{University of Notre Dame, Notre Dame, IN 46556, USA}
\email{jbeck6@nd.edu}
\author[A.Losev]{Andrey Losev}
\address{Shanghai Institute for Mathematics and Interdisciplinary Sciences, Building 3, 62 Weicheng Road, Yangpu District, Shanghai, 200433, China}
\email{
aslosev2@yandex.ru
}
\author[P.Mnev]{Pavel Mnev}
\address{University of Notre Dame, Notre Dame, IN 46556, USA}
\email{pmnev@nd.edu}
\begin{document}

\begin{abstract}
    We construct combinatorial analogs of 2d higher topological quantum field theories. 
  We consider triangulations as vertices of a certain CW complex $\Xi$. In the ``flip theory,'' cells of $\Xi_\flip$ correspond to polygonal decompositions obtained by erasing the edges in a triangulation.
  These theories assign to a cobordism $\Sigma$ a cochain $Z$ on $\Xi_\flip$ constructed as a contraction of structure tensors of a cyclic $A_\infty$ algebra $V$ assigned to polygons. The cyclic $A_\infty$ equations imply the closedness equation $(\delta+Q)Z=0$. In this context we define combinatorial BV operators and give examples with coefficients in $\ZZ_2$.

  In the ``secondary polytope theory,'' $\Xi_\mr{sp}$ is the secondary polytope (due to Gelfand-Kapranov-Zelevinsky) and the cyclic $A_\infty$ algebra has to be replaced by an appropriate refinement that we call an $\Ahat$ algebra.

  We conjecture the existence of a good Pachner CW complex $\Xi$ for any cobordism, whose local combinatorics is descibed by secondary polytopes and the homotopy type is that of Zwiebach's moduli space of complex structures. Depending on this conjecture, one has an ``ideal model'' of combinatorial 2d
HTQFT determined by a local $\widehat{A}_\infty$ algebra.
\end{abstract}

\maketitle

\setcounter{tocdepth}{3}
\tableofcontents

\section{Introduction}
Simplest TQFTs are defined as quantum field theories that do not depend on geometrical data on the manifold. Refined versions contain a homological symmetry (i.e. the target of the QFT functor is the category of chain complexes rather than vector spaces) such that this dependence is a exact. We propose to study a new class of models where the manifold is equipped with a triangulation 
playing the role of geometry. Changes of this triangulation (generated by Pachner moves) are required to be exact with respect to the differential, see Figure \ref{fig: intro Pachner}. 
\begin{figure}[h]
    \centering
    \includegraphics[scale=0.5]{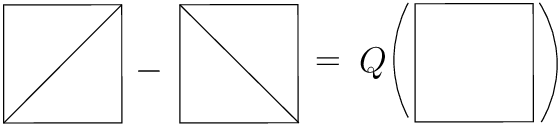}\hspace{2cm} \includegraphics[scale=0.5]{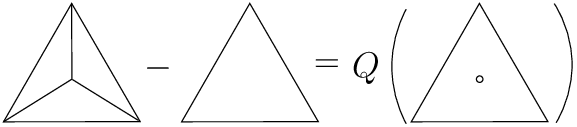}
    \caption{Exactness with respect to Pachner's moves, schematically.}
    \label{fig: intro Pachner}
\end{figure}

In this paper we study the 2d case (and 1d case as a toy model). Here partition functions are assigned to triangulated surfaces. The space of triangulations $\Xi$ itself may be considered as a CW complex (we also call it the Pachner complex), where:
\begin{itemize}
\item 
Vertices of $\Xi$ are 
triangulations of the surface. 
\item  Edges of $\Xi$ correspond to Pachner moves (Figure \ref{fig: Pachner moves}). An edge corresponding to a Pachner flip can be seen as an insertion of one square (where the flip occurs) in a triangulation, cf. Figure \ref{fig: 1-cell of Xi_flip}. An edge corresponding to the second Pachner move -- a stellar subdivision/aggregation -- can be seen as having one triangle with a ``floating point''  in the triangulation (see the discussion of secondary polytopes in the paper and Figure \ref{fig: secondary polytope - triangle with one floating point}). 
\item There are also higher cells of $\Xi$ corresponding to relations between Pachner's moves (see Figures \ref{fig: 2-cells in Xi_flip}, \ref{fig: SP triangle with 2 floating points}, \ref{fig: SP square with 1 floating point}). 
\end{itemize}

In this paper we present two models of such a theory, and conjecture the existence of the ideal model combining the features of the two.

    \textbf{Model 1 (``flip theory''):} Only Pachner flips are considered, in particular the set of vertices is fixed. We denote the corresponding CW complex $\Xi_\flip$. The cells of $\Xi_\flip$ correspond to polygonal decomposition of the surface, where the dimension of the cell $e$ is 
    \begin{equation}
    \dim(e)=\sum_i (n_i-3),
    \end{equation} 
    where the sum is over the polygons and $n_i$ is the number of sides of $i$-the polygon. The cell $e$ can be identified with a product of Stasheff's associahedra $\prod_i K_{n_i-1}$.
    
\textbf{Model 2 (``secondary polytope theory''):} For $A\subset \RR^2$ a collection of points in general position, one has a convex polytope $\Xi_\mr{sp}$ -- the ``secondary polytope'' introduced in \cite{GKZ}. Vertices of $\Xi_\mr{sp}$ correspond to \emph{regular}\footnote{
A triangulation is regular if one can find a continuous convex function linear on the triangles and breaking (non-differentiable) on the edges.
} triangulations of the polygon $\Sigma$ -- the convex hull of $A$, with vertices in a subset of $A$. Edges of $\Xi_\mr{sp}$ correspond to Pachner moves between regular triangulations and there are certain higher cells. This is a partial model as it is defined only on a disk, it relies on convex geometry of $\RR^2$ and not only on topology, and the maximal set of vertices $A$ is fixed (one can remove vertices via stellar aggregations but one cannot add vertices not in $A$).

\textbf{The ideal Model 3:} We conjecture the existence of an ideal model combining the features of Models 1 and 2. Namely, it should be defined for all surfaces and allow both Pachner moves. We also conjecture that the homotopy type of $\Xi$ in the ideal model is that of the moduli space of complex structures decorated with jets of local coordinates at punctures (Zwiebach's moduli space $\til\MM_{h,n}$).
In this correspondence, boundary circles of the surface (``polygonal holes'') correspond to the punctures.

After we described the space of triangulations, we proceed to the construction of functors from cobordism category to the category of chain complexes. The value of the functor on a cobordism is a
cochain on the ``space of triangulations'' $\Xi$ taking values in
 linear maps between spaces of in- and out-states.  
 In particular, in Model 1, we fix a cyclic $A_\infty$ algebra $V$ with differential $Q=m_1$ and other operations $m_2,m_3,\ldots$ and bilinear pairing $g$. Cyclicity of  an $A_\infty$ algebra with respect to the pairing $g$ means that operations 
\begin{equation}c_{n+1}=g(\bt,m_n(\bt,\ldots,\bt)) \colon V^{n+1}\ra \kk\end{equation} 
are invariant with respect to cyclic permutations of inputs.
The space of states for a circle triangulated into $k$ intervals is just $V^{\otimes k}$ considered as a cochain complex. The partition function for a surface equipped with a polygonal decomposition $\alpha$ is
\begin{equation}\label{intro Z(alpha)}
    Z(\alpha)=\left\langle \bigotimes_{\mr{polygons}\, p\, \mr{of}\, \alpha} c_{|p|}, \bigotimes_{\mr{edges\,of\,}\alpha} g^{-1}  \right\rangle
\end{equation}
-- the contraction of elements
$c_{|p|}\in (V^*)^{\otimes |p|}$ assigned to polygons ($|p|$ is the number of sides of the polygon $p$) -- the cyclic $A_\infty$ operations -- and inverse pairings attached to the edges. 
Since polygonal decompositions correspond to cells of $\Xi_\flip$, the formula (\ref{intro Z(alpha)}) determines the value of a cochain on $\Xi_\flip$ on these cells.

By virtue of higher associativity relations in the $A_\infty$ algebra $V$, the cochain $Z$ satisfies the universal closedness equation
\begin{equation}\label{intro main eq}
    (\delta+Q)Z=0,
\end{equation}
with $\delta$ the coboundary operator on cochains!

The equation (\ref{intro main eq}) in degree zero  means that $Z$ for a triangulated surface is $Q$-closed. In degree one it means that for $\alpha_1,\alpha_2$ two triangulations that differ by a Pachner flip, and for $\alpha$ the same triangulation with a square inserted where the flip is occurring (see Figure \ref{fig: 1-cell of Xi_flip}), one has 
\begin{equation}\label{intro Z(flip)}
    Z(\alpha_1)-Z(\alpha_2)=Q Z(\alpha).
\end{equation}

If we think of the flip as an analog of a change of complex structure on the surface then the l.h.s. of (\ref{intro Z(flip)}) corresponds to insertion of the stress-energy tensor $T_{zz}$ while the $Z(\alpha)$ in the r.h.s. corresponds to insertion of its superpartner $G_{zz}$. The meaning of (\ref{intro Z(flip)}) is just 
\begin{equation}
T_{zz}=Q(G_{zz}).
\end{equation}

In Section \ref{sec: improved BV operator}, in the context of Model 1 we define a combinatorial version of the BV operator $G_{0,-}$ as the value of $Z$ on a ``BV-cycle'' -- a 1-cycle of $\Xi_\flip$ on a cylinder that squares to a boundary. We construct examples of such BV cycles with coefficients in $\ZZ_2$.

In Model 2, the spaces of states are the same and the partition function is defined by the same formula (\ref{intro Z(alpha)}). However, $\alpha$ (the decoration of a cell in $\Xi_\mr{sp}$) now is a polygonal decomposition with floating points (with a prescribed ``configuration chamber'' describing their geometry rather than topology). Thus, instead of cyclic $A_\infty$ operations on $V$ we need a more refined ``$\Ahat$'' algebra structure (see Section \ref{ss: def Ahat algebra}) where multilinear operations correspond to polygons with floating points in a configuration chamber. Together with new operations (which include the usual $A_\infty$ operations), we have new relations, coming from the adjacency of cells in secondary polytopes. These relations imply that the partition function seen as a  cochain on the secondary polytope again satisfies the universal closedness equation (\ref{intro main eq}).

In the toy model of dimension one, Model 2 corresponds to quantum mechanics with $Q$-exact Hamiltonian, see Section \ref{ss: 1d HTQM from secondary polytopes}. On the other hand, in dimension two the stellar subdivision equation is an analog of $Q$-exactness of the trace of the stress-energy tensor.

In the table below we summarize the analogy between the continuum and combinatorial 2d field theory.

\begin{tabular}{c|c}
    \textbf{Combinatorial HTQFT} & \textbf{Continuum TCFT} \\ \hline \hline
    Pachner flip &  deformation of complex structure \\ \hline
    stellar subdivision & Weyl transformation of metric \\ \hline \hline
    Pachner complex $\Xi$ & Zwiebach's moduli space $\til\MM_{h,n}$
    \\ \hline
    Pachner complex in genus zero & framed little disk operad $E_2^\fr$\\
    \hline \hline
    state on a polygon &   quantum field $\slash$ point observable \\ \hline
    partition function paired with  states  & correlator of observables\\
    on the boundary of  polygonal holes & \\  
     \hline 
    partition function &  partition function \\
    is a $(\delta+Q)$-closed & is a $(d+Q)$-closed \\
    cochain on $\Xi$ &  form on the moduli space \\
    \hline \hline
    cyclic product $c_3$ & vacuum vector  \\
    & (partition function of a disk) \\ \hline
    cyclic $A_\infty$ 
    operation $c_4$ &  field $G_{zz}(dz)^2+G_{\bar{z}\bar{z}}(d\bar{z})^2$ \\ \hline
    $\wh{A}_\infty$ operation $\mu$  
   & field $G_{z\bar{z}}dz d\bar{z}$\\
    (triangle with a point inside)  & \\ \hline
    defect of associativity $Q(c_4)$ & stress-energy tensor (traceless part) \\ \hline
    defect of stellar subdivision $Q(\mu)$ & trace of stress-energy tensor
\end{tabular}


\subsection{Main results}
\begin{enumerate}[(i)]
    \item Given a cyclic $A_\infty$ algebra, formula (\ref{intro Z(alpha)}) (the partition function of the ``flip model'' of combinatorial HTQFT) defines a $(\delta+Q)$-closed cochain on the flip complex and is functorial with respect to gluing of cobordisms -- Theorem \ref{thm: model 1}.
    \item Within the the flip model, we define a natural notion of a BV cycle -- a 1-cycle on the flip complex of a cylinder whose composition-square is zero in homology -- and give nontrivial examples, see Proposition \ref{prop: c^k}.
    
    As an aside, in Section \ref{ss: BV_infty} we give an example where a BV cycle can be extended to a nonhomogeneous cycle which satisfies Maurer-Cartan equation at the chain level (Proposition \ref{prop: BV_infty}), giving rise to a combinatorial replacement of the $S^1$-equivariant BRST operator  operator $Q+u G_{0,-}$ in TCFT.
    \item We construct the ``secondary polytope theory'' as a $(\delta+Q)$-closed cochain on the secondary polytope of a finite set  $A\subset \RR^2$ (understood as a model of combinatorial HTQFT on a disk realized as the convex hull of $A$, with $A$ the allowed vertices of triangulations), see Theorem \ref{thm: model 2}. 
    
    The local input of the construction is an ``$\Ahat$ algebra'' (defined in Section \ref{ss: def Ahat algebra}) -- a refinement of the notion of a cyclic $A_\infty$ algebra by certain extra homotopies, encoding in particular compatibility with the second 2d Pachner move (stellar subdivision/aggregation).
    \item As an aside, we develop a 1d toy model of the secondary polytope theory where the secondary polytopes are simply cubes -- Section \ref{ss: 1d HTQM from secondary polytopes}. In particular, in the continuum limit, the corresponding HTQFT becomes an interesting generalization of the standard topological quantum mechanics (a.k.a. twisted $\mathcal{N}=2$ supersymmetric quantum mechanics) -- Example \ref{ex: infinitesimal SP_1 algebra}.
    \item Subject to a conjecture on existence of a ``good'' Pachner complex (Conjecture \ref{conjecture: Pachner complex}), 
    \begin{enumerate}[(a)]
        \item We construct a $(\delta+Q)$-closed cochain on the Pachner complex, behaving functorially with respect to gluing of cobordisms -- Corollary \ref{corollary: Model 3}.
    \item We construct a canonical BV algebra structure on the center of the cohomology of an $\Ahat$ algebra (the reduced space of states on a circle) -- Corollary \ref{cor: BV algebra on the reduced space of states}.
      \end{enumerate}
\end{enumerate}

\subsection*{Acknowledgements} 
We thank Nikolai Mnev and Lev Soukhanov for inspiring discussions. We also thank Kirill Salmagambetov for discussions at an early stage of the project (in 2017).

\section{Warm up: a combinatorial 2d TQFT (strict case)}
\label{sec: combinatorial strict 2d TQFT}

Throughout the paper $\kk$ will stand for the ground field. By default, $\kk=\CC$ or $\RR$. 
In Section \ref{sec: improved BV operator} we will have results on combinatorial BV operators which require $\kk=\ZZ_2$.
\subsection{Local data} 
Fix a finite-dimensional associative algebra $V$ over $\kk$ with multiplication $m_2\colon V\otimes V\ra V$ and with nondegenerate invariant pairing ${g\colon V\otimes V\ra \kk}$. Denote $c_3\colon V^{\otimes 3}\ra \kk$ the corresponding 3-to-1 operation $c_3(x,y,z)=g(x,m_2(y,z))$; it is invariant under cyclic permutations of inputs.
\subsection{Partition function} Let $\Sigma$ be an oriented compact surface, 
possibly with boundary, equipped with a triangulation $T$ (Figure \ref{fig: triangulated surface}). Assume that $\Sigma$ has $n$ boundary circles $S^1_1,\ldots,S^1_n$ subdivided by $T$ into $k_1,\ldots,k_n$ intervals.  
\begin{figure}[h]
\includegraphics[scale=0.7]{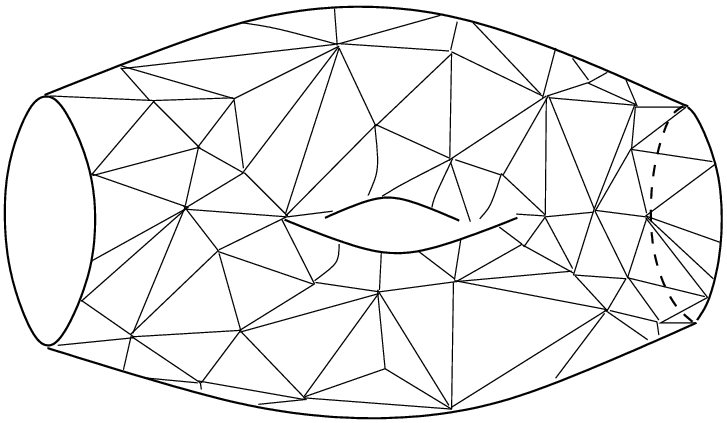}
\caption{Triangulated surface.}
\label{fig: triangulated surface}
\end{figure}
We define the spaces of states for the boundary circles as 
\begin{equation}
\HH(S^1_i,T|_{S^1_i})=V^{\otimes k_i} 
\end{equation} 
and the partition function for the surface as the contraction of tensors $c_3$ assigned to the triangles (2-simplices of $T$), using tensors $g^{-1}\in V\otimes V$ assigned to the 
edges (1-simplices of $T$): 
\begin{equation}\label{Z strict}
Z(\Sigma,T)=\left\langle \bigotimes_{\mr{triangles}}c_3, \bigotimes_{\mr{
edges}}g^{-1} \right\rangle \qquad \in \bigotimes_{i=1}^n
V^{\otimes k_i} = \bigotimes_{i=1}^n \HH(S^1_i,T|_{S^1_i}).
\end{equation}
Here we note that the edges of $T$ which lie on the boundary of $\Sigma$ are not contracted, so the result is a vector in the tensor product of spaces assigned to all boundary edges. If $\Sigma$ is a closed surface, $Z(\Sigma,T)$ is a number. 

If some boundary circles are assigned as ``in-boundary'' and some as ``out-boundary,'' one thinks of $\Sigma$ as a cobordism. Then one can reinterpret (using the inner product on spaces of states induced by $g$) the partition function (\ref{Z strict}) as a linear operator 
-- an element of $\mr{Hom}(\HH_\ii,\HH_\oo)$.

\subsection{Functoriality (cutting-gluing property)}\label{ss: functoriality of strict tqft}
The construction above gives rise to a QFT as a monoidal functor from the category of triangulated 2-cobordisms  (the objects in this category are disjoint unions of triangulated circles) 
to the category $\mr{Vect}$ of vector spaces. In particular, if one has two composable triangulated cobordisms
$ ((S^1)^{\sqcup n_1},\tau_1) \xra{(\Sigma',T')} ((S^1)^{\sqcup  n_2},\tau_2)$  and 
$ ((S^1)^{\sqcup n_2},\tau_2) \xra{(\Sigma'',T'')} ((S^1)^{\sqcup n_3},\tau_3)$, then their for gluing $ ((S^1)^{\sqcup n_1},\tau_1) \xra{(\Sigma,T)} ((S^1)^{\sqcup n_3},\tau_3)$ one has
\begin{equation}
Z(\Sigma,T)=  Z(\Sigma'',T'') \circ Z(\Sigma',T') \qquad \in \Hom(\HH_1, \HH_3).
\end{equation}

\subsection{Invariance with respect to triangulations}
By the celebrated Pachner's theorem \cite{Pachner},\footnote{
In fact, here we are using its relative version, see \cite{Casali}.
} one can transition between two triangulations $T,T'$ of the same surface with boundary (assume that $T,T'$ induce the same triangulation of the boundary) by a sequence of ``bistellar moves,'' a.k.a. ``Pachner moves'' (Figure \ref{fig: Pachner moves}): 
\begin{enumerate}[(1)]
    \item flips (switching the diagonal in a square);
    \item stellar subdivisions/aggregations -- subdividing a triangle into three triangles, or the inverse (merging) move.
\end{enumerate}
\begin{figure}[h]
    \centering
    \includegraphics[scale=0.7]{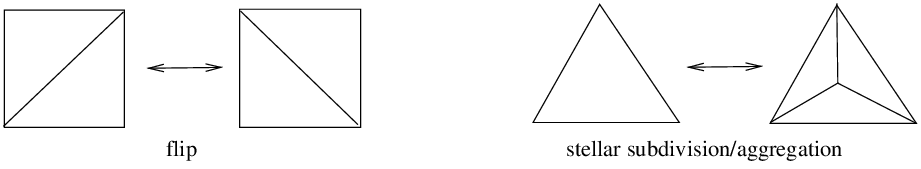}
    \caption{2d Pachner moves.}
    \label{fig: Pachner moves}
\end{figure}

Invariance of partition the function $Z(\Sigma,T)$ w.r.t. flip on $T$ follows from associativity of $m_2$:
\begin{multline}\label{flip invariance strict tqft}
\underbrace{\left\langle g(\bt_1,m_2(\bt_2,m_2(\bt_3,\bt_4)))\otimes Z(\til\Sigma,\til{T}), (g^{-1})^{\otimes 4} \right\rangle}_{Z(\Sigma,T)}= \\
=\underbrace{\left\langle g(\bt_1,m_2(m_2(\bt_2,\bt_3),\bt_4)))\otimes
Z(\til\Sigma,\til{T}),(g^{-1})^{\otimes 4} \right\rangle}_{Z(\Sigma,T_\mr{flipped})}, 
\end{multline}
see Figure \ref{fig: associativity}.
Here $\til\Sigma,\til{T}$ is the complement of the square where the flip is performed, and its triangulation.
\begin{figure}[h]
    \centering
    \includegraphics{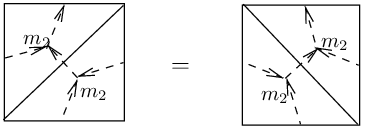}
    \caption{Associativity as a move on ribbon graphs dual to triangulations.}
    \label{fig: associativity}
\end{figure}

To have invariance w.r.t. stellar subdivisions/aggregations, one needs to have
\begin{equation}\label{Pachner 2 rel (strict)}
\tr_V m_2(x,m_2(y,m_3(z,\bt)))=c_3(x,y,z),\quad x,y,z\in V,
\end{equation}
see Figure \ref{fig: Pachner 2 relation}.
\begin{figure}[h]
    \centering
    \includegraphics[scale=0.7]{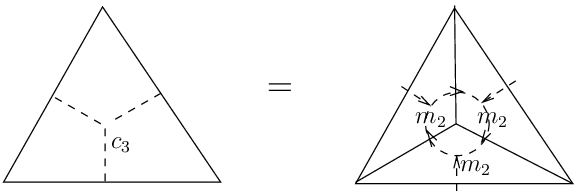}
    \caption{Relation (\ref{Pachner 2 rel (strict)}).}
    \label{fig: Pachner 2 relation}
\end{figure}
This property is not automatic. 
However it is true if the metric is related to $m_2$ by
\begin{equation}\label{g from m2}
\tr_V m_2(x,m_2(y,\bt))=g(x,y), 
\end{equation}
see Figure \ref{fig: g}.
\begin{figure}[h]
    \centering
    \includegraphics[scale=0.8]{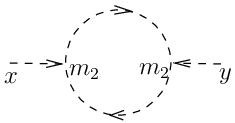}
    \caption{Metric $g$ in terms of $m_2$.}
    \label{fig: g}
\end{figure}

To summarize,  partition functions $Z(\Sigma,T)$ are invariant w.r.t. Pachner flips. Assuming the extra property (\ref{g from m2}) for the algebra $V$, we have that partition functions $Z(\Sigma,T)$ are independent of the triangulation in the bulk.  In this case the model defines a 2d TQFT as a monoidal functor from the category of non-triangulated cobordisms to the category of vector spaces. 

\subsubsection{Space of states for a non-triangulated circle}\label{sss: strict TQFT: space of states for non-triangulated circle}
The space of states for a non-triangulated circle can be constructed as
the space of equivalence classes of tensors in $V^{\otimes k}$ for $k=1,2,\ldots$ where the equivalence relation is generated by partition functions for triangulated cylinders.\footnote{
 Put another way, it is the colimit of the diagram in $\mr{Vect}$ given by all triangulated cylinders.
} Equivalently,
\begin{equation}\label{H=HH}
\HH(S^1)=HH_0(V,V)=V/[V,V]\simeq\mc{Z}
\end{equation}
is the zeroth Hochschild homology of $V$ with coefficients in $V$, or, equivalently, the center of $V$.

\textbf{Explanation of (\ref{H=HH}):} First, note that the algebra $V$ admits an orthogonal decomposition  
(w.r.t. the metric (\ref{g from m2})) into the commutant and the center,\footnote{
Indeed, $\lambda \in [V,V]^\perp \Leftrightarrow g(\lambda,[x,y])=0\; \forall x,y\in V \Leftrightarrow g([\lambda,x],y)=0\; \forall x,y\in V \Leftrightarrow [\lambda,x]=0\; \forall x\in V \Leftrightarrow \lambda\in \mc{Z}$. Thus, $[V,V]^\perp=\mc{Z}$. Note that this argument only uses that $g$ is an invariant nondegenerate metric on $V$, it does not use the particular form (\ref{g from m2}).
} 
\begin{equation}\label{V=[V,V]+Z}
V=[V,V]\oplus\mc{Z}.    
\end{equation} 
Consider 
the simplest triangulated cylinder (Figure \ref{fig: simple cylinder}). 
\begin{figure}[h]
    \centering
    \includegraphics{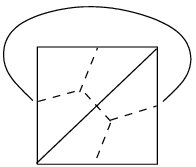}
    \caption{Triangulated cylinder (left and right sides of the square are glued).}
    \label{fig: simple cylinder}
\end{figure}
Treating both boundaries as ``in,'' its value is the pairing
\begin{equation}\label{mu cylinder}
    \begin{array}{cccc}
         \gamma\colon& V\otimes V&\ra &\kk  \\
          & (x,y)&\mapsto & \tr_V m_2(x,m_2(\bt,y))  
    \end{array}
\end{equation}
The pairing (\ref{mu cylinder}) is symmetric, has $[V,V]$ as its kernel and coincides with $g$ when restricted to $\mc{Z}$. Hence, the 
partition function of this cylinder, with one boundary circle seen as ``in'' and the other as ``out,'' is the projection to $\mc{Z}$ in (\ref{V=[V,V]+Z}).
The equivalence relation induced by this  cylinder 
says that two elements $x,y\in V$ are equivalent if their projections to $\mc{Z}=V/[V,V]$ coincide. More generally, the equivalence relation is: 
\begin{equation}\label{equiv rel}
\otimes_{i=1}^k x_i \sim \otimes_{j=1}^l y_j \quad \mbox{if}\quad P(x_1\cdots x_k)=P(y_1\cdots y_l)
\end{equation} 
where $P$ is the projection onto $\mc{Z}$ in the decomposition (\ref{V=[V,V]+Z}) and the products inside are the iterated $m_2$-products. E.g. by considering the triangulated cylinder
$$ \vcenter{\hbox{\includegraphics{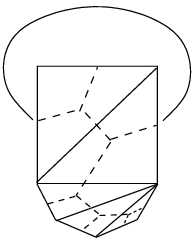}}} $$
we see the equivalence relation (\ref{equiv rel}) for $k$ any and $l=1$; by composing such a cylinders with an upside-down one, we obtain the full relation (\ref{equiv rel}). By triangulation invariance of partition functions relative to the boundary, the triangulation of the bulk of the cylinder does not matter. 

\subsection{A famous example: 2d Dijkgraaf-Witten model}
Fix $G$ a finite group and $V=\CC[G]$ its group algebra (in this subsection we set $\kk=\CC$). 
The corresponding TQFT is known as 2d Dijkgraaf-Witten model \cite{DW}. Its partition functions give the volume of the groupoid of $G$-coverings of $\Sigma$.
In particular, the partition function for a closed orientable surface $\Sigma$ of genus $h$ is
\begin{equation}\label{DW sum over irreps}
Z(\Sigma)=\frac{|\mr{Hom}(\pi_1(\Sigma),G)|}{|G|} = |G|^{2h-2}\sum_{R\in \mr{irrep}(G)}
\dim(R)^{2-2h}.
\end{equation}

\begin{remark}\label{rem: 2dBF} 
A finite group $G$ can be replaced by a compact Lie group, with $V=L^2(G)$  -- the space of square-integrable functions of $G$ equipped with convolution product. This is the 2d Yang-Mills theory at zero coupling constant, a.k.a. 2d $BF$ theory, cf. \cite{Migdal,Witten 2dYM}. Its partition function for a closed surface is the symplectic volume of the moduli space of flat $G$-bundles on $\Sigma$ and is again given by the sum in the r.h.s. of (\ref{DW sum over irreps}) without $|G|^{2h-2}$ prefactor. Note the space of irreducible representations in this case is infinite and the sum over irreducible representations in (\ref{DW sum over irreps}) is divergent for genus $h=0,1$ (but converges for $h\geq 2$).
\end{remark}

\begin{remark}[Area deformation]
One can deform the construction of this section by equipping triangles with areas
(seen as a geometric structure) and by allowing the product $m_2$ on $V$ to depend on the area $A\geq 0$ of a triangle, 
with associativity relation $m_2^{A_1}(m_2^{A_2}(x,y),z)=m_2^{A_3}(x,m_2^{A_4}(y,z))$ for any $x,y,z\in V$ and areas of triangles satisfying $A_1+A_2=A_3+A_4$.
In this setting the partition function (\ref{Z strict}) of a triangulated surface depends on the total area -- the sum of areas of the triangles.\footnote{See \cite{Runkel}.}  

For $V=\CC[G]$ such a deformation corresponds to weighing the summand in the r.h.s. of (\ref{DW sum over irreps}) with the factor $e^{- C(R)\mr{Area}(\Sigma)}$, with $\mr{Area}(\Sigma)$ the total area of the surface, 
and $C(R)$ some constants assigned to irreducible representations $R$ (these constants parametrize the possible area deformations).\footnote{ In this example, the deformed product in the group algebra is defined on generators by $m_2^A(g_1,g_2)=\sum_{g_3\in G}\sum_{R\in\mr{irrep}(G)} e^{-A C(R)}\dim(R)\chi_R(g_1 g_2 g_3^{-1}) g_3$, where $g_1,g_2$ are elements in $G$, $\chi_R$ is the character of the representation $R$ and $A\geq 0$ is the area of the triangle. 
}

In the setting of Remark \ref{rem: 2dBF}, such a deformation with $C(R)$ being the quadratic Casimir of the Lie algebra $\mr{Lie}(G)$ in the representation $R$ yields the 2d Yang-Mills theory (at nonzero coupling constant).
\end{remark}

\section{HTQFTs vs. strict TQFTs}
\subsection{HTQFT setup}
In a functorial QFT we equip a cobordism $\Sigma$ with extra  geometric data such as metric, complex structure, triangulation, etc. 

E.g. in dimension 1 the cobordism is an interval and we can equip it with length. The image of the functor  applied to 
an interval (the partition function of the interval) is $Z(t)\in
\mr{Map}(\RR_+,\mr{End}(\HH))
$,  where $\HH$ is the space of states for a point. 
Functoriality implies 
$${Z(t_1+t_2)=Z(t_1)\circ Z(t_2)}.$$
That implies 
\begin{equation}
Z(t)=\exp(tH)
\end{equation} 
for some linear operator $H\in \mr{End}(\HH)$ -- the Hamiltonian.\footnote{
In high energy physics, the signature of the metric is Lorentzian, $H$ is accompanied by an $i$ factor, and $Z(t)=\exp(itH)$ is known as the evolution operator.
}

In a general functorial QFT, the partition function for a cobordism is 
\begin{equation}
    Z(\Sigma)\in \mr{Map}(\mr{Geom}(\Sigma),\mr{Hom}(\HH_\mr{in},\HH_\mr{out})),
\end{equation}
where $\mr{Geom}$ is the space of possible geometric data on $\Sigma$. It is understood that $\HH_\mr{in,out}$ can depend on the geometric data restricted to the boundary. So, more appropriately, $Z(\Sigma)$ is a section of the bundle over $\mr{Geom}(\Sigma)$ with fiber over $\gamma\in \mr{Geom}(\Sigma)$ being $\mr{Hom}(\HH_\mr{in}(\gamma|_\mr{in}),\HH_\mr{out}(\gamma|_\mr{out}))$.

In HTQFT,\footnote{H stands for ``Homotopical'' or ``Higher'' or ``chain-level version of coHomological QFT.''} the codomain of the functor is the category of cochain complexes (with differential denoted $Q$); $\mr{Geom}$ is replaced by a differential graded manifold $\mr{sGeom}$ (``s'' for ``super'') with differential $q$. We impose an extra condition of total closedness of the functor
\begin{equation}\label{main eq: (q+Q)Z=0}
    (q+Q)Z=0.
\end{equation}

\subsection{Examples}

\begin{example}
    The case of a (strict) TQFT corresponds to $\mr{sGeom}$ being a single point and $Q=0$.
    
    For instance, a one-dimensional TQFT  -- (strict) topological quantum mechanics -- assigns to an interval an element $Z\in \mr{End}(\HH)$ subject to the identity
    \begin{equation}\label{Z^2=Z}
        Z^2=Z,
    \end{equation}
    since the gluing of two intervals is diffeomorphic to a single interval. Note that 
    (\ref{Z^2=Z}) 
    means that $Z$ is a projector.
\end{example}

\begin{remark}
    A typical case for a HTQFT is when $\sGeom$ has the form of an odd tangent bundle, 
    \begin{equation}\label{sGeom=T[1]Geom}
    \sGeom=T[1]\Geom
    \end{equation} 
    of some manifold $\Geom$ of geometric data on the cobordism, with $q=d_\Geom$ the de Rham differential. Then the partition function is a nonhomogeneous differential form on $\Geom$,  
    $$Z(\Sigma)\in \Omega^\bt(\Geom)\otimes \mr{Hom}(\HH_\mr{in},\HH_\mr{out}),$$
    with the equation (\ref{main eq: (q+Q)Z=0}) taking the form
\begin{equation}\label{main eq for sGeom=T[1]Geom}
    (d_\Geom+Q)Z=0.
\end{equation}
Denote $Z^{(p)}$, with $p=0,1,2,\ldots$, the component of $Z$ of de Rham degree $p$ along $\Geom$. Then equation (\ref{main eq for sGeom=T[1]Geom}) restricted to de Rham degrees $0$ and $1$ yields
\begin{equation}
    QZ^{(0)}=0,\quad d_\Geom Z^{(0)}=-QZ^{(1)}.
\end{equation}
I.e., (a) $Z^{(0)}$ gives a $Q$-closed state for any geometric data and (b) changing the geometric data infinitesimally changes $Z^{(0)}$ by a $Q$-exact term. In other words, $Z^{(0)}$ defines an element in the cohomology of $Q$, locally constant along $\Geom$. In particular, if $\Geom$ is connected, $Z^{(0)}$ induces a strict TQFT, with spaces of states given by $Q$-cohomology of the spaces of states of the original HTQFT.

We emphasize that there is more information in an HTQFT than in the strict TQFT induced by $Z^{(0)}$ on $Q$-cohomology. For instance, one can integrate a higher form component $Z^{(p)}$ over some distinguished $p$-cycle in $\Geom$ and then project to $Q$-cohomology of the spaces of states. Examples of this construction are 
\begin{itemize}
    \item  Gromov-Witten invariants (arising from the A model) and 
    \item the BV algebra structure on $Q$-cohomology in a 2d topological conformal field theory (see Section \ref{sss: Efr and BV} below).
\end{itemize}
\end{remark}

\begin{example}[HTQM] 
\label{Ex: HTQM}
In higher topological quantum mechanics, for an interval,\\ ${\mr{sGeom}=T[1]\RR_+}$ with de Rham differential $q=d_t=dt\wedge \frac{d}{dt}$. The space of states for a point is a cochain complex $\HH,Q$
Then the value of the functor on an interval takes the form
$$
Z(t,dt)= \exp\left((d_t+Q)(tG)\right) =\exp(t[Q,G]+dt\, G)
$$
where $G\in \mr{End}(\HH)$ is of degree $-1$ where $G^2=0$. 

In particular, the Hamiltonian 
\begin{equation}
H=[Q,G]=QG+GQ
\end{equation} is $\mr{ad}_Q$-exact. 
Also, note that $G^2=0$ implies $[H,G]=0$.

A strict topological quantum mechanics is the case $Q=G=0$.
\end{example}

\subsection{Example: 2d topological conformal field theory}
\label{ss: TCFT}

    In 2-dimensional topological conformal field theory (TCFT), the space of local quantum fields (which is the same as the space of states for a tiny circle) $\HH=\HH_{S^1}$ is a cochain complex with differential $Q$ and the stress-energy tensor is $Q$-exact:
    $T=Q(G), \bar{T}=Q(\bar{G})$ for some degree $-1$ fields in $\HH$ (and $T_{z\bar{z}}$ vanishes strictly). 
    Here we assume that $\bar\partial G$ and $\partial \bar{G}$ vanish strictly and that $G,\bar{G}$ have regular OPE \cite[pp.6--7]{Witten93}.
    For a surface $\Sigma$ of genus $h$ with $n$ punctures,\footnote{ One thinks of punctures as tiny disks removed from the surface, each creating an $S^1$ boundary component. We treat all these boundary components as in-boundary.} the space of geometric data is $\sGeom=T[1]\til{\mc{M}}_{h,n}$. Here $\til{\mc{M}}_{h,n}$ is the moduli space of complex structures (non-compactified, with ordered punctures),
    equipped with a formal coordinate chart on $\Sigma$ at each puncture, see \cite{Zwiebach}. We will refer to $\til{\mc{M}}_{h,n}$ as \emph{Zwiebach's moduli space}.
    Thus, the partition function 
    \begin{equation}\label{TCFT: Z is a form on M_h,n}
    Z(\Sigma)\in \Omega^\bt(\til{\mc{M}}_{h,n},\mr{Hom}(\HH^{\otimes n},\CC)) 
    \end{equation}
    is a nonhomogeneous differential form on $\til{\mc{M}}_{h,n}$ with values in $\mr{Hom}(\HH^{\otimes n},\CC)$.\footnote{
    For the purpose of this section, since we are talking about differential forms, we have to fix the ground field to be $\kk=\CC$ (or $\RR$).
    } The $p$-form component of $Z$ is defined as a correlator involving $p$ insertions of the field $G^\mr{tot}=G(dz)^2+\bar{G}(d\bar{z})^2 $:
    \begin{equation}\label{TCFT Z}
        Z^{(p)}(\Phi_1\otimes\cdots \otimes \Phi_n) = \Big\langle \underbrace{G^\mr{tot}\cdots G^\mr{tot}}_p \Phi_1(z_1)\cdots \Phi_n(z_n) \Big\rangle_\Sigma.
    \end{equation}
    With $\Phi_1,\ldots,\Phi_n\in \HH$ a collection of fields (observables\footnote{Here ``observables'' does not imply that $\Phi_i$ are $Q$-closed.}) plugged in the punctures $z_1,\ldots,z_n$. More explicitly, if we have $p$ tangent vectors to $\til{\MM}_{h,n}$ at a given complex structure, described by Beltrami differentials $\mu_i+\bar\mu_i$, then the contraction of (\ref{TCFT Z}) with these tangent vectors is
     \begin{equation}
        \iota_{\mu_1+\bar\mu_1}\cdots \iota_{\mu_p+\bar\mu_p} Z^{(p)}(\Phi_1\otimes\cdots \otimes \Phi_n) = \Big\langle \prod_{i=1}^p \int_\Sigma \iota_{\mu_i+\bar\mu_i} G^\mr{tot} 
        \cdot \Phi_1(z_1)\cdots \Phi_n(z_n) \Big\rangle_\Sigma.
    \end{equation}
    Equation (\ref{main eq: (q+Q)Z=0}) in this case has the form 
    \begin{equation}\label{TCFT closedness}
        d_\MM Z^{(p-1)}(\Phi_1\otimes\cdots\otimes\Phi_n)+\sum_{j=1}^n \pm Z^{(p)}(\Phi_1\otimes \cdots \otimes Q(\Phi_j)\otimes \cdots \otimes \Phi_n)=0.
    \end{equation}
    In the left term $d_\MM$ is the de Rham differential on the moduli of complex structures. In that term, the action of $d_\MM$ is tantamount to replacing one of $G^\mr{tot}$ insertions by the stress-energy tensor $T^\mr{tot}$. The equation (\ref{TCFT closedness}) is standard and is proven by a contour-trading trick.

    An important special case: if the field insertions $\Phi_i$ are $Q$-closed, (\ref{TCFT Z}) is a closed form on the moduli space. This restriction induces a reduced version of a TCFT, assigning to a surface a de Rham cohomology class of the moduli space
    \begin{equation}\label{CohFT Z}
        Z_\mr{red}\in H^\bt
        (\til{\MM}_{h,n})\otimes \mr{Hom}(\HH_\mr{red}^{\otimes n},\CC),
    \end{equation}
    where 
    \begin{equation}\label{Q-cohomology}
    \HH_\mr{red}\colon=H_Q(\HH)
    \end{equation} 
    is the $Q$-cohomology of the space of states/local quantum fields. 
    
   In a class of TCFTs (e.g. in the A model) the correlators of some $Q$-closed observables (``evaluation observables'' in the case of the A model, see \cite{FLN}) descend from $\til\MM_{h,n}$ to $\MM_{h,n}$ and then extend to the Deligne-Mumford compactification $\ol\MM_{h,n}$ of the moduli space $\MM_{h,n}$. 
    In this case, in (\ref{CohFT Z}) one can replace $H^\bt(\til{\MM}_{h,n})$ with $H^\bt(\ol\MM_{h,n})$.
    In this case, 
    by virtue of cutting-gluing property (functoriality) of the original TCFT, the reduced partition functions (\ref{CohFT Z}) factorize on nodal surfaces, which results in the famous WDVV equation. This reduction is known as ``cohomological field theory,'' see \cite{KM}, also \cite{LS}. The value of this cohomology class paired with fundamental class of $\ol{\MM}_{h,n}$ is known as the ``$h$-loop string amplitude.''

\subsubsection{TCFT in genus zero: framed little disk operad and the BV algebra structure}
\label{sss: Efr and BV}

Recall (see \cite{Getzler} for details) that the framed operad of little 2-disks $E_2^\fr$ has as its $n$-th space $E_2^\fr(n)$ the configuration space of $n\geq 0$ disjoint disks (or ``holes'') inside the standard unit disk in $\CC$, where all disks are equipped with a marked point on the boundary circle; the outer circle has the standard marked point at $1$. The operadic compositions $\circ_i\colon E_2^\fr(n)\times E_2^\fr(m)\ra E_2^\fr(n+m-1)$ are given by scaling and rotating one disk configuration and replacing with it the $i$-th disk of the second configuration (Figure \ref{fig: E2^fr composition}).
\begin{figure}[h]
    \centering
    \includegraphics[scale=0.6]{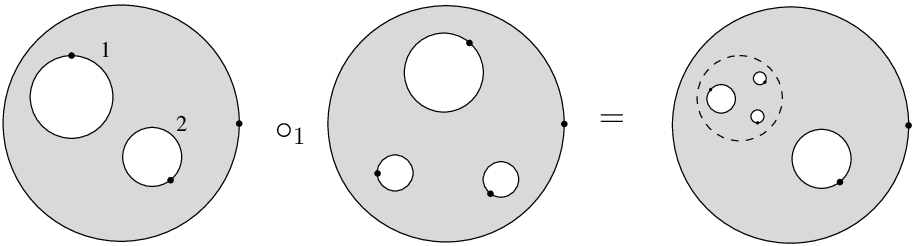}
    \caption{Composition in the framed little disk operad.}
    \label{fig: E2^fr composition}
\end{figure}

Homology of $E_2^\fr$ is generated (via operadic composition) by the following classes (Figure \ref{fig: E2^fr homology}):
\begin{enumerate}[(i)]
    \item 
    Unit\footnote{ By abuse of notations, here we use the same notations and names for generators of homology of $E_2^\fr$ as those commonly used for the BV algebra operations representing these homology classes on $H_Q(\HH)$.
    } $1$ -- the generator of $H_0(E_2^\fr(0))$. 
    \item 
    Product $\cdot \in H_0(E_2^\fr(2))$ (two inner disks ``sitting still'').
    \item 
    BV operator $\Delta \in H_1(E_2^\fr(1))$ -- corresponds to the inner circle making a full rotation. We denote the BV operator $\Delta$ or $G_{0,-}$.
    \item 
    BV bracket (a.k.a. odd Poisson bracket) $\{,\}\in H_1(E_2^\fr(2))$ -- corresponds to one inner disks moving a full circle around the other inner disk.
\end{enumerate}
\begin{figure}[h]
    \centering
    \includegraphics[scale=0.6]{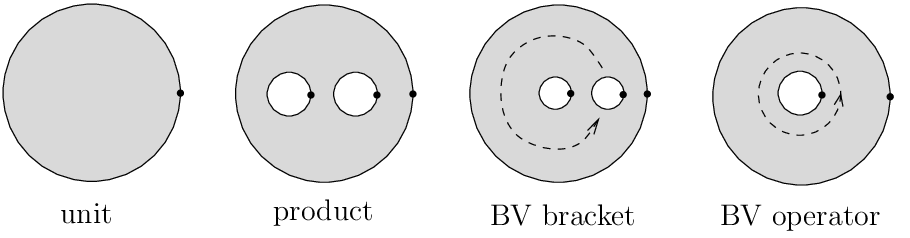}
    \caption{Generators of homology of the framed little disk operad.}
    \label{fig: E2^fr homology}
\end{figure}

These generators satisfy the following relations:
\begin{enumerate}[(a)]
\item The product is supercommutative and associative with $1$ the unit; $\Delta(1)=0$.
    \item BV operator squares to zero: 
    \begin{equation}
    \Delta^2=0.
    \end{equation}
    \item $\Delta$ is a second-order derivation of the product: 
    \begin{multline}\label{7-term relation}
    \Delta(xyz)-(-1)^{|x|}x\Delta(yz)-(-1)^{|y||z|}\Delta(xz)y-\Delta(xy)z+\\
    +(-1)^{|x|+|y|} xy\Delta(z)+(-1)^{|x|}x\Delta(y)z+\Delta(x)yz=0
    \end{multline}
    -- the ``7-term relation.'' 
    \item The BV bracket is the ``defect'' of the first-order Leibniz identity for $\Delta$:
    \begin{equation}\label{Delta Leibniz}
    \Delta(xy)=\Delta(x)y+(-1)^{|x|} x\Delta(y)+(-1)^{|x|}\{x,y\}.
    \end{equation}
    (As a consequence, $\{,\}$ is a biderivation of the product and satisfies Jacobi; $\Delta$ is a derivation of $\{,\}$.)
\end{enumerate}
Homology of $E_2^\fr$ is known as the BV (Batalin-Vilkovisky) operad.

Restricting a TCFT to genus zero cobordisms -- disks with holes, with the outer circle seen as ``out'' and inner circles seen as ``ins''  -- yields a sequence of differential forms on the configuration space of disks,
\begin{equation}\label{TCFT: Z as a form on E_2^fr}
    Z_{[n]}\in \Omega^\bt(E_2^\fr(n))\otimes \mr{Hom}(\HH^{\otimes n},\HH)
\end{equation}
satisfying the equation (\ref{main eq: (q+Q)Z=0}), which in this case becomes
\begin{equation}\label{framed little disks: main eq}
    (d_{E_2^\fr}+Q) Z_{[n]} =0
\end{equation}
with $d_{E_2^\fr}$ the de Rham operator on the configurations space of disks. Equation (\ref{framed little disks: main eq}) together with functoriality of the original field theory implies that the space of states for the circle $\HH$  carries a dg representation of the operad $E_2^\fr$. In other words, $\HH$ is an $E_2^\fr$-algebra. 

As a consequence, the reduced space of states -- the $Q$-cohomology $\HH_\mr{red}=H_Q(\HH)$
-- has the structure of a BV algebra.

\begin{remark}
For a TCFT where the correlators of $Q$-closed observables extend to the Deligne-Mumford compactification of the moduli space $\ol\MM_{h,n}$, the BV algebra on $H_Q(\HH)$ has vanishing BV operator (and hence also the BV bracket, cf. (\ref{Delta Leibniz})). Thus, the BV operator plays the role of an obstruction for extendability of the correlators to the Deligne-Mumford compactified moduli space. See e.g. \cite{LMY1} for an explicit example of a TCFT with nontrivial BV algebra on $Q$-cohomology. 
\end{remark}

\begin{remark}
    To  shed more light on 
    the relation between the TCFT as a form on the moduli space $\til{\MM}_{h,n}$ (\ref{TCFT: Z is a form on M_h,n}) and genus zero TCFT as a form on the framed little disk operad (\ref{TCFT: Z as a form on E_2^fr}), we note the following.  The moduli space $\til\MM_{0,n+1}$ contracts onto $\MM'_{0,n+1}$ -- a $(\CC^*)^{n+1}$-bundle over $\MM_{0,n+1}$ (since the group of formal pointed diffeomorphisms of the tangent space to $\Sigma$ at a puncture contracts onto $\CC^*$), which in turn is homotopy equivalent to $E_2^\fr(n)$.\footnote{
More precisely, one has a homeomorphism $\MM'_{0,n+1}\simeq E_2^\fr(n)/\gamma$ where $\gamma$ is the subgroup of the group of M\"obius transformations $PSL_2(\CC)$ mapping the unit circle to itself and the preserving the point $1$. The subgroup $\gamma$ is conjugate to the subgroup $\{z\mapsto az+b\; |\; a>0, b\in \RR\}$ -- conformal automorphisms of the upper half-plane.
    }
\end{remark}

\subsection{HTQFT on triangulated cobordisms -- a preliminary discussion}
We would like to consider the triangulations of a cobordism $\Sigma$ as vertices of a certain CW complex $\Xi(\Sigma)$  (we will refer to it as the ``Pachner complex''), with edges corresponding to Pachner moves between triangulations. We will discuss higher cells later. In this setting, the partition function is a cellular cochain on $\Xi$ with coefficients in $\HH$, $Z(\Sigma)\in C^\bt(\Xi,\HH)$ that should satisfy the equation
\begin{equation}\label{triang cob main eq}
(\delta_\Xi+Q)Z=0,
\end{equation} 
 with $\delta_\Xi$ the coboundary operator on cochains. Thus,  here the Pachner complex $\Xi$ plays the role of $\sGeom$, and cellular cochains on $\Xi$ play the role of functions on $\sGeom$ (or equivalently differential forms on $\Geom$, cf. (\ref{sGeom=T[1]Geom})).
Equation (\ref{triang cob main eq}) is a cellular analog of equation (\ref{main eq for sGeom=T[1]Geom}).

Denote $Z^{(p)}$ the degree $p$ component of $Z$ as a cochain on $\Xi$. Then equation (\ref{triang cob main eq}) says
\begin{equation}
    \delta_\Xi Z^{(p-1)}+Q Z^{(p)}=0. 
\end{equation}
In particular, for $p=0$ and $p=1$ we have 
\begin{equation}
    Q Z^{(0)}=0,\quad \delta_\Xi (Z^{(0)})=-Q Z^{(1)}.
\end{equation}
I.e., the partition function on a triangulated cobordism is $Q$-closed and changes by $Q$-exact terms under Pachner moves.


\section{
First attempt at 2d HTQFT on triangulated cobordisms: HTQFT on the flip complex from a cyclic $A_\infty$ algebra (``Model 1'')
}
\label{sec: flip theory}

Our strategy now is to construct a 2d HTQFT on triangulated cobordisms by relaxing the construction of Section \ref{sec: combinatorial strict 2d TQFT}: we will require the associativity of the product $m_2$ in $V$ and the relation (\ref{Pachner 2 rel (strict)}) to hold only up to homotopy (i.e. up to a $Q$-exact term $Q(\cdots)$, where the homotopy $\cdots$ will be taken as a part of the structure), ensuring that invariance with respect to Pachner moves holds up to homotopy. In this section we will focus only on Pachner flips and relaxing associativity up to homotopy. 
In Section \ref{sec: full 2d HTQFT on triang cob} we will incorporate the stellar subdivisions/aggregations and invariance up to homotopy with respect to them. 

A part of the discussion in both cases is the appropriate treatment of higher homotopies. In the case of flips, this results in having a homotopy for associativity assigned to squares and higher homotopies assigned to polygons with $\geq 5$ vertices, 
with the overall local structure being that of a cyclic $A_\infty$ algebra (replacing an associative algebra with an inner product, in the setup of Section \ref{sec: combinatorial strict 2d TQFT}). In the case of stellar subdivisions/aggregations, that structure needs to be enhanced with further homotopies associated with polygons with ``floating vertices'' inside; this new  structure comes from the construction of ``secondary polytopes'' by Gelfand-Kapranov-Zelevinsky \cite{GKZ}, see also \cite{KKS}.

\subsection{Local data: cyclic $A_\infty$ algebra}\label{ss: local data: cyclic A_infty algebra}
Let $V$ be a $\mathbb{Z}$-graded vector space over $\kk$ endowed with the structure of an $A_\infty$ algebra\footnote{See e.g. \cite{MSS} and \cite{LV}, see also \cite{Kadeishvili}.}
-- a sequence of multilinear operations $m_n\colon V^{\otimes n}\ra V$ of degree $2-n$, 
with $n=1,2,3\ldots$ Here $Q=m_1$ is a differential\footnote{\label{footnote 16}
By abuse of notations, we also denote by $Q$ the induced (via Leibniz identity) differential on tensors -- elements of $\mr{Hom}(V^{\otimes k},V^{\otimes l})$.
 For instance in (\ref{A_infty rel 2}), $m_2$ is an element of $\Hom(V^{\otimes 2},V)$ and so $Q(m_2)\in \mr{Hom}(V^{\otimes 2},V)$ is defined by $(Q(m_2))(x,y)={Q(m_2(x,y))-m_2(Qx,y)-(-1)^{|x|}m_2(x,Qy)}$. 
} and the operations are understood to satisfy $A_\infty$ relations:
\begin{eqnarray}
    Q^2&=&0 \label{A_infty rel 1}\\
    {Q(m_2)}
    &=& 0  \label{A_infty rel 2}\\
    m_2(m_2(x,y),z)-m_2(x,m_2(y,z)) &=& -Q(m_3)(x,y,z) \label{A_infty rel 3}\\
    &\vdots & \nonumber \\
    {Q(m_n)}&=&\sum_{k+l=n+1,\;k,l\geq 2} m_k\circ m_l
    \label{A_infty rel n}
    \\ &\vdots & \nonumber
\end{eqnarray}
where  $m_k\circ m_l = \sum_{i=1}^k \pm m_k \circ_i m_l$ and $\circ_i$ is the $i$-th composition, putting the output of $m_l$ as $i$-th input of $m_k$.\footnote{
Another description: one can extend a multilinear map $A\in \Hom(V^{\otimes k},V)$ to a coderivation of the tensor coalgebra of $V[1]$,  $\widehat{A}\in \mr{Coder}(\bigoplus_{i\geq 1} (V[1])^{\otimes i})$. Then the composition $A\circ B$ is defined via $\widehat{A\circ B}=\widehat{A}\circ \widehat{B}$, where on the right one has the ordinary composition of coderivations. In the language of coderivations, the $A_\infty$ relations jointly are equivalent to the equation ${(\wh{m}_1+\wh{m}_2+\cdots)^2=0}$.
}
In particular, 
\begin{itemize}
    \item (\ref{A_infty rel 1}) says that $Q$ is a differential,
    \item (\ref{A_infty rel 2}) is Leibniz identity for $m_2$,
    \item (\ref{A_infty rel 3}) is associativity of $m_2$ up to homotopy, with $m_3$ being the homotopy,
    \item subsequent relations (\ref{A_infty rel n}) are coherencies on higher homotopies.
\end{itemize}

We will assume additionally that $V$ is equipped with a nondegenerate symmetric pairing $g\colon V\otimes V\ra \kk$ of degree zero, such that operations
\begin{equation}\label{c_n+1}
\begin{array}{cccc}
    c_{n+1}=g\circ (\mr{id}\otimes m_n)\colon &V^{\otimes (n+1)} &\ra & \kk  \\
     &   x_0\otimes x_1 \ldots \otimes x_{n} & \mapsto & (-1)^{n|x_0|} g(x_0,m_n(x_1,\ldots,x_n))
\end{array}
\end{equation}
are invariant under cyclic permutations of inputs for $n=1,2,\ldots$.\footnote{
With explicit signs: 
$c_{n+1}(x_r,\ldots,x_n,x_0,\ldots,x_{r-1})=(-1)^{\sigma
}c_{n+1}(x_0,\ldots,x_{r-1},x_r ,\ldots,x_n)$ with
$\sigma= nr+(\sum_{i=r}^n |x_i|)(\sum_{j=0}^{r-1}|x_j|)$,
for $r=0,\ldots,n$.
} We will refer to $c_{n+1}$ as ``cyclic operations.'' The data of an $A_\infty$ algebra $(V;m_1,m_2,\ldots)$ together with such pairing $g$ is known as a ``cyclic $A_\infty$ algebra.''\footnote{See e.g. \cite{Tradler}.}

\subsection{
Flip complex
}
Let $\Sigma$ be an oriented surface, possibly with boundary. Let $P$ 
be a finite collection of points (``vertices'') in $\Sigma$, either in the bulk or on the boundary. We assume that each boundary circle contains at least one point of $P$. We consider polygonal decompositions $\alpha$ (cell decompositions with 2-cells being $n$-gons with $n\geq 3$) of $\Sigma$ with vertices at $P$, up to cellular homeomorphism  relative to $P$ and relative to the boundary.\footnote{We do not require the polygonal decompositions to be \emph{regular} CW complexes: edges or vertices of a closure of a cell are allowed to be glued together.}

We consider the ``flip complex'' $\Xi_\flip(\Sigma,P)$ -- a CW complex where:
\begin{itemize}
    \item $0$-cells are associated with triangulations of $\Sigma$ with vertices at $P$.
    \item A polygonal decomposition $\alpha$ of $\Sigma$ with vertices at $P$ whose $2$-cells are $n_i$-gons, is assigned a cell $e_\alpha$ in $\Xi_\flip$ of dimension $\sum_i(n_i-3)$. This cell 
    can be identified with (the interior of) a product of Stasheff's associahedra
    $\prod_i K_{n_i-1}$ (Figure \ref{fig: K2-K5}).
    The boundary of this cell 
    is the union of cells one dimension lower corresponding to subdividing one polygon in $\alpha$ by some diagonal into two polygons (they should both be at least triangles).
\end{itemize}
\begin{figure}[h]
    \centering
    \includegraphics[scale=0.7]{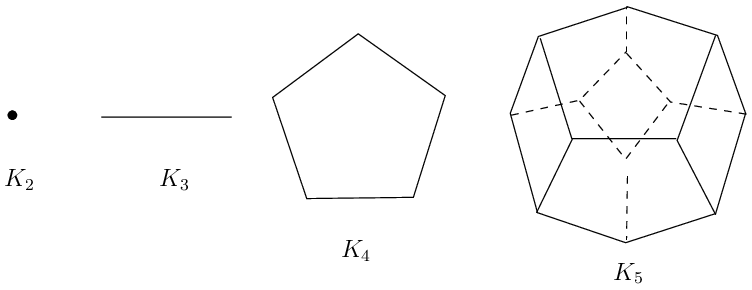}
    \caption{First few associahedra.}
    \label{fig: K2-K5}
\end{figure}

\begin{example}
1-cells of $\Xi_\flip$ correspond to triangulations of $\Sigma$ with one square. The boundary of this 1-cell is the difference of two triangulations obtained by cutting that square by a diagonal in two possible ways, see Figure \ref{fig: 1-cell of Xi_flip}. 
\begin{figure}[h]
    \centering
    \includegraphics[scale=0.7]{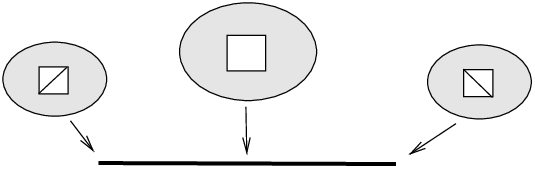}
    \vspace{0.3cm}
    \caption{1-cell $e$ in $\Xi_\flip$. We show the polygonal decompositions of $\Sigma$ decorating the strata (sub-cells of $\bar{e}$). Shaded region stands for the triangulated part of $\Sigma$.}
    \label{fig: 1-cell of Xi_flip}
\end{figure}

2-cells of $\Xi_\flip$ correspond to either (a) triangulations with two squares, or (b) triangulations with one pentagon. Coincidentally, in the case (a) the cell in $\Xi_\flip$ itself looks like a square (its boundary corresponds to splitting one of the two squares in $\alpha$ by a diagonal), and in the case (b) the corresponding cell in $\Xi_\flip$ looks like a pentagon, see Figure \ref{fig: 2-cells in Xi_flip}.
\begin{figure}[h]
    \centering
    $$\vcenter{\hbox{
    \includegraphics[scale=0.65]{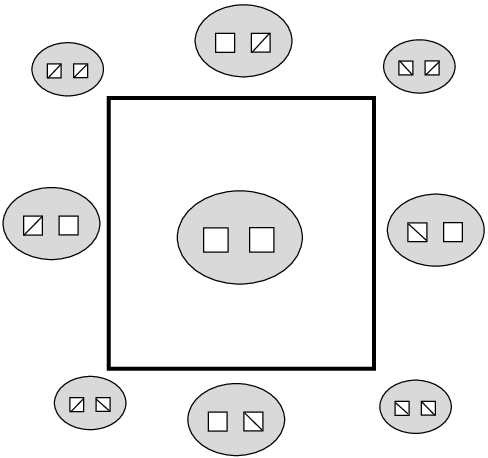} \qquad
    \includegraphics[scale=0.6]{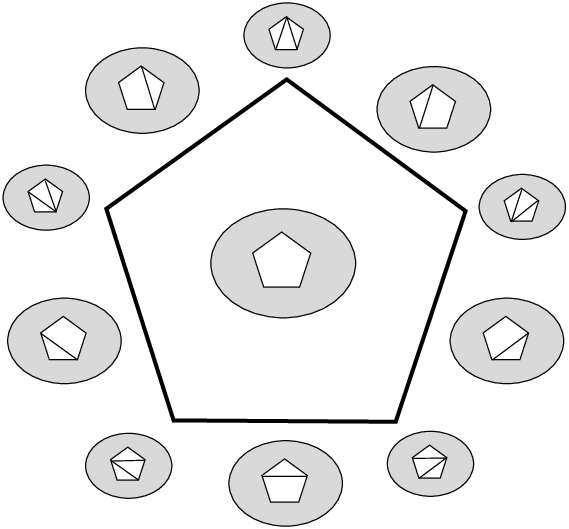}}}$$
    \caption{2-cells in $\Xi_\flip$.
    }
    \label{fig: 2-cells in Xi_flip}
\end{figure}
\end{example}

\begin{example}
    For $\Sigma$ a disk with $P$ an $n$-tuple of points on the boundary, $\Xi_\flip$ is the associahedron $K_{n-1}$.
\end{example}

\begin{example}\label{ex: Xi flip for simple cylinder}
    For $\Sigma$ a cylinder with $P$ consisting of two points -- one on each boundary circle -- $\Xi_\flip$ is homeomorphic to a circle, see Figure \ref{fig: Xi flip for simple cylinder}.
    \begin{figure}[h]
        \centering
        \includegraphics[scale=0.5]{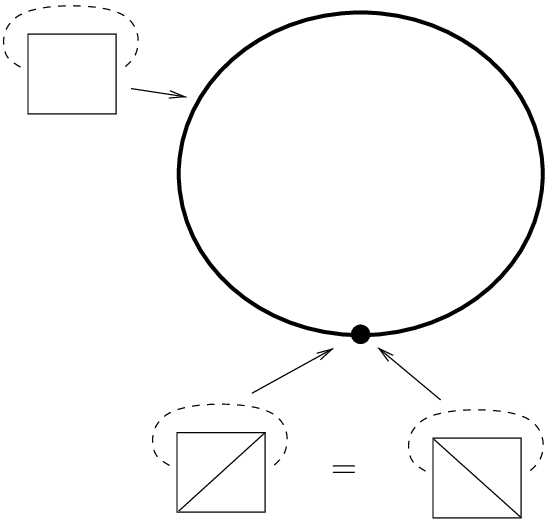}
        \caption{$\Xi_\flip$ for a cylinder with one vertex on each boundary. Dashed line means that the two sides of a square are identified (to obtain the cylinder). Note that 
        the two triangulated squares with identified opposite sides are cellular homeomorphic (related by Dehn twist) and hence correspond to the same vertex of $\Xi_\flip$.
        }
        \label{fig: Xi flip for simple cylinder}
    \end{figure}
\end{example}

\begin{example}
    For $\Sigma$ a disk with $P$ consisting of three points on the boundary and one point in the bulk, $\Xi_\flip$ is itself homeomorphic to a disk, with cell decomposition into three pentagons, see Figure \ref{fig: 3 pentagons}.
    \begin{figure}[h]
        \centering
        \includegraphics[scale=0.7]{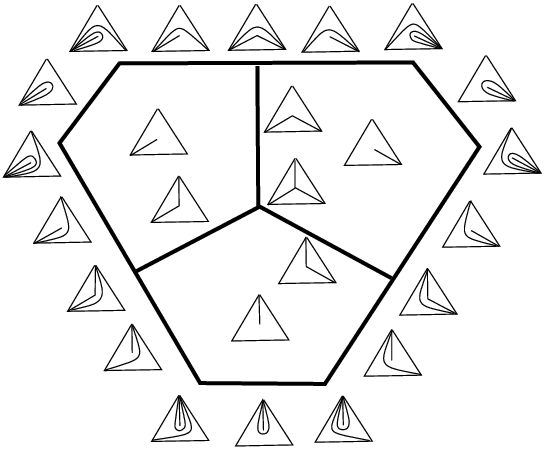}
        \caption{$\Xi_\flip$ for a triangle with an extra vertex in the bulk. Note that ``singular'' polygons (with identified edges) occur abundantly in this example.  E.g. the 2-cells of $\Xi_\flip$ correspond to representations of $\Sigma$ as a pentagon with glued consecutive edges.}
        \label{fig: 3 pentagons}
    \end{figure}
Note that one can also has conventional ribbon graphs dual to the polygonal decompositions of $\Sigma$ with singular polygons, see e.g. Figure \ref{fig: droplet}.
\begin{figure}[h]
    \centering
    \includegraphics[scale=0.4]{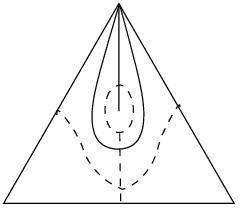}
    \caption{A polygonal decomposition with a singular polygon and the dual ribbon graph (shown in dashed lines).}
    \label{fig: droplet}
\end{figure}
    
\end{example}


\subsubsection{Aside: polygonal decompositions, ribbon graphs and the moduli space $\MM_{h,N}$}
\label{sss: polygonal decompositions, ribbon graphs and the moduli space}


In this section we compare $\Xi_\flip$ with the moduli space $\MM_{h,N}$. Here, crucially,  $N$ is the number of vertices in $P$, \emph{not} the number of holes in the surface $\Sigma$ (which we denote $n$).
On the other hand, the relevant moduli space for comparison with TCFT is $\MM_{h,n}$, see Section \ref{sss: analogy between comb HTQFT and cont TCFT} below. 
I.e., in this section vertices of polygonal decompositions correspond to marked points on a Riemann surface, while in comparison to field theory, polygonal holes become marked points/punctures.

    For a polygonal decomposition $\alpha$ of a surface $\Sigma$ with $P$ the set of vertices, one has the dual ribbon graph $\Gamma$ on $\Sigma$ with $P$ the set of ``borders" or ``faces." If $\Sigma$ has boundary, then $\Gamma$ is a ribbon graph with ``leaves'' (loose half-edges), dual to the boundary edges of $\alpha$.

    If $\Sigma$ is a closed surface of genus $h$ and $P$ is a set of $N$ points, one has a natural bijection $\rho$ between cells $e_\alpha$ of $\Xi_\flip$ and building blocks (``orbi-cells''\footnote{
    $RG^\mr{met}$ is a noncompact orbi-cell complex, and $C_\Gamma$ are its strata -- the noncompact orbi-cells.
    }) 
    \begin{equation}
    C_\Gamma=\RR_+^{E(\Gamma)}/\mr{Aut}(\Gamma)
    \end{equation} 
    of the space of metric ribbon graphs $RG^\mr{met}_{h,N}$, with $\Gamma$ the ribbon graph dual to $\alpha$. The latter is the well-known combinatorial model for the ``decorated'' moduli space of complex structures\footnote{
    We are considering the non-compactified moduli space with ordered marked points.
    } $\MM_{h,N}\times \RR_+^N$ due to Harer-Mumford-Thurston and Penner, see \cite{Harer,Penner,Kontsevich,Mulase}. 
    
    For the rest of this subsection we will assume that $\mr{Aut}(\Gamma)=1$ for all ribbon graphs involved; it is true if $h=0, N\geq 3$ or if $h>0$ and $N$ is large enough.\footnote{ E.g., for genus $h=1$ one needs $N\geq 5$ to kill the automorphisms of ribbon graphs. As an evidence, note that the orbifold Euler characteristic $\chi(\MM_{1,N})=(-1)^N \frac{(N-1)!}{12}$ (cf. \cite{Kontsevich}) is an integer for $N\geq 5$.}
    
    The bijection $\rho$ is compatible with attaching maps but reverses the incidence and reverses
    the dimensions of cells:
    \begin{itemize}
        \item A triangulation $\alpha$ corresponds to a $0$-cell in $\Xi_\flip$. The corresponding ribbon graph $\Gamma$ is trivalent and has 
        \begin{equation}
        E=\underbrace{6h-6+3N}_{=\colon D} = \dim (\MM_{h,N}\times \RR_+^N)
        \end{equation} edges and gives a top-dimension building block of $RG^\mr{met}$.
        \item For $\alpha$ a general polygonal decomposition, the corresponding ribbon graph  contains vertices of valence $\geq 4$ (corresponding to squares, pentagons etc. in $\alpha$) and has $E=D-  \dim e_\alpha$ edges.
        \item The lowest dimension blocks of $RG^\mr{met}$ correspond to graphs $\Gamma$ with a single vertex and $E=2h-1+N$ edges. They correspond to top-dimension cells of $\Xi_\flip$, with $\alpha$ containing a single polygon; the dimension of these cells in $\Xi_\flip$ is $\dim e_\alpha=4h-5+2N$.
    \end{itemize}
    In particular, for $N$ sufficiently large, one has the following.
    \begin{itemize}
    \item $\Xi_\flip(\Sigma,P)$ and $RG^\mr{met}$ are Poincar\'e dual cell complexes (one of them noncompact), see Figure \ref{fig: M_0,3}.
    \begin{figure}[h]
    \centering
    \includegraphics[scale=0.9]{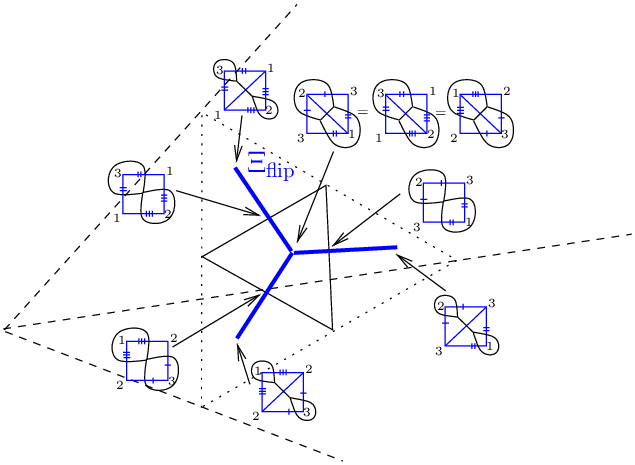}
    \caption{Stratification of $\MM_{0,3}\times \RR_+^3$ colored by ribbon graphs (shown as a stratification of the cross-section of the octant) and the corresponding $\Xi_\flip$ (in blue). Numbers are labels of vertices (resp., borders of ribbon graphs). Gluing of edges in polygonal decomposition is shown by little dashes.}
    \label{fig: M_0,3}
\end{figure}
    \item  One has a canonical isomorphism $\rho_*$ between $p$-chains of $\Xi_\flip$ and $(D-p)$-cochains of $RG^\mr{met}$ for any $p$, acting on generators by 
    \begin{equation}
    \rho_*(e_\alpha)(C_\Gamma)= \left\{ 
    \begin{array}{cc}
         1& \mbox{if $\Gamma$ is dual to $\alpha$,}  \\
         0& \mbox{otherwise} 
    \end{array}
    \right.
    \end{equation}
    \item One has an isomorphism between homology of $\Xi_\flip$ and compactly supported cohomology (and, via Poincar\'e duality, 
    homology)  of the moduli space:
    \begin{equation}
        H_p(\Xi_\flip)\stackrel{\rho_*}{\simeq} H^{D-p}_{c}(\MM_{h,N}) \simeq 
        H_p(\MM_{h,N})
    \end{equation}
    \end{itemize}

    An important consequence is that adding extra vertices in $P$ makes the homotopy type of $\Xi_\flip$ more complicated, in particular increases 
    the rank of first homology.
    E.g., for $h=0$ ($\Sigma$ a sphere), $N=3$ points, $\Xi_\flip$ is contractible, while for $N=4$ it has nontrivial $H_1$:  $H_1(\Xi_\flip)\simeq H_1(\MM_{0,4})\simeq \ZZ^2$.

    \begin{question} 
    \label{question: Xi_flip vs Ribbon graphs. (a) homotopy equivalence} Does cellular Poincar\'e duality above give a homotopy equivalence $\Xi_\flip\sim \MM_{h,N}$?
\end{question}
\begin{question}
    \label{question: Xi_flip vs Ribbon graphs. (b) boundary}
    Does the discussion above extend to surfaces with boundary? I.e., does one have a ribbon graph model for the moduli space of complex structures on a surface with boundary, with marked points in the bulk and on boundary circles, and does one have a comparison of homology of $\Xi_\flip$ with (co)homology of the moduli space going through the appropriate ribbon graph complex?
    \end{question}

\subsection{HTQFT on the flip complex} 

We define our HTQFT (we will refer to it as ``flip theory'') as follows. 
Given a surface $\Sigma$ with set of vertices $P$,
the space of states for $j$-th boundary circle  is
\begin{equation}
    \HH(S^1_j,P\cap{S^1_j})=V^{\otimes k_j}
\end{equation}
where $k_j$ is the number of points of $P$ on $S^1_j$; we understand factors $V$ as associated to the \emph{arcs} into which these points divide the circle.

The partition function
is a cellular cochain on $\Xi_\flip$,
\begin{equation}
    Z\in C^\bt(\Xi_\flip)\otimes \Hom(\HH_\mr{in},\HH_\mr{out})
\end{equation}
defined as follows: its value on the cell $e_\alpha$ of $\Xi_\flip$ corresponding to a polygonal decomposition $\alpha$ of $\Sigma$ with vertices at $P$ is
\begin{equation}\label{Z(e alpha)}
    Z(e_\alpha)= \Big\langle \bigotimes_{\mr{polygons}\,p\,\mr{of}\,\alpha} c_{|p|}, \bigotimes_{\mr{edges}\setminus \{\mr{in-edges}\}} g^{-1}   \Big\rangle.
\end{equation}
Here $|p|$ is the number of edges of the polygon $p$; $c_{\cdots}$ are the cyclic $A_\infty$ operations (\ref{c_n+1}). We will also denote (\ref{Z(e alpha)}) by $Z(\alpha)$. Note that formula (\ref{Z(e alpha)}) is a natural generalization of the strict example (\ref{Z strict}), allowing for polygons that are not triangles and decorated by higher $A_\infty$ operations.

\begin{thm}\label{thm: model 1}
Properties of the construction:
\begin{enumerate}[(i)]
    \item Functoriality\footnote{Cf. Section \ref{ss: functoriality of strict tqft}.} (compatibility with gluing): given two composable cobordisms endowed with polygonal decompositions
    \begin{equation}
        (\gamma_1,\tau_1) \xra{(\Sigma',\alpha')} (\gamma_2,\tau_2) \xra{(\Sigma'',\alpha'')} (\gamma_3,\tau_3)
    \end{equation}
    (with $\gamma_i$ collections of circles and $\tau_i$ the triangulations of boundaries induced by the polygonal decompositions $\alpha',\alpha''$ of surfaces),
    for their composition 
    \begin{equation}
        (\gamma_1,\tau_1) \xra{(\Sigma,\alpha)} (\gamma_3,\tau_3) 
    \end{equation}
    one has 
    \begin{equation}\label{flip HTQFT functoriality}
        Z(\Sigma,\alpha)=Z(\Sigma'',\alpha'')\circ Z(\Sigma',\alpha') \qquad \in 
        \Hom(\HH_1,\HH_3).
    \end{equation}
    \item The partition function defined by (\ref{Z(e alpha)}) satisfies the main equation
    \begin{equation}\label{main eq on Xi flip}
        (\delta+Q)Z=0
    \end{equation}
    with $\delta$ the cellular coboundary operator on $\Xi_\mr{flip}$ and $Q$ acting on the boundary spaces of states.
\end{enumerate}
\end{thm}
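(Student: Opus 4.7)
The plan is to derive both statements directly from the tensor-contraction definition (\ref{Z(e alpha)}) of $Z$, using only (a) nondegeneracy and $Q$-invariance of the cyclic pairing $g$ (so that $g^{-1}\in V\otimes V$ is $Q$-closed) and (b) the cyclic form of the $A_\infty$ relations (\ref{A_infty rel n}). For (i), gluing the cobordisms along the common boundary triangulation $\tau_2$ produces a polygonal decomposition $\alpha$ whose polygon set is the disjoint union of the polygons of $\alpha',\alpha''$ and whose edge set is the union of their edges, with each arc of $\tau_2$ appearing once as an interior edge of $\alpha$. Comparing the tensor contractions (\ref{Z(e alpha)}) for $(\Sigma,\alpha)$ with those for $(\Sigma',\alpha')$ and $(\Sigma'',\alpha'')$, the $g^{-1}$ insertions along the arcs of $\tau_2$ are precisely the tensors that implement composition in $\Hom(\HH_1,\HH_3)$ under the identification $V\cong V^*$ given by $g$. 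This is a direct generalization of the strict-case argument of Section \ref{ss: functoriality of strict tqft} and gives (\ref{flip HTQFT functoriality}).

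For (ii), evaluate $(\delta+Q)Z$ on an arbitrary cell $e_\alpha\cong\prod_i K_{n_i-1}$, where $\alpha$ has polygons $p_1,\ldots,p_k$ of sizes $n_1,\ldots,n_k$. $Q$-invariance of $g$ yields $(Q\otimes 1+1\otimes Q)g^{-1}=0$, so the Leibniz rule reduces $QZ(e_\alpha)$ to a signed sum over $i$ of the contractions obtained by replacing the tensor $c_{n_i}$ with $Q(c_{n_i})$. Composing the $A_\infty$ relation (\ref{A_infty rel n}) with $g$ and invoking the cyclic symmetry of $c_{n+1}$ produces the cyclic coherence
\begin{equation}
Q(c_n)=\sum_{\substack{k+l=n+2\\ k,l\geq 3}}\ \sum_{\text{diagonals of the $n$-gon}}\pm\,\bigl\langle c_k\otimes c_l,\,g^{-1}\bigr\rangle,
\end{equation}
with $g^{-1}$ inserted on the diagonal splitting the $n$-gon into a $k$-gon and an $l$-gon. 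Substituted back, each such term is exactly $Z$ evaluated on the polygonal decomposition obtained from $\alpha$ by subdividing $p_i$ along the chosen diagonal. On the $\delta$ side, the boundary of the product $\prod_i K_{n_i-1}$ is the union of facets of one factor $K_{n_i-1}$ times the rest, and the classical bijection between facets of $K_{n-1}$ and diagonals of the $n$-gon matches these subdivided decompositions bijectively with the terms produced by $QZ(e_\alpha)$. Up to a global sign, the two sums cancel, giving $(\delta+Q)Z(e_\alpha)=0$.

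The principal obstacle is sign bookkeeping: one must reconcile the Koszul signs generated by Leibniz acting on a long tensor contraction, the explicit cyclic-permutation signs for the $c_{n+1}$ recorded in the footnote after (\ref{c_n+1}), the signs in (\ref{A_infty rel n}), and the orientation conventions on $K_{n-1}$ together with the product orientations on its facets $K_k\times K_l$. The cleanest way to organize this is to pass to the tensor-coalgebra (coderivation) form of the $A_\infty$ equations alluded to in the footnote following (\ref{A_infty rel n}), where the relations collapse to $(\widehat{m}_1+\widehat{m}_2+\cdots)^2=0$, and to fix associahedron orientations so that the natural facet inclusions $K_k\times K_l\hookrightarrow\partial K_{n-1}$ are orientation-preserving up to a canonical Koszul sign. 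Modulo these conventions, the computation of the previous paragraph becomes purely combinatorial.
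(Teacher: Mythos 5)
Your proposal is correct and follows essentially the same route as the paper: functoriality is read off directly from the local contraction formula (\ref{Z(e alpha)}), with the $g^{-1}$ on the glued boundary arcs implementing operator composition, and the closedness equation comes from matching the cyclic form of the $A_\infty$ relation (``$Q c_n$ = sum over cuttings of the $n$-gon by a diagonal'') with the facet combinatorics of the cells $e_\alpha\cong\prod_i K_{n_i-1}$ of $\Xi_\flip$. The paper's proof is exactly this argument, stated more tersely and likewise leaving the sign/orientation bookkeeping at the level of conventions.
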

\begin{proof}
    (i) is immediate by construction (\ref{Z(e alpha)}).
 (ii) follows from the $A_\infty$ relation (\ref{A_infty rel n}): $Q c_{n}$ is the sum of contractions of pairs of cyclic $A_\infty$ operations corresponding to ways to cut an $n$-gon by a single diagonal, see Figure \ref{fig:A_infty rel via cutting a polygon}. 
\begin{figure}[h]
    \centering
    \includegraphics[scale=0.6]{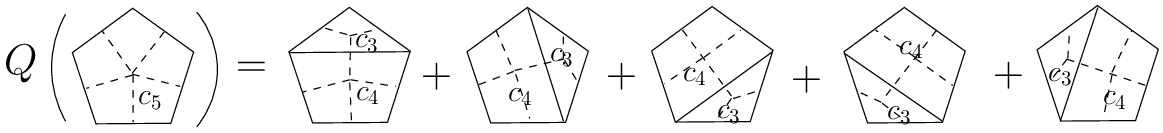}
    \vspace{0.1cm}
    \caption{$A_\infty$ relation as a sum over cuttings of a polygon. Dashed lines depict the dual ribbon graphs.}
    \label{fig:A_infty rel via cutting a polygon}
\end{figure}
For $\alpha$ some polygonal decomposition of $\Sigma$,
we have 
\begin{equation}\label{proof of main eq for Xi flip}
    QZ(\alpha)= -\sum_{\mr{polygons}\;p\;\mr{of}\;\alpha} \;\;\sum_{\mr{cutting}\;p\;\mr{by\;a\;diagonal}} Z(\alpha') = -(\delta Z)(\alpha),
\end{equation}
 where $\alpha'$ is the polygonal decomposition resulting from cutting one polygon $p$ in $\alpha$ by a diagonal. In (\ref{proof of main eq for Xi flip}), the first equality is the $A_\infty$ relation while the second equality is due to the combinatorics of attachment of cells in $\Xi_\flip$.
\end{proof}

\begin{example}\leavevmode 
\begin{itemize}
\item For $\alpha$ a triangulation, $Z(\alpha)$ is $Q$-closed. This is restriction of the equation (\ref{main eq on Xi flip}) to $0$-cochains. Or one can see it directly, from (\ref{A_infty rel 2}) and cyclicity of $c_2$ (or equivalently, from $Q(c_3)=0$ and $Q(g)=0$).
\item Let $\alpha$ be a polygonal decomposition where one polygon $p$ is a square and all others are triangles, and let $\alpha_1,\alpha_2$ be the two triangulations resulting from dividing $p$ by a diagonal (in two possible ways). Then $\alpha_1$ and $\alpha_2$ are related by a Pachner flip performed in the square $p$ and one has
\begin{equation}
    Z(\alpha_1)-Z(\alpha_2)=QZ(\alpha).
\end{equation}
This is the up-to-homotopy version of flip-invariance of the strict TQFT (\ref{flip invariance strict tqft}).
\item For $\cc \in C_\bt(\Xi_\flip)$ any \emph{cycle} on $\Xi_\flip$ (i.e., $\partial\cc=0$), the value of $Z$ on it is $Q$-closed:
\begin{equation}\label{QZ(c)=0}
    QZ(\cc)=0.
\end{equation}
\end{itemize}    
\end{example}

\begin{remark}
    Given two composable cobordisms $\gamma_1\xra{\Sigma'} \gamma_2$, $\gamma_2\xra{\Sigma''}\gamma_3$ each equipped with a set of points $P'$, $P''$, such that $P'\cap \gamma_2=P''\cap \gamma_2$, one has the composed cobordism $\gamma_1\xra{\Sigma}\gamma_3$ equipped with a set of points $P=P'\cup P''$. Gluing of polygonal decompositions on $\Sigma',\Sigma''$ induces a map 
    \begin{equation}\label{sewing as operation on Xi flip}
        \ccirc\colon \Xi_\flip(\Sigma'',P'')\times \Xi_\flip(\Sigma',P')\ra \Xi_\flip(\Sigma,P).
    \end{equation}
    In terms of this operation, functoriality (\ref{flip HTQFT functoriality}) can be stated as
    \begin{equation}
        Z(\zeta''\ccirc \zeta') = Z(\zeta'')\circ Z(\zeta'), 
    \end{equation}
    for $\zeta'\in C_\bt(\Xi_\flip(\Sigma',P'))$, $\zeta''\in C_\bt(\Xi_\flip(\Sigma'',P''))$ any two cellular chains. Here on the left one has the sewing operation (\ref{sewing as operation on Xi flip}) and on the right one has a composition of linear maps between spaces of states.
\end{remark}

\begin{remark}
    The construction of this section admits the following generalization. As the local data, instead of a cyclic $A_\infty$ algebra $V$ one can take a \emph{cyclic $A_\infty$ category} $\mc{V}$. Then vertices of polygonal decompositions are decorated by objects of $\mc{V}$, edges -- by morphisms and polygons -- by higher compositions. The partition function is defined by the by the same formula (\ref{Z(e alpha)}) seen as  a contraction of higher compositions in $\mc{V}$.
\end{remark}


\begin{remark}[Degree of the metric on $V$] 
In the considerations above we were assuming that the metric $g\colon V\otimes V\ra \kk$ has degree zero.
However one can consider 
 flip theory with $V$ a cyclic $A_\infty$ algebra with metric $g$ of nonzero degree $q$. Then the partition function for a triangulated cobordism $(\Sigma,T)$ has degree $q(\#\{\mr{triangles}\}-\#\{\mr{edges}\})$. In particular, for a closed triangulated surface the partition function automatically vanishes for $q\neq 0$. Later, in the context of secondary polytopes, we have to insist that $g$ has degree zero -- otherwise, e.g., the r.h.s. of equation (\ref{Pachner 2 up to homotopy}) (compatibility with stellar subdivision/aggregation) is not a homogenenous expression.
\end{remark}

\begin{remark}
    The construction of flip theory (\ref{Z(e alpha)}) can be thought of as a modification of Kontsevich's state sum model on ribbon graphs \cite{Kontsevich_Feynman}, where (a) the construction is in terms of polygonal decompositions of surfaces rather than ribbon graphs and, more importantly, (b) surfaces with boundary are allowed.
\end{remark}

\subsubsection{The analogy between combinatorial HTQFT and continuum TCFT}\label{sss: analogy between comb HTQFT and cont TCFT}
\begin{itemize}
    \item A triangulated surface corresponds to a ``space'' of a continuum TCFT -- a surface with complex structure.
    \item A triangulated surface $\Sigma,\alpha$  with boundary (polygonal holes) corresponds to a surface with punctures. Given a choice of states $\Phi_j\in \HH(S^1_j,\alpha|_{S^1_j})$ on the boundary polygons, their pairing with the partition function
    \begin{equation}
        \langle Z(\alpha), \bigotimes_{j}\Phi_j \rangle
    \end{equation}
    corresponds to a correlator of point observables $\langle \prod_j \Phi_j(z_j) \rangle$ in continuum theory.
    
    \item We think of squares, pentagons etc. in a polygonal decomposition as ``defects'' in a triangulation, where special observables $c_4,\,c_5,\ldots$ are put. Moreover, we think of $c_4$ as $G^\mr{tot}$ -- the superpartner (or ``$Q$-primitive'') of the stress-energy tensor, cf. Section \ref{ss: TCFT} -- in continuum theory. So, e.g., evaluation of $Z$ on a surface with $n$ holes, with a polygonal decomposition which is a triangulation everywhere except $p$ square 2-cells, paired with states $\Phi_1,\ldots,\Phi_n$ placed in the polygonal holes, corresponds to the $p$-form correlator (\ref{TCFT Z}) in continuum TCFT, see Figure \ref{fig: cTCFT correlator}.  From this viewpoint, $c_3$ corresponds to the ``vacuum vector'' or ``unit field'' in TCFT (its insertion in a puncture erases the puncture). Cyclic $A_\infty$ maps $c_5,c_6,\ldots$ should be thought of as ``higher components'' of the combinatorial field $G^\mr{tot}$, not present present in a usual TCFT.\footnote{
    One can imagine $c_{\geq 5}$ appearing as
    an artifact of discretization.
    Another way these components might appear in continuum theory is if one relaxes the conservation of $G^\tot$ to conservation modulo $Q(\til G)$. In this way $\til{G}$ would correspond to $c_5$ etc.
    }
\end{itemize}
\begin{figure}[h]
    \centering
    \includegraphics[scale=0.7]{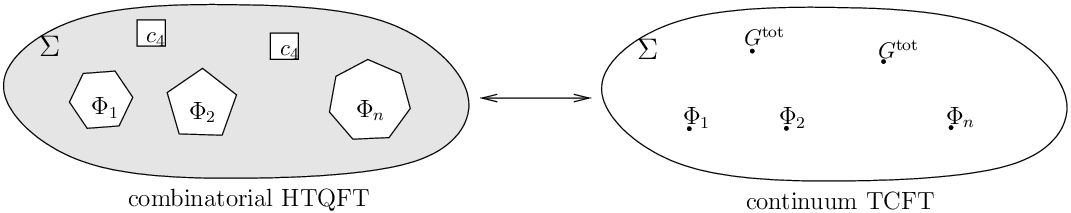}
    \caption{A correlator in combinatorial HTQFT (on the left; shaded region is the triangulated part of the surface) vs. continuuum TCFT (on the right).}
    \label{fig: cTCFT correlator}
\end{figure}

\subsection{
Naive 
BV operator and its problem}\label{ss: simplest BV operator}

Consider the setup of Example \ref{ex: Xi flip for simple cylinder}: $\Sigma$ a cylinder with $P$ consisting of two points, one on each boundary circle. Let $\cc_\Delta$ be the 1-cycle comprised of the only 1-cell of $\Xi_\flip$ (cf. Figure \ref{fig: Xi flip for simple cylinder}). Remark that $\Xi_\flip$ is homotopy equivalent to the component $E_2^\fr(1)$ of the framed little disk operad, cf. Section \ref{sss: Efr and BV}; under this homotopy equivalence the homology class of $\cc_\Delta$ corresponds to the homology class representing the BV operator $\Delta$ that in the case of topological conformal field theory equals $G_{0,-}$. 
In particular, we think of the value of our HTQFT on $\cc_\Delta$ 
as a chain-level representative of the BV operator, $Z(c_\Delta) ``=" 
\Delta$:
\begin{equation}\label{simple BV operator}
    g(x,Z(\cc_\Delta)y)= (-1)^{|x|}\mr{Str}_V\, m_3(x,\bt,y)\qquad \mr{for}\; x,y\in V,
\end{equation}
with $\mr{Str}_V$ the supertrace over $V$.
Note that $Z(\cc_\Delta)$ is $Q$-closed, cf. (\ref{QZ(c)=0}).

The problem of the operator $Z(c_\Delta)$ is that it is not guaranteed to square to zero modulo $Q$-exact terms: 
\begin{equation}
    Z(\cc_\Delta)\circ Z(\cc_\Delta) \neq Q(\cdots),
\end{equation}
while the BV operator coming from the framed little disk operad does square to zero. The reason is that the cycle $\cc_\Delta\ccirc \cc_\Delta\in C_2(\Xi_\flip(\Sigma',P'))$ is \emph{not null-homologous} in $\Xi_\flip$: 
\begin{equation}
    \cc_\Delta\ccirc \cc_\Delta\neq \dd(\cdots),
\end{equation}
see Figures \ref{fig: Xi_flip cyl bulk pt}, \ref{fig: Xi glip cyl bulk pt H_2 generator}.
Here $\ccirc$ means composition of cobordisms equipped with polygonal decompositions (\ref{sewing as operation on Xi flip}); $\Sigma'$ is again a cylinder (thought of as a composition of $\Sigma$ with itself) with $P'$ comprised of three points -- one on each boundary and one in the bulk. More precisely, homology $H_p(\Xi_\flip(\Sigma',P'))$ is concentrated in degrees $p=0,1,2$ and has rank $1$ in these degrees; $\cc_\Delta\ccirc \cc_\Delta$ is the generator of homology in degree $p=2$. 

\begin{figure}[h]
    \centering
    \includegraphics[scale=0.5]{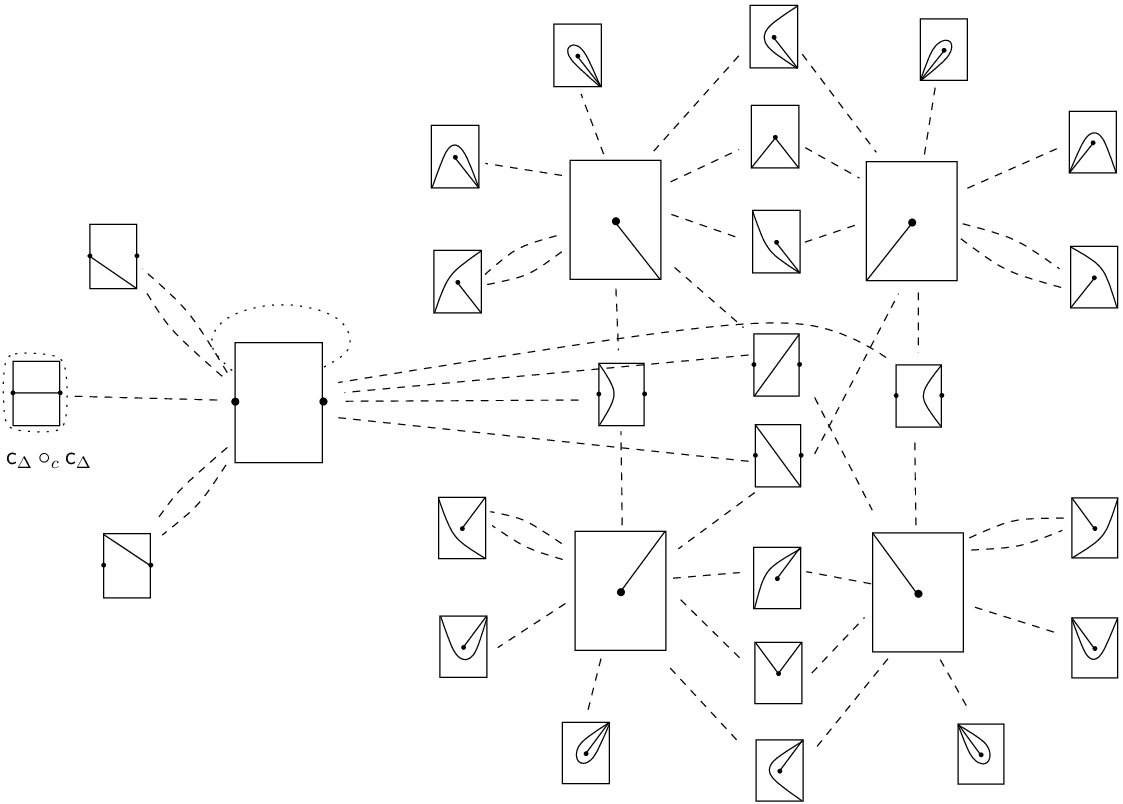}
    \caption{3-cells (larger rectangles) and 2-cells (smaller rectangles) of $\Xi_\flip$ of a cylinder with one point on each boundary and one bulk point. For each rectangle, left and right side are glued. Dashed lines show cell attachment. The leftmost 2-cell is in fact a cycle and is $\cc_\Delta\ccirc\cc_\Delta$.}
    \label{fig: Xi_flip cyl bulk pt}
\end{figure}
\begin{figure}[h]
    \centering
    \includegraphics[scale=0.7]{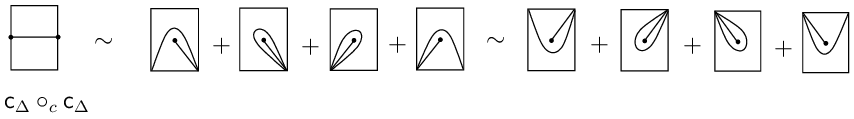}
    \caption{
    Generator of $H_2(\Xi_\flip(\Sigma',P'))$.}
    \label{fig: Xi glip cyl bulk pt H_2 generator}
\end{figure}

The core of the problem here is that taking a composition-square is an endomorphism in $E_2^\fr(1)$ (which has vanishing $H_2$) but in the setting of combinatorial HTQFT is a map $\Xi_\flip(\Sigma,P)\ra \Xi_\flip(\Sigma',P')$ where the right hand side has nontrivial $H_2$. Put another way, the problem is that changing the set $P$ for a cylinder (adding a bulk point) changes the homotopy type of $\Xi_\flip$. 
Conjecturally, this is remedied by transitioning from the flip complex $\Xi_\flip$ to the appropriate ``full'' Pachner complex $\Xi$, see Remark \ref{rem: simple BV operator squares to zero in Xi}.


\section{
Combinatorial BV operators
}
\label{sec: improved BV operator}
In this section we introduce the notion of a BV cycle on $\Xi_\flip$ and the associated notion of a combinatorial BV operator that squares to zero modulo $Q$-exact terms, and show a family of examples. While the definition makes sense for any coefficient field $\kk$, in our examples we fix $\kk=\ZZ_2$ (somewhat mysteriously, for $\kk$ of characteristic $\neq 2$ we obtain signs which invalidate the main equation (\ref{c^2=dd(...)})).

\begin{definition}
Let us define a \emph{BV cycle} as a 1-cycle $\cc_\Delta$ on $\Xi_\flip$ of a cylinder with $P$ consisting of $k\geq 1$ points on each boundary circle and $m\geq 0$ bulk points, such that
    $\cc_\Delta\circ \cc_\Delta$ is null-homologous in $\Xi_\flip$:
\begin{equation}\label{c^2=dd(...)}
    \cc_\Delta\circ \cc_\Delta = \dd (\cdots)
\end{equation}
for some 3-chain $\cdots$.
We will say that a BV cycle $\cc_\Delta$ is \emph{nontrivial} if it is not null-homologous, i.e., represents a nontrivial class in homology $H_1(\Xi_\flip)$.
\end{definition}
Then the value of HTQFT on $\cc_\Delta$, 
\begin{equation}
    Z(\cc_\Delta)\in \Hom_{-1}(\underbrace{\HH(S^1_{(k)})}_{V^{\otimes k}},\underbrace{\HH(S^1_{(k)})}_{V^{\otimes k}}
    )
\end{equation}
 is $Q$-closed but (generally\footnote{For a general $A_\infty$ algebra $V$ and assuming that $\cc_\Delta$ is nontrivial. On the other hand, if $\cc_\Delta$ is null-homologous, then $Z(\cc_\Delta)$ is automatically $Q$-exact.}) not $Q$-exact and satisfies
\begin{equation}\label{Z(c_Delta)^2=Q(...)}
    Z(\cc_\Delta)^2=Q(\cdots).
\end{equation}
Here $S^1_{(k)}$ stands for a circle triangulated into $k$ intervals.
We call $Z(\cc_\Delta)$ a \emph{combinatorial BV operator}.

Note that (\ref{Z(c_Delta)^2=Q(...)}) implies that the operator that $Z(\cc_\Delta)$ induces in $Q$-cohomology squares to zero on the nose.

A non-example is the 1-cycle we considered in Section \ref{ss: simplest BV operator}: it \emph{fails} the condition (ii) above.

\subsection{The simplest BV cycle: $k=2$}
Next, consider the 1-cycle $\cc_\Delta$ in the case $k=2$, $m=0$ (two points on each boundary circle, no bulk points) shown in Figure \ref{fig: BV cycle}.
\begin{figure}[h]
    \centering
    \includegraphics[scale=0.5]{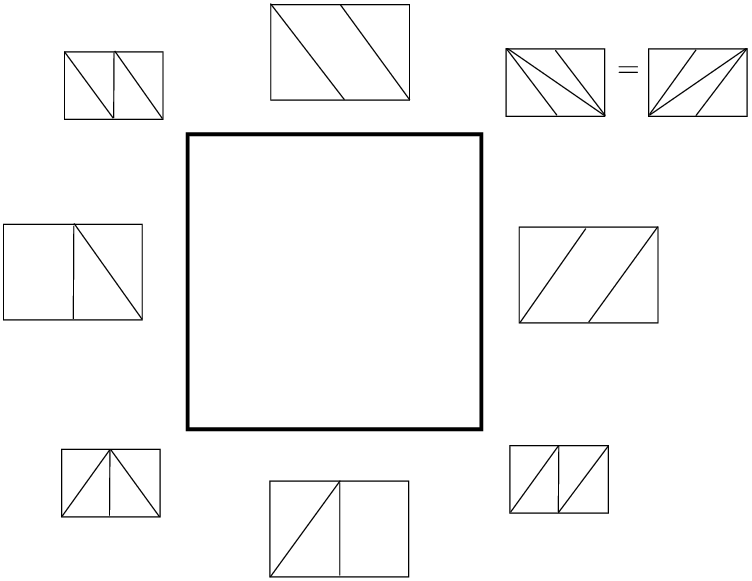}
    \caption{1-cycle in $\Xi_\flip$ of a cylinder with two points on each boundary, shown by thick lines. For each rectangle, left and right side are glued. Large rectangles are 1-cells and small ones are 0-cells.}
    \label{fig: BV cycle}
\end{figure}

\begin{proposition}\label{prop: improved BV cycle}
    The 1-cycle of Figure \ref{fig: BV cycle} is a BV cycle for coefficients $\kk=\ZZ_2$.
\end{proposition}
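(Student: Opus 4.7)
The plan is to verify the two conditions in the definition of a BV cycle for the specific 1-chain displayed in Figure \ref{fig: BV cycle}: (i) closure, $\dd \cc_\Delta = 0$, and, more substantively, (ii) that $\cc_\Delta\circ\cc_\Delta$ is null-homologous in the flip complex of the composed cylinder. Condition (ii) then implies (\ref{Z(c_Delta)^2=Q(...)}) via the functoriality statement of Theorem \ref{thm: model 1}.

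For (i), I would enumerate the 1-cells comprising $\cc_\Delta$ as read off from the figure: each such 1-cell corresponds to a polygonal decomposition of the cylinder containing a single square (all other 2-cells being triangles) and its boundary is the sum mod $2$ of the two triangulations obtained by cutting that square along each of its two diagonals. Verifying closure then reduces to a direct matching check: every triangulation that occurs as an endpoint of one 1-cell in $\cc_\Delta$ occurs as an endpoint of exactly one other 1-cell in $\cc_\Delta$, so all 0-cell contributions to $\dd\cc_\Delta$ cancel mod $2$.

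For (ii), the composed cobordism $\Sigma'$ is again a cylinder, but now $P'$ consists of the two points on each outer boundary together with two bulk points inherited from the gluing seam. By construction of the sewing operation (\ref{sewing as operation on Xi flip}), the 2-chain $\cc_\Delta\circ\cc_\Delta$ is the sum of the products of 1-cells indexed by the pairs (upper square, lower square); it is automatically closed since $\cc_\Delta$ is. The task is then to exhibit a 3-chain in $\Xi_\flip(\Sigma',P')$ whose cellular boundary equals $\cc_\Delta\circ\cc_\Delta$ mod $2$. The candidate 3-chain is assembled out of 3-cells of $\Xi_\flip(\Sigma',P')$, i.e.\ polygonal decompositions of $\Sigma'$ with total excess $\sum_i(n_i-3)=3$, which is either one hexagon, or one pentagon plus one square, or three squares. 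Concretely, I would list the few such 3-cells whose boundary 2-cells meet the support of $\cc_\Delta\circ\cc_\Delta$ and find a $\ZZ_2$-linear combination of them realising the required boundary.

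The main obstacle is this last bookkeeping step: each 3-cell contributes many 2-cells to its boundary (one for every way to cut a constituent $n$-gon by a diagonal, with signs coming from the cellular structure of the associated product of associahedra), and one must arrange that every \emph{spurious} 2-cell appears an even number of times, while each 2-cell of $\cc_\Delta\circ\cc_\Delta$ appears an odd number of times. The restriction to $\kk=\ZZ_2$ is essential precisely here: over a field of characteristic $\neq 2$, the orientation signs on the associahedra prevent these cancellations (as flagged in the statement of the proposition), whereas reducing mod $2$ discards the signs and turns the verification into a finite parity computation.
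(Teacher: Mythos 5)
Your setup is correct and is in fact the same strategy as the paper's proof (exhibit a cobounding 3-chain in $\Xi_\flip(\Sigma',P')$ and check its boundary mod $2$), but the proposal stops exactly where the mathematical content of the proposition lies. The step ``list the few such 3-cells whose boundary 2-cells meet the support of $\cc_\Delta\ccirc\cc_\Delta$ and find a $\ZZ_2$-linear combination of them realising the required boundary'' is a promissory note for precisely the statement to be proved: that such a combination exists \emph{is} the assertion that $\cc_\Delta\ccirc\cc_\Delta$ is null-homologous. Nothing in your argument guarantees that the parity bookkeeping can be arranged (note also that a cobounding chain need not a priori be supported only on 3-cells whose boundaries meet the support of the 2-cycle), and no candidate chain is produced or verified, so as written there is a genuine gap rather than a complete proof.

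What closes the gap in the paper is an explicit chain $B$ (Figure \ref{fig: B}), which in the notation later introduced for the long BV cycle is $B=[SS]\ccirc\cc_\Delta$: one stacks the single 2-cell of $\Xi_\flip$ of a cylinder given by two squares on top of each of the four 1-cells of $\cc_\Delta$, obtaining four 3-cells of your ``three squares'' type. The boundary check is then structured rather than a blind search: with $\mc{A}=[SL]+[RS]$ one has
\begin{equation}
\dd\,[SS]=\mc{A}+T_1\mc{A}T_{-1},\qquad \cc_\Delta=\mc{A}+T_1\mc{A},
\end{equation}
and since composing with $T_{\pm 1}$ on the in-boundary only relabels and $T_{-1}\cc_\Delta=\cc_\Delta$, one gets over $\ZZ_2$
\begin{equation}
\dd B=(\dd\,[SS])\ccirc\cc_\Delta=(\mc{A}+T_1\mc{A})\ccirc\cc_\Delta=\cc_\Delta\ccirc\cc_\Delta,
\end{equation}
which is the $k=2$ case of the computation carried out in Proposition \ref{prop: c^k} (see also Proposition \ref{prop: BV_infty}, where $D=[SS]$ and $B_2=D\ccirc\cc_\Delta$). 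Your observations about closure of $\cc_\Delta$ and about why $\ZZ_2$ coefficients are essential are fine; to make the proof complete you must exhibit such a chain and perform (or structurally reduce) the boundary computation, rather than assert that the finite search will succeed.
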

\begin{proof}
    Consider the following 3-chain $B$ in $\Xi_\flip$ shown in Figure \ref{fig: B}.
    \begin{figure}[h]
        \centering
        \includegraphics[scale=0.7]{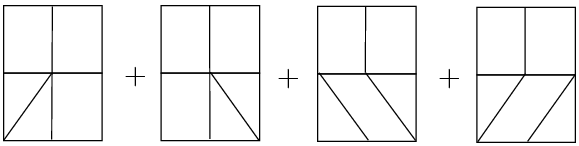}
        \caption{3-chain $B$ in $\Xi_\flip$.}
        \label{fig: B}
    \end{figure}
    By inspection, 
    the boundary of $B$ in chains on $\Xi_\flip$ with $\ZZ_2$ coefficients is exactly the 2-cycle $c_\Delta\ccirc c_\Delta$.
\end{proof}

The combinatorial BV operator corresponding to the BV cycle of Figure \ref{fig: BV cycle} is:
\begin{multline}
    Z(\cc_\Delta)\colon t_i\otimes t_j \mapsto \Big(
    (c_3)_{kca} (c_3)_{iba} (c_4)_{jclb}+ (c_4)_{iakc} (c_3)_{jba} (c_3)_{lbc}+\\
    +(c_3)_{iac}(c_4)_{jbka} (c_3)_{lbc}+ (c_3)_{kca} (c_4)_{ibla} (c_3)_{jcb}
    \Big) t_k\otimes t_l,
\end{multline}
where $\{t_i\}$ is some orthonormal basis\footnote{
Note that an orthonormal basis in $V$ will necessarily mix degree $k$ and $-k$ components of $V$, so it cannot be homogeneous w.r.t. $\ZZ$-grading. However it can be chosen to be homogeneous w.r.t. the induced $\ZZ_2$-grading.
} in $V$ and $(c_n)_{i_1\cdots i_n}$ are the structure constants of the cyclic $A_\infty$ operations $c_n$ in this basis, see Figure \ref{fig: comb BV operator}.\footnote{
For comparison, the operator (\ref{simple BV operator}) in these notations is simply $t_i\mapsto (c_4)_{iaja} t_j$.
}
\begin{figure}[h]
    \centering
    \includegraphics[scale=0.7]{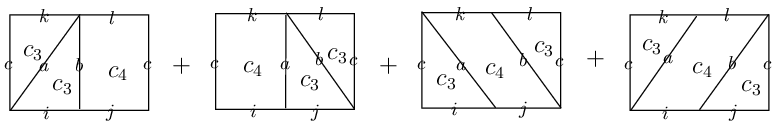}
    \caption{Combinatorial BV operator.}
    \label{fig: comb BV operator}
\end{figure}

\subsection{Generalization -- a ``long'' BV cycle}
One has a variant $\cc_\Delta^k$ of the BV cycle above with an even number $k=2p$, with $p=1,2,\ldots$, of points on each boundary circle (and no bulk points). 

For instance, for $k=4$ it is shown in Figure \ref{fig: c^4}. 
\begin{figure}[h]
    \centering
    \includegraphics[scale=0.5]{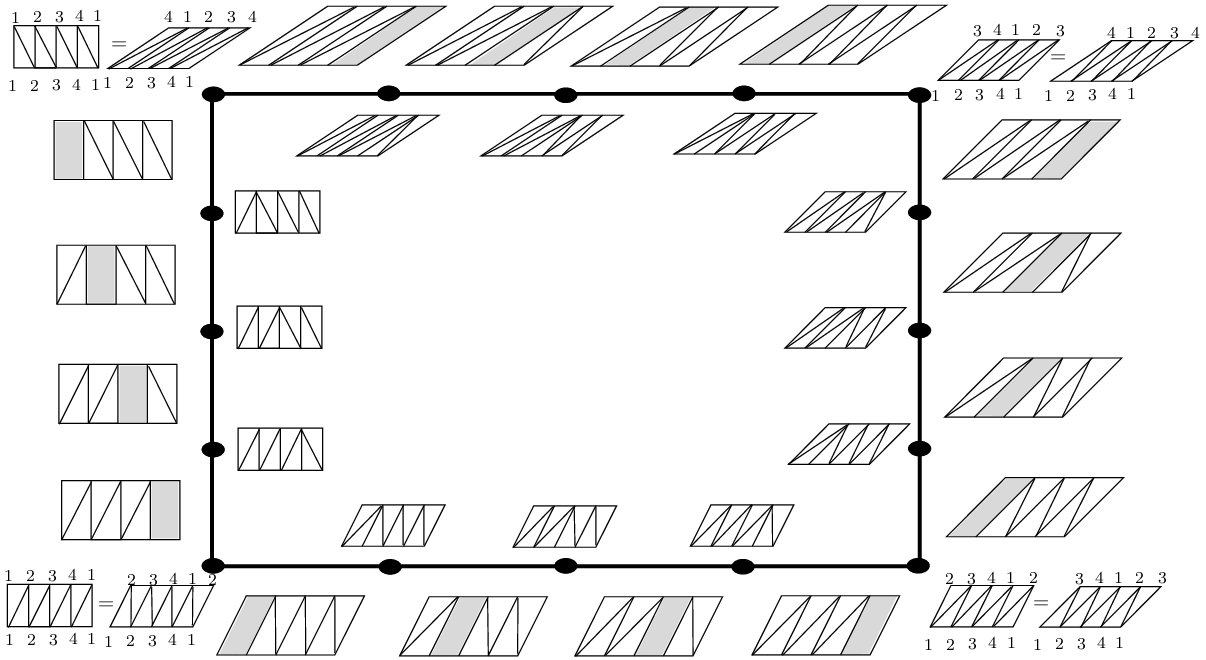}
    \caption{``Long'' BV cycle $\cc_\Delta^k$ for $k=4$. Numbers $1,2,3,4$ are the labels of boundary vertices on in/out boundary circles (vertically aligned in the picture). Square cell on the cylinder is shaded to make the picture more readable.}
    \label{fig: c^4}
\end{figure}

Generally, the cycle $\cc_\Delta^k$ consists of $k^2$ edges of $\Xi_\flip$ and has the form
\begin{equation} \label{c_Delta^n}
    \cc_\Delta^k=\sum_{i=0}^{k-1}\sum_{j=1}^k T_i\ccirc [R^{j-1} S L^{k-j}]=
    \sum_{i=0}^{k-1}\sum_{j=1}^k 
    \raisebox{-0.6cm}{
    \includegraphics[scale=0.8]{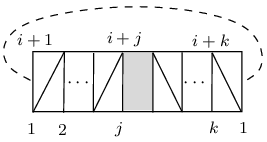}
    }
\end{equation}
where we introduced the following notations:
\begin{itemize}
    \item $S$ is a square (shaded in the picture above), $R$ is a square subdivided into two triangles by  a SW--NE (``right'') diagonal, $L$ is a square subdivided by a SE--NW (``left'') diagonal.
    \item $[R^a S L^b]$ is a strip consisting (going left-to-right) of $a$ right-triangulated squares, a non-triangulated square, and $b$ left-triangulated squares. Leftmost and rightmost edges of the strip are identified.
    \item $T_i$ is the operation of cyclically relabeling the vertices of a triangulated circle, shifting the labels by $i \bmod k$. One can think of $T_i$ as a ``degenerate'' morphism in the category of cobordisms with polygonal decompositions, which justifies the notation $T_i\ccirc [\cdots]$, meaning the cobordism $[\cdots]$ with the labels of vertices on the out-boundary (top, in our pictures) shifted by $i\bmod k$. 
    Likewise (we will need it below),
    $[\cdots]\ccirc T_{-i}$ stands for shifting the labels of vertices on the  in-boundary of the cobordism $[\cdots]$ by $i\bmod k$.
\end{itemize}
To lighten notations, in the remainder of this section we will be mostly suppressing $\ccirc$ in computations.

Note that for $k=2$, the cycle (\ref{c_Delta^n}) becomes the cycle of Fig. \ref{fig: BV cycle}.

\begin{proposition}
\label{prop: c^k}
    Formula (\ref{c_Delta^n}) indeed defines a BV cycle with coefficients in $\kk=\ZZ_2$.
\end{proposition}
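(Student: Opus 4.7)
The plan is to verify the two defining properties of a BV cycle in turn: first that $\cc_\Delta^k$ is a $1$-cycle on $\Xi_\flip$, and then that its composition-square $\cc_\Delta^k\ccirc\cc_\Delta^k$ bounds a $3$-chain with $\ZZ_2$ coefficients on the doubled cylinder.

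For the cycle condition I would proceed by direct telescoping. The only non-triangular polygon in $[R^{j-1}SL^{k-j}]$ is the square $S$ at position $j$, so its $\ZZ_2$-boundary in $\Xi_\flip$ is
\begin{equation*}
\partial [R^{j-1}SL^{k-j}] = [R^{j}L^{k-j}] + [R^{j-1}L^{k-j+1}],
\end{equation*}
and summing over $j=1,\ldots,k$ the intermediate triangulations cancel in pairs, leaving only $[R^k]+[L^k]$. It remains to show that after summing over $i\in \ZZ/k$, the contributions of $[R^k]$ and $[L^k]$ cancel. Tracking diagonals, $T_i\ccirc[R^k]$ realises the set of (bottom,top) label pairs $\{(m,m+i+1)\bmod k\}_{m=1}^{k}$, whereas $T_{i+2}\ccirc [L^k]$ realises the same set, so these two $0$-cells of $\Xi_\flip$ coincide. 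Reindexing $i\mapsto i-2$ in $\sum_i T_i\ccirc [R^k]$ identifies it with $\sum_i T_i\ccirc [L^k]$, and the two sums cancel mod $2$.

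For the BV condition I would construct a cobounding $3$-chain $B$ by a local-insertion recipe that extends the $k=2$ argument of Proposition \ref{prop: improved BV cycle}. Each $2$-cell of $\cc_\Delta^k\ccirc\cc_\Delta^k$ has the form $T_i\ccirc[R^{j-1}SL^{k-j}]\ccirc T_{i'}\ccirc[R^{j'-1}SL^{k-j'}]$ on a doubled cylinder of length $2k$, featuring two squares at positions determined by $(j,j')$. Near that pair of squares I would insert the same local pentagon-plus-three-square $3$-chain that appears in Figure \ref{fig: B}, leaving the remaining arms of the strip triangulated with the right/left diagonals dictated by $(i,i')$; then $B$ is defined as the sum of these local insertions over all $(i,j,i',j')$.

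The bulk of the work is verifying $\partial B = \cc_\Delta^k\ccirc \cc_\Delta^k$ in $\ZZ_2$. Each of the $3$-cells making up $B$ contributes the desired $2$-cell of $\cc_\Delta^k\ccirc\cc_\Delta^k$ as one face of its boundary but also contributes ``spurious'' boundary faces along its triangulated arms, which come in two types (fully right-triangulated versus fully left-triangulated on either side of the pentagon). The main obstacle is showing these spurious faces cancel pairwise; this mimics the cycle check above, using the identity $T_i\ccirc[R^k]=T_{i+2}\ccirc[L^k]$ (and its analogues for partially triangulated strips) to pair them off after reindexing in $i$ and $i'$ modulo $k$. An alternative packaging is an induction on $k$ that reduces each square-square interaction to the already-established $k=2$ construction; this is where I would spend most of the technical effort.
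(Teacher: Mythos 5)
You have the right telescoping for the cycle condition, and that half survives, but the identification you base it on is false: as labelled triangulations of the cylinder one has $[R^k]=T_1\ccirc[L^k]$ (this is the relation the paper uses), not $T_i\ccirc[R^k]=T_{i+2}\ccirc[L^k]$. Your check tracks only the $k$ diagonals; the relabelling also moves the other $k$ bottom--top edges (the former vertical sides of the squares), which in $T_{i+2}\ccirc[L^k]$ join bottom $m$ to top $m+i+2$ while in $T_i\ccirc[R^k]$ they join bottom $m$ to top $m+i$, so for $k\geq 3$ these are different $0$-cells (they coincide only for $k=2$, where a shift by two is trivial). The cycle conclusion is unaffected, since after summing over $i\in\ZZ_k$ any fixed shift gives the cancellation of $\sum_i T_i\ccirc([R^k]+[L^k])$ mod $2$; but the identity itself is wrong, and you propose to lean on it again later.

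The genuine gap is the second half. Your cobounding chain is only announced, not constructed: you say you would insert the $k=2$ chain of Proposition \ref{prop: improved BV cycle} near the two squares of each $2$-cell of $\cc_\Delta^k\ccirc\cc_\Delta^k$ and then ``show the spurious faces cancel pairwise,'' which is precisely the content of the proposition, deferred, with the false shift-by-two identity (plus unspecified ``analogues for partially triangulated strips'') as the intended tool; the alternative ``induction on $k$'' is likewise not carried out. Moreover, nothing in your scheme uses that $k$ is even, while evenness is essential to the argument that is known to work: the paper takes $B=D\ccirc\cc_\Delta^k$ with $D=\sum_{l=0}^{p-1}\sum_{j=2}^{k}T_{2l}\ccirc[SR^{j-2}SL^{k-j}]$ (both extra squares in a \emph{single} cylinder), computes $\dd D=\sum_{l}T_{2l}(\mc{A}+T_1\mc{A}T_{-1})$ with $\mc{A}=\sum_j[R^{j-1}SL^{k-j}]$, and then $\dd B=\sum_{l,i}(T_{2l}+T_{2l+1})\mc{A}T_i\mc{A}=\cc_\Delta^k\ccirc\cc_\Delta^k$ precisely because $\{2l,2l+1\}_{l=0}^{p-1}$ exhausts $\ZZ_k$ when $k=2p$. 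For odd $k$ the authors explicitly do not know whether $\cc_\Delta^k\ccirc\cc_\Delta^k$ bounds in $\Xi_\flip$ (cf.\ Remark \ref{rem: simple BV operator squares to zero in Xi} and the remark on odd $k$ after the proposition), so a parity-blind local-insertion recipe would prove strictly more than is known --- a strong indication that the deferred cancellation does not go through as sketched. To repair the proof, either reproduce the global computation above or write down your $B$ explicitly and verify its boundary, identifying where the parity of $k$ enters.
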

\begin{proof}
The fact that $\cc_\Delta^k$ is a cycle is checked as follows. Denote
\begin{equation}
    \mc{A}\colon=\sum_{j=1}^k [R^{j-1}SL^{k-j}].
\end{equation}
Note that we have (a) $\dd \mc{A}=[R^k]-[L^k]$ and (b) $[R^k]=T_1 [L^k]$. This implies
\begin{equation}
    \dd \cc_\Delta^k= \dd \left(\sum_{i\in \ZZ_k} T_i
    \mc{A}\right)= \sum_{i\in \ZZ_k} T_i
    (T_1 
    [L^k]-[L^k]) = \sum_{i\in \ZZ_k} T_{i+1} 
    [L^k]-T_i
    [L^k] =0.
\end{equation}

Intuitively, $\mc{A}$ corresponds in $E_2^\fr(1)$ to rotation of the in-circle\footnote{
Rather, in $\cc_\Delta^k$ the out- (top) circle is rotated, but it is equivalent to the opposite rotation of the in-circle.
} by $2\pi/k$ (seen as a 1-chain on the space $E_2^\fr(1)$) and $\cc_\Delta^k$ corresponds to the full rotation of the in-circle, which is a 1-cycle.

The square of $\cc_\Delta^k$ is null-homologous -- it is the boundary of the 3-cycle 
\begin{equation}
B=D\ccirc 
\cc_\Delta^k
\end{equation}
with
\begin{equation}\label{D}
    D=\sum_{l=0}^{p-1}\sum_{j=2}^{k} T_{2l}
    [S R^{j-2} S L^{k-j}]
    = \sum_{l=0}^{p-1}\sum_{j=2}^{k} 
    \raisebox{-0.6cm}{
    \includegraphics[scale=0.7]{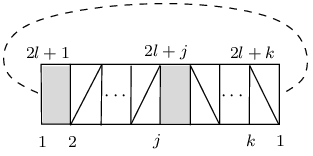}
    }
\end{equation}
Indeed, one has
\begin{multline}\label{dd D computation}
    \dd D = \sum_{l=0}^{p-1}  T_{2l} 
    \Big([SL^{k-1}]+[SR^{k-1}]+\sum_{j=2}^k ([R^{j-1}S L^{k-j}]+[LR^{j-2}SL^{k-j}])\Big) \\
    =\sum_{l=0}^{p-1} T_{2l} 
    (\sum_{j=1}^k [R^{j-1}SL^{k-j}]
    +[SR^{k-1}]+\sum_{j=2}^k [LR^{j-2}S L^{k-j}])\\
    =\sum_{l=0}^{p-1} T_{2l} 
    (\mc{A}+
    T_1
    \mc{A} 
    T_{-1}) .
\end{multline}

We have
\begin{multline}
    \dd B = \dd (D
    \cc^k_\Delta) = (\dd D)
    \cc^k_\Delta=
    \sum_{l=0}^{p-1}\sum_{i=0}^{k-1} T_{2l}
    (\mc{A}+T_1
    \mc{A} 
    T_{-1})
    T_{i}
    \mc{A}\\=
    \sum_{l=0}^{p-1}\sum_{i=0}^{k-1} (T_{2l}+T_{2l+1}) 
    \mc{A} 
    T_{i}
    \mc{A}
    = \sum_{i',i=0}^{k-1} T_{i'}
    \mc{A} 
    T_{i}
    \mc{A}=
    \cc_\Delta^k \ccirc 
    \cc_\Delta^k
\end{multline}
as claimed.
\end{proof}

\begin{remark}
Note that the construction of the 1-cycle (\ref{c_Delta^n}) makes sense for even or odd $k$ (and yields a cycle). However, in our construction of the cobounding 3-chain for $\cc_\Delta^k\ccirc \cc_\Delta^k$ we do use that $k$ is even.
\end{remark}

\begin{remark}
    Repeating the computation for $\kk$ a field of characteristic $\neq 2$, we see that $\cc_\Delta^k$ is still a cycle, but our proof that it squares to a boundary breaks down due to signs: one obtains $\dd D= \sum_{l=0}^{p-1}T_{2l}(\mc{A}-T_1\mc{A}T_{-1})$ and hence 
    \begin{equation}
        \dd (D\cc_\Delta^k) = \sum_{i',i=0}^{k-1}\boxed{(-1)^{i'}}T_{i'}\mc{A}T_i \mc{A} \neq \cc_\Delta^k\ccirc \cc_\Delta^k.
    \end{equation}
    It is a part of Conjecture \ref{conjecture: Pachner complex} below that $(\cc_\Delta^k)^2$ is null-homologous (not just modulo 2 and not just for even $k$) in the appropriate ``full'' Pachner complex $\Xi$.
\end{remark}

\begin{conjecture}\label{conjecture: c_Delta^k is nontrivial}
    The BV cycle $\cc_\Delta^k$ defined by (\ref{c_Delta^n}) is nontrivial.
\end{conjecture}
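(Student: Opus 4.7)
The plan is to argue by contrapositive using the closedness equation $(\delta+Q)Z=0$ from Theorem \ref{thm: model 1}. If $\cc_\Delta^k$ were null-homologous in $\Xi_\flip$, say $\cc_\Delta^k = \partial \beta$ for some 2-chain $\beta$, then for any cyclic $A_\infty$ algebra $V$ one would have
\begin{equation*}
    Z(\cc_\Delta^k) \;=\; Z(\partial \beta) \;=\; (\delta Z)(\beta) \;=\; -\,Q\bigl(Z(\beta)\bigr),
\end{equation*}
so $Z(\cc_\Delta^k)$ would be $Q$-exact in $\Hom_{-1}(V^{\otimes k}, V^{\otimes k})$. Thus it suffices to exhibit a single cyclic $A_\infty$ algebra over $\ZZ_2$ for which $Z(\cc_\Delta^k)$ represents a nontrivial class in $Q$-cohomology of the Hom-complex.

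For the simplest case $k=2$, I would test this directly on a small cyclic $A_\infty$ algebra with a genuinely nontrivial $m_3$, since any strictly associative choice has $c_4=0$ and hence zero combinatorial BV operator. Natural candidates are the minimal cyclic $A_\infty$ model of the cochain algebra of a closed surface (with $m_3$ encoding Massey products and $g$ given by Poincar\'e duality), or the cohomology of the torus with its transferred $A_\infty$ structure, worked over $\ZZ_2$ to match the coefficient constraint inherited from Proposition \ref{prop: c^k}. Using the explicit formula for $Z(\cc_\Delta)$ displayed just before the conjecture, one computes the induced operator on $V^{\otimes 2}$ and evaluates it on a carefully chosen $Q$-closed test state, pairing the result against a cohomology class dual to a known Massey product to detect non-$Q$-exactness.

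For general $k$, I would leverage the rotational structure of (\ref{c_Delta^n}): the partial-rotation chain $\mc{A}=\sum_{j=1}^k[R^{j-1}SL^{k-j}]$ and its cyclic shifts $T_i\mc{A}$ make $Z(\cc_\Delta^k)$ a sum of $k^2$ tensor contractions that organize into a rotation average. A cleaner reformulation is to pass to the ``reduced'' HTQFT on $Q$-cohomology quotiented by the commutator relations of Section \ref{sss: strict TQFT: space of states for non-triangulated circle}, where one expects $[Z(\cc_\Delta^k)]$ to realize a chain-level representative of the BV operator $\Delta=G_{0,-}$ on the Hochschild-style space of states. Comparing with the framed little disks interpretation in the proof of Proposition \ref{prop: c^k} (where $\mc{A}$ is a $2\pi/k$ rotation and $\cc_\Delta^k$ is a full rotation), and invoking any concrete TCFT with nonzero $G_{0,-}$ on $Q$-cohomology (e.g.\ in the spirit of \cite{LMY1}), yields nontriviality: if $[Z(\cc_\Delta^k)]$ is nonzero in cohomology in one example, then $\cc_\Delta^k$ cannot bound.

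The main obstacle is producing a single explicit cyclic $A_\infty$ algebra over $\ZZ_2$ with enough structure to see nontriviality, and rigorously comparing the resulting combinatorial operator with the geometric BV operator on the reduced space of states. Characteristic $2$ excludes many convenient examples (such as de Rham models over $\RR$), so the construction probably has to go through either group algebras of 2-groups with deformed $A_\infty$ structures or a $\ZZ_2$-reduction of a Morse-theoretic $A_\infty$ model. A secondary technical difficulty is that for $k\geq 3$ the flip complex $\Xi_\flip$ becomes combinatorially involved, so one may prefer an indirect approach via a map $\Xi_\flip \to E_2^\fr(1)$ (conditional on an affirmative answer to Question \ref{question: Xi_flip vs Ribbon graphs. (b) boundary}) sending $\cc_\Delta^k$ to the generator of $H_1(E_2^\fr(1))\simeq \ZZ$, rather than through an explicit $A_\infty$-algebraic computation.
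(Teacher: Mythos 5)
The statement you are addressing is a \emph{conjecture} in the paper: the authors give no proof, only the suggestion that one could establish it by exhibiting an explicit (preferably minimal, $Q=0$) cyclic $A_\infty$ algebra for which $Z(\cc_\Delta^k)$ is not $Q$-exact. Your opening reduction is correct and is exactly the paper's intended strategy: if $\cc_\Delta^k=\dd\beta$ then $Z(\cc_\Delta^k)=Z(\dd\beta)=\pm Q(Z(\beta))$ by the closedness equation, so a single algebra with $[Z(\cc_\Delta^k)]\neq 0$ in $Q$-cohomology would prove nontriviality. But what follows does not close the gap, and its central concrete proposal is ruled out by the paper itself: the remark immediately after the conjecture shows that for $V$ the cyclic $A_\infty$ algebra of Massey operations on $H^\bt(M,\ZZ_2)$ of a compact manifold (your ``minimal model of a closed surface'' and ``cohomology of the torus'' candidates are precisely of this type), $Z(\cc_\Delta^k)$ vanishes identically -- by a degree argument every term contains a $c_4$ factor with the unit $1$ as an input, and such terms are zero in a strictly unital minimal model. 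So no amount of ``carefully chosen test states'' or pairing against Massey-product duals can detect anything in those examples; one must look elsewhere (the authors point to Fukaya-type cyclic $A_\infty$ algebras).

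Your fallback arguments are likewise not proofs. Passing to the reduced space of states and identifying $[Z(\cc_\Delta^k)]$ with the geometric BV operator $G_{0,-}$, or invoking a continuum TCFT with nonzero BV operator, presupposes a comparison between $\Xi_\flip$ for the cylinder and $E_2^\fr(1)$ (and between the combinatorial and continuum theories) that is itself open in the paper: it is exactly the content of Question \ref{question: Xi_flip vs Ribbon graphs. (b) boundary}, the wish-list items of Section \ref{sec: improved BV operator}, and parts of Conjecture \ref{conjecture: Pachner complex}. Note also that nontriviality of the homology class $[\cc_\Delta^k]\in H_1(\Xi_\flip)$ would not by itself follow from a map to $E_2^\fr(1)$ unless you show the map sends the class to a nonzero one, which is again the open point. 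In short: the statement remains a conjecture, the paper contains no proof to compare against, and your proposal, while starting from the right reduction, relies on examples that provably give zero and on comparisons that are themselves conjectural.
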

The intuition, why $\cc_\Delta^k$ corresponds to the standard generator of $H_1(E_2^\fr(1))$, was spelled out in the proof above. A possible way to prove the Conjecture would be to show that $Z(\cc_\Delta^k)$ is nonzero for some explicit minimal cyclic $A_\infty$ algebra $V$ (``minimal'' means with $Q=0$).

\begin{remark}
    For $V$ a cyclic $A_\infty$ algebra of Massey operations
    on cohomology $H^\bt(M,\ZZ_2)$ of a compact manifold $M$, 
    $Z(\cc_\Delta^k)$ is zero. Indeed, by a degree argument, in each term there will be a $c_4$ factor with $1$ as one of the inputs, hence the full expression vanishes. A similar vanishing statement holds for the algebra $H^\bt(M,\ZZ_2)$ extended by its coadjoint $A_\infty$ module.

    In particular, for a proof of the Conjecture above, one needs to look for some other examples of cyclic $A_\infty$ algebras -- e.g. those coming from Fukaya $A_\infty$ categories.
\end{remark}


\begin{remark}[Dressing the BV cycle by triangulated cobordisms]\leavevmode
    Assume that $a,b$ are two triangulated cylinders with $k$ vertices on each boundary (or more generally $0$-cycles on $\Xi_\flip$) and assume additionally that the composition $b\ccirc a$ is ``rotation-invariant,'' i.e.,
    \begin{equation}\label{rotational invariance of ba}
        T_{-1} ba T_{1}=ba.
    \end{equation}
    Then
    \begin{equation}
    c\colon =a\ccirc \cc_\Delta^k \ccirc b
    \end{equation} 
    is also a BV cycle (for $\kk=\ZZ_2$). The fact that it is a cycle is obvious: $\dd c=a(\dd c_\Delta^k)b=0$. For its square, similar to the computation above, we have
    \begin{multline}\label{acb squares to zero}
        \dd (aDba \cc_\Delta^k b)= \sum_{l\in \ZZ_p} aT_{2l} \mc{A} ba \sum_{i'\in \ZZ_k} T_{i'}\mc{A} b +
        \sum_{l\in \ZZ_p} aT_{2l+1} \mc{A} \underbrace{T_{-1} ba T_1}_{ba} \sum_{i'\in \ZZ_k} T_{i'}\mc{A} b \\=
        \underbrace{a\sum_{i\in \ZZ_k} T_i \mc{A} b}_c\underbrace{a \sum_{i'\in \ZZ_k}T_{i'}\mc{A} b}_c=c\ccirc c.
    \end{multline}
    Thus, $c\ccirc c$ is null-homologous.

\end{remark}

\subsection{A combinatorial $\mr{BV}_\infty$ operator that squares to zero exactly
(or: combinatorial replacement of the operator $Q+u G_{0,-}$)} \leavevmode \label{ss: BV_infty}
\\
\textbf{Notation:} Let $\mr{cyl}_{k\xra{m} k}$ stand for a cylinder with the set $P$ consisting of $k$ vertices on each boundary circle and $m$ bulk vertices. In particular, we think of $\mr{cyl}_{k\xra{m} k}$ as a cobordism $S^1_{(k)}\ra S^1_{(k)}$. In the case $m=0$, we will just write $\mr{cyl}_{k\ra k}$. Note that composing $s$ copies of  $\mr{cyl}_{k\ra k}$ 
yields
\begin{equation}
    \mr{cyl}_{k\ra k}^{\ccirc s}\colon=
     \underbrace{\mr{cyl}_{k\ra k}\ccirc \cdots \ccirc  \mr{cyl}_{k\ra k}}_s
     =
    \mr{cyl}_{k\xra{k(s-1)} k}.
\end{equation}

The properties of a combinatorial BV operator 
\begin{equation}\label{c_Delta identities}
QZ(\cc_\Delta)=0, \quad Z(\cc_\Delta)^2=Q Z(B)
\end{equation}
for some 3-chain $B$ can be summarized by saying that the operator 
\begin{equation}\label{Q+ uZ(c)-u^2 Z(B)}
    Q+u Z(\cc_\Delta) - u^2 Z(B)
\end{equation}
squares to zero modulo $u^3$. Here and in the remainder of this section, $u$ is a formal parameter of degree $+2$.

This leads us to the following refinement of the notion of a combinatorial BV operator.
\begin{definition}\label{def: BV_infty}
    We define a combinatorial $\mr{BV}_\infty$ operator as a degree $+1$ element
    \begin{equation}
        \Delta_\infty \in \mr{End}(\HH(S^1_{(k)})   
        ) [[u]]
    \end{equation}
    of the form
    \begin{equation}\label{Delta_infty}
        \Delta_\infty = Q+ u Z(B_1 
        )+u^2 Z(B_2)+ u^3 Z(B_3)+\cdots
    \end{equation}
    with $B_s \in C_{2s-1}(\Xi_\flip(\mr{cyl}_{k\ra k}^{\ccirc s}))$, with $s=1,2,\ldots$, satisfying the equation\footnote{Note that the degree of $s$-th term in (\ref{Delta_infty}) is indeed $2s+(1-2s)=1$. }
    \begin{equation}\label{Delta_infty^2=0}
        (\Delta_\infty)^2=0.
    \end{equation}
\end{definition}
Given a $\Delta_\infty$ operator, $\cc_\Delta\colon=B_1$ automatically satisfies the identities (\ref{c_Delta identities}), with $B=-B_2$.

\begin{proposition}\label{prop: BV_infty}
For $k=2$ one has a $\mr{BV}_\infty$ operator (\ref{Delta_infty}) with coefficients in $\kk=\ZZ_2$ with 
    \begin{equation}\label{B_s}
        B_s=   D^{\ccirc (s-1)}\ccirc \cc_\Delta,\quad s=1,2,\ldots,
    \end{equation}
    with $\cc_\Delta=[SL]+[RS]+T_1[SL]+T_1[RS]$ the BV cycle of Figure \ref{fig: BV cycle} and 
    $D=[SS]$ --  the 2-cycle (\ref{D}) for $k=2$.
    
    In particular, the $\BV_\infty$ operator can be written as
    \begin{multline}
        \Delta_\infty=Q+\frac{u}{1+u Z(D)}Z(\cc_\Delta)=\\=
         Q+u Z\left( 
        \vcenter{\hbox{
\includegraphics[scale=0.5]{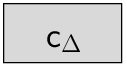}
        }}
        \right)+
        u^2 Z\left( 
        \vcenter{\hbox{
\includegraphics[scale=0.5]{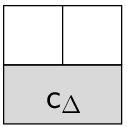}
        }}
        \right)+
        u^3 Z\left( 
        \vcenter{\hbox{
\includegraphics[scale=0.5]{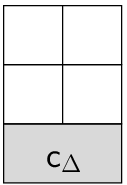}
        }}
        \right)+\cdots
    \end{multline}
    Pictures in the formula above are drawn on a cylinder: left and right sides are always identified.
\end{proposition}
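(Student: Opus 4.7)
The plan is to verify the defining equation $(\Delta_\infty)^2 = 0$ order-by-order in the formal parameter $u$. The coefficient of $u^0$ is $Q^2$, which vanishes by the $A_\infty$ axioms. For the coefficient of $u^s$ with $s \geq 1$, I would combine two structural inputs: (a) the functoriality identity $Z(B_a) \circ Z(B_b) = Z(B_a \ccirc B_b)$ from Theorem \ref{thm: model 1}(i), and (b) the main equation (\ref{main eq on Xi flip}), which over $\kk = \ZZ_2$ reads $Q\,Z(c) + Z(c)\,Q = Z(\partial c)$ for any chain $c$. Writing $\mathcal{B}(u) = \sum_{s \geq 1} u^s B_s$, these together give $\Delta_\infty^2 = Z\bigl(\partial \mathcal{B} + \mathcal{B} \ccirc \mathcal{B}\bigr)$, so it suffices to prove the chain-level identity
\begin{equation}\label{plan: key chain identity}
\partial B_s = \sum_{\substack{a+b=s\\ a,b\geq 1}} B_a \ccirc B_b \qquad \text{in } C_{2s-2}(\Xi_\flip(\mr{cyl}_{2\to 2}^{\ccirc s}); \ZZ_2), \quad s\geq 2.
\end{equation}

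Next I would expand both sides of (\ref{plan: key chain identity}) using $B_s = D^{\ccirc(s-1)} \ccirc c_\Delta$. By the Leibniz rule for $\partial$ with respect to $\ccirc$, and using $\partial c_\Delta = 0$ (Proposition \ref{prop: c^k}), the left-hand side becomes
\begin{equation*}
\partial B_s = \sum_{i=0}^{s-2} D^{\ccirc i} \ccirc (\partial D) \ccirc D^{\ccirc(s-2-i)} \ccirc c_\Delta,
\end{equation*}
while reindexing the right-hand side via $i = a - 1$ yields
\begin{equation*}
\sum_{\substack{a+b=s\\ a,b\geq 1}} B_a \ccirc B_b = \sum_{i=0}^{s-2} D^{\ccirc i} \ccirc c_\Delta \ccirc D^{\ccirc(s-2-i)} \ccirc c_\Delta.
\end{equation*}
Matching term-by-term, the full identity reduces to the single cell-level statement $\partial D = c_\Delta$ in $C_1(\Xi_\flip(\mr{cyl}_{2\to 2}); \ZZ_2)$.

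Finally I would verify this last identity by direct enumeration of faces of the 2-cell $D = [SS]$: cutting either of the two $S$-squares by a left or a right diagonal yields the four codimension-one strata $[RS], [LS], [SR], [SL]$, so $\partial D = [RS] + [LS] + [SR] + [SL]$ over $\ZZ_2$. The target $c_\Delta = [SL] + [RS] + T_1[SL] + T_1[RS]$ differs from this by the pair $\{[LS], [SR]\}$ versus $\{T_1[SL], T_1[RS]\}$, so the proof is completed by identifying $T_1[SL] = [LS]$ and $T_1[RS] = [SR]$ as cells in $\Xi_\flip(\mr{cyl}_{2\to 2})$, using that the half-rotation relabeling $T_1$ on the out-boundary swaps which of the two strips is the ``first'' one and exchanges the orientation of the diagonal accordingly.

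I expect the main obstacle to be this last cell-identification step, where one must correctly match labeled polygonal decompositions on the cylinder under the $T_1$-relabeling of a single boundary circle. Over $\ZZ_2$ the identification is a bijection of underlying cells, but carrying out the same cancellation over a field of characteristic different from $2$ introduces a sign that obstructs the cobounding -- this is exactly the phenomenon already flagged in the remark following Proposition \ref{prop: c^k}, and explains why the present proposition is formulated over $\ZZ_2$.
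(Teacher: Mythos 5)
Your overall strategy coincides with the paper's (expand $\Delta_\infty^2$ in powers of $u$, use functoriality and the main equation to reduce to the chain identity $\dd B_s=\sum_{a+b=s}B_a\ccirc B_b$, then apply Leibniz with $\dd \cc_\Delta=0$), but the final step contains a genuine error. The cell identifications $T_1[SL]=[LS]$ and $T_1[RS]=[SR]$ are false, and consequently $\dd D\neq \cc_\Delta$ as chains. The operation $T_1$ relabels \emph{only} the out-boundary vertices, so it changes which marked points the interior edges join: in $T_1[SL]$ the two ``vertical'' edges join bottom vertex $1$ to top vertex $2$ and bottom $2$ to top $1$, and the diagonal joins bottom $1$ to top $1$; whereas in $[LS]$ and $[SR]$ the verticals join $1$--$1$ and $2$--$2$ and the diagonal joins $2$--$1$. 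Since cells are decompositions up to homeomorphism fixing the boundary and the marked points, this endpoint data is a cell invariant, so $T_1[SL]$ is neither $[LS]$ nor $[SR]$ (same for $T_1[RS]$). The correct statement is the two-sided one, $T_1[SL]\ccirc T_{-1}=[LS]$ and $T_1[RS]\ccirc T_{-1}=[SR]$, i.e. $\dd D=\mc{A}+T_1\ccirc\mc{A}\ccirc T_{-1}$ with $\mc{A}=[SL]+[RS]$, which differs from $\cc_\Delta=\mc{A}+T_1\ccirc\mc{A}$ by the trailing $T_{-1}$.

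The gap is repairable, and the repair is exactly the paper's argument: do not try to prove $\dd D=\cc_\Delta$ in isolation, but use that in your Leibniz expansion $\dd D$ always appears with a right factor $X=D^{\ccirc(s-2-i)}\ccirc\cc_\Delta$. Since $T_{\pm 1}$ commutes with $D$ and $T_{\pm1}\ccirc\cc_\Delta=\cc_\Delta$ (the cycle $\cc_\Delta=\sum_{i\in\ZZ_2}T_i\ccirc\mc{A}$ is invariant under post-composition with rotations), the trailing $T_{-1}$ can be pushed through $X$ and absorbed, giving $(\dd D)\ccirc X=(\mc{A}+T_1\ccirc\mc{A})\ccirc X=\cc_\Delta\ccirc X$ term by term in $i$, which yields $\dd B_s=\sum_{s'=1}^{s-1}B_{s'}\ccirc B_{s-s'}$ and completes the proof. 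Your closing remark about signs in characteristic $\neq 2$ is consistent with the paper's remark, but as written your verification rests on a false identification of cells and so does not stand on its own.
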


\begin{proof}
    We only have to check that (\ref{Delta_infty^2=0}) holds. From (\ref{Delta_infty}) we have
    \begin{multline}\label{Delta_infty^2 computation}
        \Delta_\infty^2=
        \sum_{s\geq 1} u^s \left( QZ(B_s)+\sum_{1\leq s'< s} Z(B_{s'}) Z(B_{s-s'})\right)=\\=
        \sum_{s\geq 1} u^s Z\left(\dd B_s + \sum_{1\leq s'<s} B_{s'}B_{s-s'} \right).
    \end{multline}
    Note that from (\ref{dd D computation}) we have
    \begin{equation}
        \dd D=\mc{A}+T_1\mc{A}T_{-1}.
    \end{equation}
    From (\ref{B_s}), using the fact that $T_i$ commutes with $D$ and that $T_i \cc_\Delta=\cc_\Delta$, we get
    \begin{multline}
        \dd B_s = \sum_{s'=1}^{s-1}  D^{s'-1} (\dd D) D^{s-1-s'} \cc_\Delta
       =\sum_{s'=1}^{s-1}  D^{s'-1}  \underbrace{(\mc{A}+T_1\mc{A})}_{\cc_\Delta} D^{s-1-s'}\cc_\Delta\\
       =\sum_{s'=1}^{s-1}  D^{s'-1} \cc_\Delta  D^{s-1-s'} \cc_\Delta 
       =  \sum_{s'=1}^{s-1}B_{s'}B_{s-s'}.
    \end{multline}
    Thus, all terms in the r.h.s. of (\ref{Delta_infty^2 computation}) vanish.
\end{proof}

\begin{remark}
    In continuum TCFT one has an important operator 
    \begin{equation}\label{Q equivariant}
    Q_{S^1}\colon=Q+u G_{0,-}
    \end{equation} -- the $S^1$-equivariant BRST operator acting on local fields. 
    The $\BV_\infty$ operators discussed in this section can be thought of as a combinatorial replacement of $Q_{S^1}$.  
    
    Note that the operator (\ref{Q equivariant}) squares to $u L_{0,-}$ whereas $\Delta_\infty$ squares to zero. This may be interpreted as corresponding to the combinatorial HTQFT having only fields of zero spin (thus, $L_{0,-}$ vanishes).
\end{remark}

\subsubsection{Dual language: $\mr{BV}_\infty$ chains as Maurer-Cartan elements in the dga $\mr{CYL}_k$}

Fix $k=1,2,\ldots$ and
consider the differential graded algebra
\begin{equation}
    \mr{CYL}_k^\bt\colon= \bigoplus_{s\geq 1} C_{-\bt} (\Xi_\flip(\cyl_{k\ra k}^{\ccirc s}))
\end{equation}
with differential given by the boundary operator $\dd$ on cellular chain on $\Xi_\flip$ and with associative multiplication $\ccirc$ induced by composition of cobordisms (stacking of cylinders). Then the Definition \ref{def: BV_infty} implies that for a $\BV_\infty$ operator $\Delta_\infty$, the constituent chains $B_s$ arrange into an element
\begin{equation}
    \mc{C}\colon= u B_1+ u^2 B_2+ \cdots \quad \in   \mr{CYL}_k[[u]]
\end{equation}
satisfying the Maurer-Cartan equation
\begin{equation}
    \dd \mc{C}+\mc{C}\ccirc \mc{C}=0.
\end{equation}
In other words, $\mc{C}$ is a Maurer-Cartan element of the dga $\mr{CYL}_k[[u]]$.

\subsection{Questions/wish list for the flip theory} 
\begin{enumerate}[(i)]
    \item Construct explicit examples of BV cycles in $\Xi_\flip$ with coefficients in $\ZZ$ rather than $\ZZ_2$. Do these examples satisfy the 7-term relation (\ref{7-term relation}) in $\Xi_\flip$ up to a boundary, with product realized by a triangulated pair of pants?
    \item Does the 7-term relation hold in $\Xi_\flip$ for the BV cycles $\cc_\Delta^k$ (with coefficients in $\ZZ_2$)?
    \item Find examples of cyclic $A_\infty$ algebras for which the combinatorial BV operator associated with the BV cycle $\cc_\Delta^k$ is nontrivial in $Q$-cohomology. In particular, this would prove Conjecture \ref{conjecture: c_Delta^k is nontrivial}.
    \item Understand the homotopy type of $\Xi_\flip$ for $\Sigma$ a surface with boundary, cf. Question \ref{question: Xi_flip vs Ribbon graphs. (b) boundary}. 
    
    In particular, for $\Sigma$ a disk, does $\Xi_\flip$ have nontrivial $H_1$ for sufficiently many bulk vertices, or does it stay contractible for any set of vertices?
    
    For $\Sigma$ a cylinder and $P$ containing no bulk points, is $\Xi_\flip$ homotopy equivalent to $E_2^\fr(1)\sim S^1$? And if so, does $[c_\Delta^k]$ map to the standard generator of $H_1$ under this equivalence?
    \item  
    One may consider a class of cyclic $A_\infty$ algebras arising as an $A_\infty$ deformation of the commutative associative ring  $\mathbb{A}=O_X$ of functions on a manifold $X$ by a Maurer-Cartan element of the Hochschild dgLa of $\mathbb{A}$. One can interpret the corresponding combinatorial HTQFT as a model of topological string theory with target $X$. One can also consider $X$ to be a noncommutative space. Note that noncommutativity can in particular play the role of a regulator, making $\mathbb{A}$ finite-dimensional (e.g. taking $X$ to be the fuzzy sphere).
    \item It would be interesting to study flip theory for $V$ a \emph{minimal} cyclic $A_\infty$ algebra, i.e., with $Q=0$ (but with nontrivial $m_3$). In this case, partition functions for triangulated cobordisms are strictly (not modulo $Q$) independent of the triangulation.
\end{enumerate}

\section{Towards the full 2d HTQFT on triangulated cobordisms}
\label{sec: full 2d HTQFT on triang cob}
In this section we explain the idea that making an appropriate replacement for $\Xi_\flip$ coming from the secondary polytope construction, may resolve the problems in the construction of combinatorial 2d HTQFT (namely, one wants compatibility with stellar subdivisions/aggregations, non-fixed set of vertices of triangulations, and a homotopy equivalence between $\Xi$ and Zwiebach's moduli space $\til\MM_{h,n}$ arising in continuum TCFT \cite{Zwiebach}).

In particular, we construct Model 2 of combinatorial 2d HTQFT -- ``secondary polytope theory.'' It is a partial model: it is only defined on a disk, realized as a convex hull of a set of points $A\subset \RR^2$. This model is invariant up to homotopy w.r.t. both Pachner moves. The appropriate local algebraic input for the model is an ``$\Ahat$ algebra'' -- a refinement of a cyclic $A_\infty$ algebra by certain extra homotopies (defined in Section \ref{ss: def Ahat algebra}).  As an aside, we present a 1d toy model of the construction yielding a generalization of the standard setting for topological quantum mechanics, which may be of independent interest.

We also discuss the ideal Model 3, conjecturally defined on arbitrary surfaces, compatible with both Pachner moves and having the expected homotopy type of $\Xi$. In particular, in this model one has a natural BV algebra structure on $Q$-cohomology.


\subsection{A quick recollection of secondary polytopes}

Fix $d\geq 1$ (we will be interested 
primarily 
in the case $d=2$) and a finite collection of points in general linear position 
$A\subset \RR^d$. We assume that the affine span of $A$ is $\RR^d$. The construction of \cite{GKZ} (see \cite[Section 1]{KKS} for a detailed review) assigns to $A$ the ``secondary polytope'' -- a convex polytope  $\Sp(A)\subset \RR^A/\mr{Aff}(\RR^d)\simeq \RR^{|A|-d-1}$ (with $\mr{Aff}(\RR^d)$ the space of affine linear functions on $\RR^d$)  such that: 
\begin{itemize}
    \item Vertices of $\Sp(A)$ correspond to \emph{regular} triangulations of the convex hull $\mr{Conv}(A)\subset \RR^d$ with vertices in $A$ 
    (there is no requirement that all points of $A$ have to be used as vertices of the triangulation). 
    A triangulation is ``regular'' if there exists a convex continuous 
    function on $\Conv(A)$, linear on each simplex and  ``breaking'' (non-differentiable) on the codimension one faces of the triangulation. More precisely, the vertex $e_T$ corresponding to a triangulation $T$ is 
    \begin{equation}\label{secondary polytope e_T}
   e_T= \pi(a\mapsto \mr{vol}(\mr{Star}(a)))
    \end{equation}
    -- the function assigning to a point $a\in A$ the volume of its star in $T$, seen as an element of $\RR^A$ and then projected by the quotient map $\pi\colon \RR^A\mapsto \RR^A/\mr{Aff}(\RR^d)$. Thus, the secondary polytope is the convex hull of points (\ref{secondary polytope e_T}):
    \begin{equation}
        \Sp(A)=\Conv(\{e_T\}).
    \end{equation}
    \item Faces of $\Sp(A)$ correspond to polyhedral subdivisions 
    of $\Conv(A)$. More precisely, a face $e_\alpha$ of $\Sp(A)$ corresponds to a collection $\alpha$ of subsets $A_i$ of $A$ such that
\begin{equation}
\Conv(A)=\cup_i \Conv(A_i)
\end{equation} 
is a \emph{regular} polyhedral subdivision of $\Conv(A)$, i.e., a presentation as a union of convex $d$-dimensional polyhedra (with non-empty interior) overlapping over faces. Here ``regular'' again means that there exists a convex continuous 
function on $\Conv(A)$, linear on the constituent polyhedra and breaking along faces. The face $e_\alpha$ is combinatorially equivalent\footnote{
I.e., has the same poset of faces.
} to the product of secondary polytopes
\begin{equation}
    e_\alpha\simeq \prod_i \Sp(A_i).
\end{equation}
In particular for the dimension of the face $e_\alpha$ one has 
\begin{equation}
    \dim e_\alpha=\sum_i (|A_i|-d-1).
\end{equation}
We will refer to points of $A_i$ in the interior of $\Conv(A_i)$ as ``floating points.''
\end{itemize}


\begin{remark}
In the terminology of \cite{KKS}, for $A\subset \RR^d$ a set of points in general position and  $\p=\Conv(A)$ a convex polytope,\footnote{When we say ``convex polytope,'' we understand that it should have non-empty interior.} the pair $(\p,A)$ is called a ``marked polytope.'' For $A'\subset A$, $\p'=\Conv(A')$ a convex polytope, the pair $(\p',A')$ is called a ``marked subpolytope'' of $(\p,A)$. A polyhedral subdivision of $(\p,A)$ is a collection of marked subpolytopes $\{(\p_i,A_i)\}$ such that $\p=\cup_i \p_i$ is a polyhedral subdivision, i.e., $\p_i\cap \p_j$ is either empty or a common face of $\p_i$ and $\p_j$.
\end{remark}

One has a poset $R(\p,A)$ of polyhedral subdivisions of $(\p,A)$, with order given by refinement. The minimal elements of $R(\p,A)$ are regular triangulations. The unique maximal element is the tautological polyhedral subdivision, with $\{(\p_i,A_i)\}$ consisting of just $(\p,A)$ itself. The face poset of the secondary polytope $\Sp(A)$ coincides with $R(\p,A)$.

\begin{example}
    For $d=2$, $A\subset \RR^2$ the set of vertices of a convex polygon, $\Sp(A)$ is combinatorially equivalent\footnote{
    In the following we will usually omit ``combinatorially equivalent'' when describing a model for the secondary polytope.
    } to Stasheff's associahedron $K_{|A|-1}$. 
    
   For instance, for $|A|=4$ and $|A|=5$ the pictures of secondary polytopes are those of Figure \ref{fig: 1-cell of Xi_flip} and the right part of Figure \ref{fig: 2-cells in Xi_flip}, ignoring the shaded regions.
\end{example}

\begin{example}
    For any $d$ and $A=\{x_0,\ldots,x_d,y\}$ a $(d+2)$-tuple of points such that 
    $y$ is in the interior of the $d$-simplex $\sigma=\Conv(\{x_0,\ldots,x_d\})$,
    the secondary polytope $\Sp(A)$ is an interval, with one endpoint decorated by tautological triangulation of $\sigma$, the other -- with stellar subdivision of $\sigma$ with vertex at $y$. The interval itself is decorated with tautological triangulation of $\sigma$ with $y$ a floating point, see Figures \ref{fig: sp stellar 1d}, \ref{fig: secondary polytope - triangle with one floating point}. 
    \begin{figure}[h]
        \centering
        \includegraphics[scale=0.7]{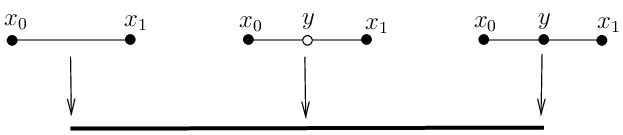}  
        \caption{Secondary polytope for three points $A=\{x_0,x_1,y\}$ in $\RR$ with $x_0<y<x_1$. 
        For the left endpoint of $\Sp$, $\{A_i\}=\{\{x_0,x_1\}\}$. 
        For the bulk, $\{A_i\}=\{\{x_0,x_1,y\}\}$ with $y$ the floating point, shown as a hollow vertex.  For the right endpoint, $\{A_i\}=\{\{x_0,y\},\{x_1,y\}\}$.}
        \label{fig: sp stellar 1d}
    \end{figure}
    \begin{figure}[h]
        \centering
                \includegraphics[scale=0.7]{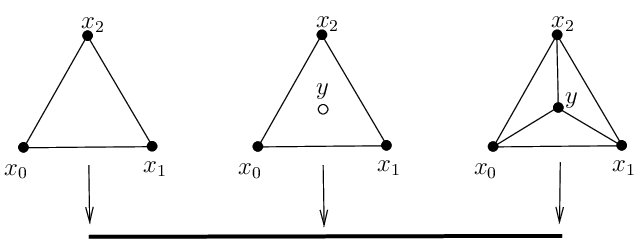}
        \caption{Secondary polytope for four points in $\RR^2$ in non-convex position.}
       \label{fig: secondary polytope - triangle with one floating point}
    \end{figure}
\end{example}

\begin{remark} 
    For any $d\geq 1$,
    the edges of a secondary polytope correspond to Pachner moves between regular triangulations. In particular, for $d=2$, both Pachner flips and stellar subdivisions/aggregations can 
    appear as edges of $\Sp(A)$.
\end{remark}

\begin{example}\label{ex: secondary polytope for d=1}
    For $d=1$ and any finite set of distinct points $A\subset \RR$, the secondary polytope $\Sp(A)$ is 
    a cube of dimension $|A|-2$, see Figures \ref{fig: sp 1d 2 floating pts}, \ref{fig: sp 1d 3 floating pts}. More explicitly, assuming $A$ is the ordered set $x_0<x_1<\cdots<x_n<x_{n+1}$ in $\RR$, faces of $\Sp(A)$ correspond to making for each $j=1,\ldots,n$ a choice whether 
    \begin{enumerate}[(i)]
        \item $x_j$ appears in $\cup_i A_i$ as a vertex of the triangulation $\Conv(A)=\cup_i \Conv(A_i)$, or
        \item $x_j$ appears as a floating point -- interior point of $\Conv(A_i)$ for some $i$, or
        \item $x_j$ is not in $\cup_i A_i$.
     \end{enumerate}
     The dimension of the face is the number of floating points among $x_1,\ldots,x_n$.
    \begin{figure}[h]
        \centering
        \includegraphics[scale=0.5]{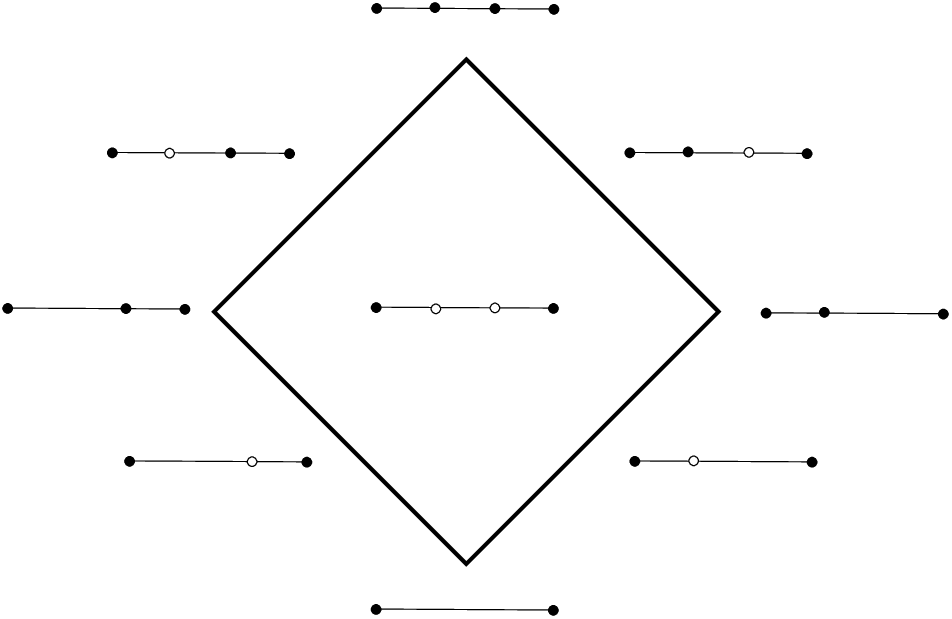}
        \caption{Secondary polytope for $A$ consisting of four points in $\RR$.}
        \label{fig: sp 1d 2 floating pts}
    \end{figure}
    \begin{figure}[h]
        \centering
        \includegraphics[scale=0.5]{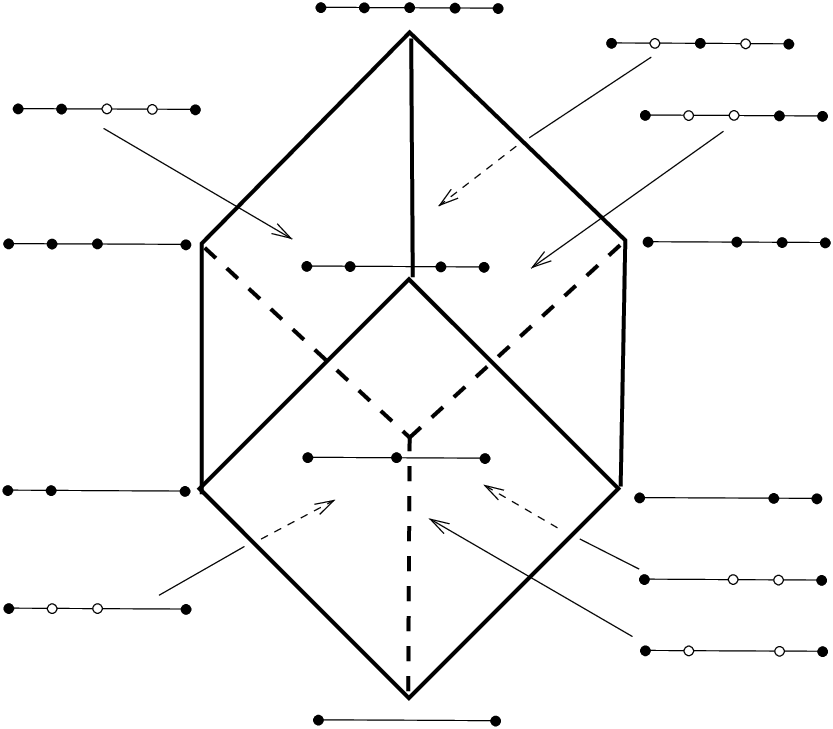}
        \caption{Secondary polytope for $A$ consisting of five points in $\RR$. Decorations of vertices and 2-faces are shown.}
        \label{fig: sp 1d 3 floating pts}
    \end{figure}
\end{example}

\begin{figure}[h]
    \centering
    \includegraphics[scale=0.5]{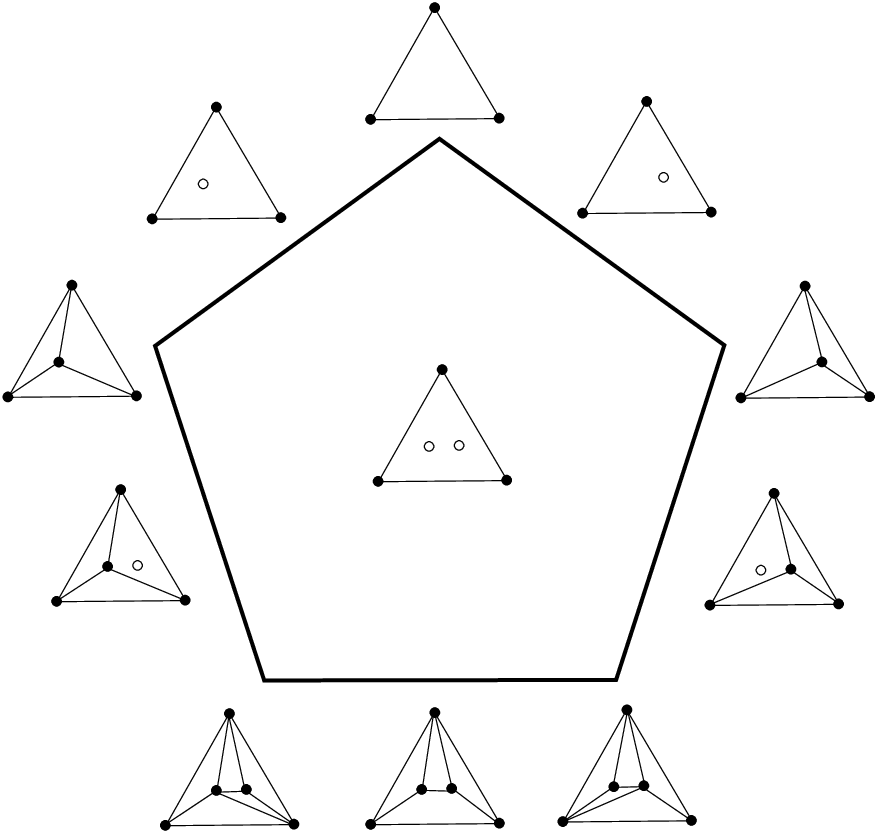}
    \caption{Example: secondary polytope for $A$ consisting of five points in $\RR^2$, with $\Conv(A)$ a triangle. The secondary polytope depends on convex geometry of the configuration; the situation shown corresponds to the two floating points in a certain open set (``chamber'') in the configuration space of two points.}
    \label{fig: SP triangle with 2 floating points}
\end{figure}

\begin{figure}[h]
    \centering
    \includegraphics[scale=0.5]{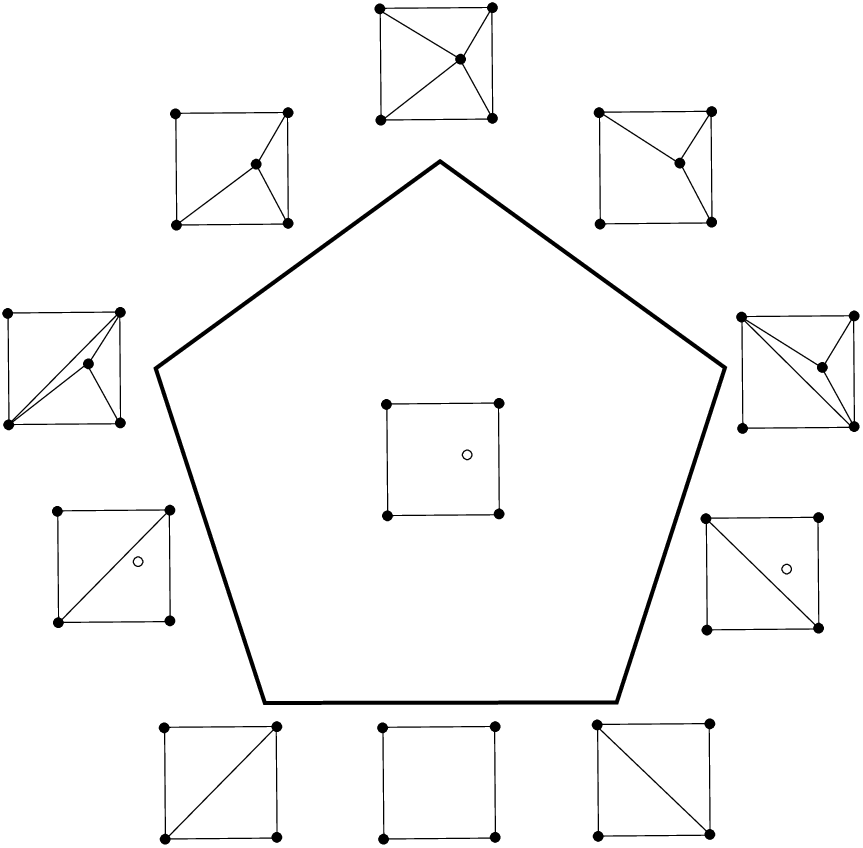}
    \caption{Example: secondary polytope for a square with one floating point in $\RR^2$. Again, combinatorics of the face poset depends on the ``chamber'' in which the floating point is located. Chambers in this case are the connected components of the square interior with the two diagonals removed.}
    \label{fig: SP square with 1 floating point}
\end{figure}

\subsubsection{Admissible subsets of $\RR^d$}
Let us denote $\Adm_m(\RR^d)$ the set of ``admissible configurations of $m$ points'' -- subsets  $A\subset \RR^d$ with $|A|=m$ such that:
\begin{enumerate}[(i)]
    \item $A$ is in general linear position and the affine span of $A$ is $\RR^d$.
    \item $A$ is not in the locus of ``exceptional wall-crossing,'' see \cite[Section 5.2]{KSS}.\footnote{Configurations satisfying condition (i) form an open dense subset of $(\RR^d)^m/S_m$. However, as one travels inside a connected component of this subset, the combinatorial type of the secondary polytope $\Sp(A)$ might change (``jump''). Locus where such change happens are the ``exceptional walls.''}
\end{enumerate}
The set $\Adm_m(\RR^d)$ is an open dense subset of $(\RR^d)^m/S_m$.
We will call its connected components ``configuration chambers.''
For $d=2$, $\Adm_m(\RR^2)\subset (\RR^2)^m/S_m$ is known (see \cite{KSS}) to be $\RR$-Zarisski open.

For an admissible configuration $A\subset \RR^d$, we denote $\langle A\rangle$ the number of codimension one faces of the polytope $\Conv(A)$. We will also denote  $\Adm_{m,n}(\RR^d)$ the set of admissible configurations of $m$ points in $\RR^d$ such that $\langle A\rangle=n$.

For instance for $d=2$, elements of $\Adm_{m,n}(\RR^2)$ correspond (via transitioning to the marked polygon  $(\p=\Conv(A),A)$) to convex $n$-gons in $\RR^2$ with $m-n$ ``floating points'' inside.

\subsection{
The definition of an $\Ahat$ algebra
}
\label{ss: def Ahat algebra} \leavevmode \\

\textbf{Operations.}
We will call an $\Ahat$ algebra (or ``$\SP_2$ algebra'') a cochain complex $V$ over $\kk$ with differential $Q$ equipped with a compatible metric $g$ and a family of multilinear maps (``operations'')
enumerated by ``configuration chambers'' -- elements of $\pi_0 \Adm_{m,n}(\RR^2)$. 
A configuration $A\in \Adm_{m,n}(\RR^2)$ is assigned 
a  multilinear map
\begin{equation}\label{Ahat operation}
    c_{[A]}\colon V^{\otimes n}\ra \kk
\end{equation}
of degree $3-m$
(here $[A]$ stands for the configuration chamber of $A$). We understand that inputs in $V$ are assigned to the edges of the polygon $\Conv(A)$.

\textbf{(Partial) invariance under cyclic permutations of inputs.}
We will say that a configuration chamber $[A]$, for $A\in \Adm_{m,n}(\RR^2)$, is ``$\ZZ_q$-symmetric,'' for $q>1$ a divisor of $n$, if one can (i) move $A$ staying in the same configuration chamber so that $\Conv(A)$ is a regular polygon and then (ii) move the configuration of interior points to the same configuration rotated by the angle $2\pi/q$ while staying in the same configuration chamber.  If $[A]$ is $\ZZ_q$-symmetric, we impose the condition that the operation $c_{[A]}$ is invariant under the subgroup $\ZZ_q\subset \ZZ_n$ acting by cyclic permutations of inputs.

\begin{example}
    If $A$ is an $n$-tuple of points in $\RR^2$ in convex position, then $[A]$ has $\ZZ_n$ symmetry. If $A$ is as in Figure \ref{fig: secondary polytope - triangle with one floating point}, $[A]$ has $\ZZ_3$ symmetry. If $A$ is as in Figure \ref{fig: SP triangle with 2 floating points} or Figure \ref{fig: SP square with 1 floating point}, then $[A]$ has no $\ZZ_q$ symmetry. The chamber of the configuration of two floating points in a square in Figure \ref{fig: conf chamber with Z2 symmetry} has $\ZZ_2$ symmetry.
    \begin{figure}
        \centering
        \includegraphics[scale=0.7]{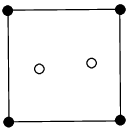}
        \caption{Configuration chamber in $\pi_0\mr{Adm}_{6,4}(\RR^2)$ with $\ZZ_2$ symmetry.}
        \label{fig: conf chamber with Z2 symmetry}
    \end{figure}
\end{example}

\textbf{Differential.}
The differential $Q$ 
acts on operations as
\begin{equation} \label{Ahat Q relation}
    Q c_{[A]}= -c_{[\dd\, \Sp(A)]}.
\end{equation}
Here on in the right we take the boundary of the secondary polytope of $A$. The codimension one faces comprising it correspond to special (``coarse,'' in the terminology of \cite{KKS}) polygonal subdivisions $\{(\p_i=\Conv(A_i),A_i)\}$ of $(\p=\Conv(A),A)$. For each such subdivision we compute the contraction of tensors $c_{[A_i]}$ assigned to subpolygons, and the inverse metric $g^{-1}$ assigned to the edges, and then sum over the boundary faces of $\Sp(A)$. I.e., in more detail, (\ref{Ahat Q relation}) is:
\begin{equation} \label{Ahat Q relation 2}
    Q c_{[A]}= -\sum_{\alpha=\{(\p_i,A_i)\} \;\mr{coarse \;subdivision\;of\;} (\p,A)}
    \left\langle \bigotimes_i c_{[A_i]}, \bigotimes_{\mr{edges\;of\;}\alpha} g^{-1} \right\rangle.
\end{equation}

\begin{remark}
    Specializing the structure to $n$-tuples of points in $\RR^2$ in convex position (i.e. vertices of a convex $n$-gon), for $n\geq 3$, we recover the definition of a cyclic $A_\infty$ algebra, with (\ref{Ahat Q relation 2}) the $A_\infty$ relation (cf. Figure \ref{fig:A_infty rel via cutting a polygon}). We will denote the operation corresponding to a convex $n$-gon with no floating points by $c_n$, to be compatible with notations in Section \ref{ss: local data: cyclic A_infty algebra}.
    
    Thus, an $\Ahat$ algebra is in particular a cyclic $A_\infty$ algebra and should be thought of as a refinement of the latter by certain extra homotopies. 
\end{remark}

\begin{example}
    Let $A$ be a quadruple of points in $\RR^2$ in non-convex position (a triangle with a floating point inside). 
    Let us denote the corresponding degree $-1$ operation $\mu\colon=c_{[A]}\colon V^{\otimes 3}\ra \kk$.
    The relation (\ref{Ahat operation}) for it takes form
    \begin{equation}\label{Pachner 2 up to homotopy}
        (Q\mu)(x,y,z) = \mr{Str}_V m_2(x,m_2(y,m_2(z,\bt)))-c_3(x,y,z),\qquad x,y,z\in V,
    \end{equation}
    where the two terms in the r.h.s. correspond to the two endpoints of the secondary polytope in Figure \ref{fig: secondary polytope - triangle with one floating point}; $m_2$ stands for the operation $c_3$ with one input converted into output by $g^{-1}$. 
    
    Note that relation (\ref{Pachner 2 up to homotopy}) is an up-to-homotopy version of the relation (\ref{Pachner 2 rel (strict)}) corresponding to second Pachner move (stellar subdivision/aggregation). The corresponding homotopy, witnessing the failure of the invariance under the move to hold strictly, is $\mu$ -- the $\Ahat$ operation for a triangle with a floating point.
\end{example}

\begin{remark}
The first few operations in an $\Ahat$ algebra, arranged by degree and arity, are as follows.

\noindent
\begin{tabular}{c|c|c|c}
 degree $\downarrow$; arity $\ra$    & $3\ra 0$ & $4\ra 0$ & $5\ra 0$  \\ \hline
   0  & $c_3=c\left(\left[\vcenter{\hbox{\includegraphics[scale=0.35]{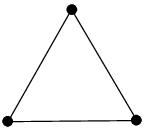}}}\right]\right)$
   & 
   & \\
   -1 & $\mu=c\left(\left[\vcenter{\hbox{\includegraphics[scale=0.35]{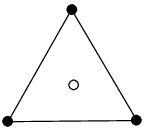}}}\right]\right)$ & 
   $c_4=c\left(\left[\vcenter{\hbox{\includegraphics[scale=0.35]{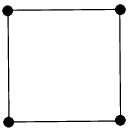}}}\right]\right)$ 
   & \\
   -2 &
   $c\left(\left[\vcenter{\hbox{\includegraphics[scale=0.35]{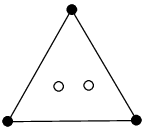}}}\right]\right)$
   & $c\left(\left[\vcenter{\hbox{\includegraphics[scale=0.35]{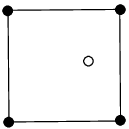}}}\right]\right)$ &
   $c_5=c\left(\left[\vcenter{\hbox{\includegraphics[scale=0.35]{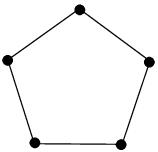}}}\right]\right)$
\end{tabular}

Note that the left and middle operations in the last row here do not have any $\ZZ_q$ symmetry, whereas the remaining operations listed in the table are invariant w.r.t. cyclic permutations of inputs.

\end{remark}

\begin{remark}
    An $\Ahat$ algebra can be formalized as an algebra over a ``cyclic'' dg operad $\Ahat$ freely generated by the generators $c_{[A]}$ assigned to elements of $\pi_0 \Adm_{m,n}(\RR^2)$, with the differential defined by (\ref{Ahat Q relation}), (\ref{Ahat Q relation 2}). Here ``cyclic'' means that operations have output in the ground field
    and that there is an extra structure (contraction with $g^{-1}$) allowing to convert any input into an output. 

    Since the operad $\Ahat$ is based on the combinatorics of secondary polytopes for subsets of $\RR^2$, we will also call this operad $\SP_2$.
\end{remark}

\subsection{ Combinatorial HTQFT on a convex polygon based on secondary polytopes (``Model 2'')
}
\label{ss: Model 2}

Let $V$ be an $\Ahat$ algebra and $A\in \Adm_{m,n}(\RR^2)$ an admissible configuration of points on the plane. Then one has a combinatorial HTQFT on the polygon $\Sigma=\Conv(A)$ with $A$ the set of admissible vertices of polygonal decompositions of $\Sigma$. This HTQFT is the cochain on the secondary polytope
\begin{equation}
Z\in C^\bt(\Sp(A))\otimes \Hom(V^{\otimes n},\kk)
\end{equation}
defined by
\begin{equation}\label{Z model 2}
    Z(e_\alpha)= \Big\langle \bigotimes_i c_{[A_i]} , \bigotimes_{\mr{edges\;of\;}\alpha} g^{-1} \Big\rangle
\end{equation}
for $\alpha=\{(\p_i,A_i)\}$ a polygonal subdivision of $(\Sigma,A)$. 
\begin{thm}\label{thm: model 2}
The cochain $Z$ defined by (\ref{Z model 2}) satisfies the main equation of HTQFT,
\begin{equation}\label{main eq model 2}
    (\delta+Q)Z=0.
\end{equation}
\end{thm}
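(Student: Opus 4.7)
The plan is to mirror the proof strategy of Theorem \ref{thm: model 1} (flip theory), using the $\Ahat$ relation \eqref{Ahat Q relation 2} in place of the $A_\infty$ relation and the face poset of a product of secondary polytopes in place of the face poset of a product of associahedra. Concretely, I want to fix a face $e_\beta$ of $\Sp(A)$ corresponding to a polygonal subdivision $\beta=\{(\p_i,A_i)\}$ of $(\Sigma,A)$ and verify the identity
\begin{equation}
(\delta Z)(e_\beta)+Q(Z(e_\beta))=0
\end{equation}
pointwise, summing over codimension-one faces of $e_\beta$.

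First I would compute $Q(Z(e_\beta))$ using the fact that $Q$ acts as a derivation on contractions of tensors and annihilates $g^{-1}$ (since the metric is $Q$-closed, by compatibility with the differential, and $Q$ preserves the tensors paired along internal edges). This reduces $Q(Z(e_\beta))$ to a signed sum, over indices $j$, of contractions where the single factor $c_{[A_j]}$ has been replaced by $Q(c_{[A_j]})$. Applying the structure equation \eqref{Ahat Q relation 2}, each such replacement produces minus the sum over coarse polygonal subdivisions $\{(\p_{jk},A_{jk})\}$ of $(\p_j,A_j)$ of the corresponding contraction $\langle\bigotimes_k c_{[A_{jk}]},\bigotimes g^{-1}\rangle$. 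Packaging these subdivisions back into subdivisions of the ambient polygon $(\Sigma,A)$, each summand is exactly $Z(e_{\beta'})$, where $\beta'$ is the polygonal subdivision of $(\Sigma,A)$ obtained from $\beta$ by refining a single block $\p_j$ via a coarse subdivision.

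Next I would match this with $(\delta Z)(e_\beta)=Z(\partial e_\beta)$ by using the product structure $e_\beta\simeq\prod_i\Sp(A_i)$ of faces of the secondary polytope (recalled in Section 5.1). The codimension-one faces of this product are precisely $\Sp(A_1)\times\cdots\times(\text{codim-1 face of }\Sp(A_j))\times\cdots\times\Sp(A_k)$, and these correspond bijectively to polygonal subdivisions $\beta'$ obtained from $\beta$ by refining exactly one block $\p_j$ via a coarse subdivision. Thus the set of summands on the two sides of the desired equation matches bijectively, and it only remains to check that the signs agree.

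The sign check is where I expect the main bookkeeping to sit. The signs on the $Q$ side come from the Koszul rule for $Q$ moving past previous tensor factors plus the explicit sign in \eqref{Ahat Q relation} (the minus), while the signs on the $\delta$ side come from the orientation convention on the cellular boundary of the product polytope $\prod_i\Sp(A_i)$. A clean way to handle this is to fix, once and for all, orientations on each $\Sp(A_i)$ compatible with the product orientation on $e_\beta$, and to normalize the generators $c_{[A_i]}$ so that applying $Q$ to $c_{[A_i]}$ reproduces the oriented cellular boundary of $\Sp(A_i)$; the relation \eqref{Ahat Q relation 2} is precisely this compatibility, so by construction the two sums cancel. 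Functoriality under gluing of polygonal cobordisms is immediate from the definition \eqref{Z model 2}, exactly as in Theorem \ref{thm: model 1}(i).
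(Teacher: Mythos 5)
Your proposal is correct and follows essentially the same route as the paper, whose proof simply invokes the structure relation (\ref{Ahat Q relation 2}) together with the combinatorics of the face poset of $\Sp(A)$, exactly as in Theorem \ref{thm: model 1}; you have merely spelled out the term-by-term matching between $Q$ acting on one factor $c_{[A_j]}$ and the codimension-one faces of the product $\prod_i\Sp(A_i)$, including the sign normalization already built into (\ref{Ahat Q relation}).
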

\begin{proof}
    It follows from the relation (\ref{Ahat Q relation 2}) in the algebra $V$, similarly to Theorem \ref{thm: model 1}.
\end{proof}

This construction realizes the ``full'' 2d combinatorial HTQFT on a convex polygon, with Pachner flips between triangulations and stellar aggregations allowed, with a fixed set of admissible vertices.\footnote{In this context one can think of the set of admissible vertices $A$ as ``aether'' -- a background geometric structure.} 

An encouraging property of this construction is that $\Sp(A)$ is contractible (in particular, it is homotopy equivalent to $E_2^\fr(0)$), whereas $\Xi_\flip$ for a disk with $N$ bulk vertices is expected to have nontrivial homology depending on $N$, cf. Section \ref{sss: polygonal decompositions, ribbon graphs and the moduli space}.

\subsection{A toy model: combinatorial 1d HTQM based on secondary polytopes}
\label{ss: 1d HTQM from secondary polytopes}
\subsubsection{The definition of an $\SP_1$ algebra}
Similarly to 
Section \ref{ss: def Ahat algebra}, we can define an algebraic structure corresponding to secondary polytopes for subsets of $\RR^1$.

We define a $\SP_1$ algebra as a cochain complex $V$ with differential $Q$ and a collection of endomorphisms $U_n\colon V\ra V$ of degree $-n$ (the operation corresponding to a subset $A$ of $n+2$ points of $\RR$, with the corresponding marked polytope being an interval $\Conv(A)\subset \RR$ with $n$ floating points), for $n=0,1,2,\ldots$, subject to the relation\footnote{
Understanding that $Q$ acts on endomorphisms of $V$ via adjoint action, we use notations $QU_n$ and $[Q,U_n]$ interchangeably (in the first notation, $Q$ acts on the operation, in the second -- on the output and on the input), cf. footnote \ref{footnote 16}.
}
\begin{equation}\label{SP_1 relation}
    QU_n= \sum_{i=0}^{n-1} (-1)^i (U_i U_{n-i-1} -U_{n-1} ).
\end{equation}
This formula is an explicit form of (\ref{Ahat Q relation 2}) for $A\subset \RR^1$; the $2n$ terms in the r.h.s. correspond to the codimension one faces of the $n$-cube $\Sp(A)$, cf. Example \ref{ex: secondary polytope for d=1}.
In particular, the first relations are:
\begin{align}
    QU &= 0,\\
    QU_1 &= U^2-U, \label{QU_1=U^2-U} \\
    Q U_2 &= U U_1-U_1 U, \\
    Q U_3 &= U U_2 - U_1^2 + U_2 U - U_2,\\
    \nonumber &\cdots
\end{align}
where we denoted $U\colon=U_0$.

One can introduce the corresponding notion of $\SP_1$ operad -- a dg operad freely generated by unary operations $U_n$, with differential acting as in (\ref{SP_1 relation}).

Note that equation (\ref{QU_1=U^2-U}) is the up-to-homotopy version of the equation (\ref{Z^2=Z}) in strict topological quantum mechanics.

\subsubsection{Combinatorial HTQM}
Fix a set $A\subset \RR$ and an $\SP_1$ algebra $(V,Q,\{U_n\}_{n\geq 0})$. We define a combinatorial HTQM as a cochain $Z$ on the secondary polytope $\Sp(A)$ with values in $\mr{End}(V)$, as follows.  
For $\alpha=\{(I_i,A_i)\}$ a triangulation of the marked interval $(I=\Conv(A),A)$ into marked sub-intervals (we assume that the intervals $I_i$ are ordered from left to right in $\RR$), the value of $Z$ on the corresponding face $e_\alpha$ of the secondary polytope is 
\begin{equation}\label{comb HTQM Z}
    Z(e_\alpha)=\overleftarrow{\prod_i} U_{|A_i|-2}\qquad \in \mr{End}(V).
\end{equation}
By virtue of the relation (\ref{SP_1 relation}), we have the main equation of HTQFT:
\begin{equation}
    (\delta+Q)Z=0.
\end{equation}

\begin{example}
    For instance, the value of HTQM on an interval $I$ subdivided into $N$ sub-intervals with no floating points is $U^N$. We think of these sub-intervals as ``elementary'' (or ``atomic'') intervals.
    $U_1$ is the chain homotopy between HTQM on an interval consisting of (triangulated into) two elementary ones and a single elementary interval, cf. (\ref{QU_1=U^2-U}).

    Another example of formula (\ref{comb HTQM Z}), for an interval triangulated into four sub-intervals, one of them with a floating point:
    \begin{equation}
        Z\left(
        \vcenter{\hbox{\includegraphics[scale=0.7]{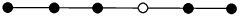}}}
        \right) = U U_1 U^2.
    \end{equation}
\end{example}

\begin{example}
    Assume we are given a continuum HTQM as in Example \ref{Ex: HTQM}, with the space of states being a cochain complex $V,Q$ and with the Hamiltonian $H=[Q,G]$, with $G\in \mr{End}(V)$ a degree $-1$ endomorphism satisfying $G^2=0$.

    Fix $T>0$ -- the length of ``elementary/atomic interval.'' Then we can construct an $\SP_1$ algebra structure on the complex $(V,Q)$ as follows:
    \begin{align}
        U&=e^{TH}, \\
        U_1&= \int_T^{2T} e^{tH+dt\, G}=
        \int_T^{2T} dt\, G e^{tH} 
        = G\frac{e^{2TH}-e^{TH}}{H},\\
        U_{\geq 2} & =0.
    \end{align}
\end{example}

\begin{example}[``Infinitesimal'' $\SP_1$ algebra]\label{ex: infinitesimal SP_1 algebra}
Fix a cochain complex $(V,Q)$.
    Consider the following ansatz for an $\SP_1$ algebra
    \begin{multline}
        U=1+TH+ O(T^2),\; U_1= T G+ O(T^2),\; U_2=T^2 G_2+O(T^3),\\ U_3=T^2 G_3+O(T^3),\;
        U_4 = T^3 G_4+O(T^4),\ldots, U_n = T^{[n/2]+1}G_n + O(T^{[n/2]+2}),\ldots
    \end{multline}
    with $H=G_0$, $G=G_1$, $G_2$, etc. a sequence of endomorphisms of $V$, with $G_n$ of degree $-n$. Here $T$ is a formal infinitesimal parameter. Then the $\SP_1$ structure equation (\ref{SP_1 relation}) considered in the leading order in $T$, yields the sequence of equations
    \begin{align}
        [Q,H]&= 0,\\
        [Q,G] &= H, \\
        [Q,G_2] &= [H,G], \\
        [Q,G_3] &= G_2-G^2 ,\\
        [Q,G_4] &= [H,G_3]-[G,G_2],\\
        [Q,G_5] &= G_4-[G,G_3],\\
        [Q,G_6] &= [H,G_5]-[G,G_4]+[G_2,G_3],\\
        [Q,G_7] &= G_6-[G,G_5]-G_3^2,
        \\ \nonumber
        &\vdots 
    \end{align}
    Generally, the relations are:
    \begin{align}
         [Q,G_{2k}] &= \sum_{i=0}^{k} (-1)^i [G_i, G_{2k-i-1}], \label{SP_1 infinitesimal relation 1} \\
         [Q,G_{4k+1}] &= G_{4k}-\sum_{i=0}^{k-1}[G_{2i+1}, G_{4k-2i-1}], \label{SP_1 infinitesimal relation 2}\\
         [Q,G_{4k+3}] &= G_{4k+2}-\sum_{i=0}^{k-1} [G_{2i+1},G_{4k-2i+1}]-\frac12 [G_{2k+1},G_{2k+1}], \label{SP_1 infinitesimal relation 3}
    \end{align}
    for $k\geq 0$.
    We call the collection of endomorphisms $H=G_0,G=G_1,G_2,\ldots$ subject to relations above an ``infinitesimal $\SP_1$ algebra.'' 

    Note that in the special case $G_2=G_3=\cdots=0$, we recover the data of a continuum HTQM of Example \ref{Ex: HTQM}.

    Note also that the relations above involve only supercommutators. 
    Thus, it is tempting to consider the dg Lie algebra $\mathfrak{sp}_1$ with generators $Q,\{G_n\}_{n\geq 0}$ and differential $[Q,-]$, subject to relations  (\ref{SP_1 infinitesimal relation 1})--(\ref{SP_1 infinitesimal relation 3}). In this terminology, an infinitesimal $\SP_1$ algebra is a representation of the  dgLa $\mathfrak{sp}_1$ on a complex $(V,Q)$.
\end{example}

\subsection{
Tentative Model 3 on triangulated surfaces. Conjecture on existence of the proper Pachner CW complex}
In this section we sketch a conjectural combinatorial model of 2d HTQFT -- ``Model 3'' or ``Ideal 
HTQFT'' 
-- combining the features of flip theory of Sections \ref{sec: flip theory}--\ref{sec: improved BV operator} (exists on surfaces, contains BV operators)  and of the HTQFT on a convex polygon from secondary polytopes of Section \ref{ss: Model 2} (compatible with both Pachner moves, the number of vertices is not fixed, the homotopy type of $\Xi$ is under control).

\subsubsection{Pachner complex}
\begin{conjecture}\label{conjecture: Pachner complex}
Let $\Sigma$ be a surface of genus $h$ with $n$ boundary circles. Assume that the boundary circles are triangulated into $k_1,\ldots,k_n$ intervals. We conjecture that there exists a CW complex $\Xi(\Sigma,\{k_i\})$ -- ``Pachner complex'' -- such that:
\begin{enumerate}[(i)]
    \item Vertices of $\Xi$ correspond to ``admissible'' triangulations of $\Sigma$, inducing the given triangulation on the boundary.\footnote{We are blackboxing the appropriate notion of admissibility. In the setting of secondary polytopes, it is regularity -- existence of a convex piecewise linear function inducing the given triangulation.}
    \item Edges of $\Xi$ correspond to Pachner moves between triangulations -- flips or stellar subdivisions/aggregations.
    \item \label{conj: Pachner complex (iii)}
    $\Xi$ is homotopy equivalent to Zwiebach's moduli space $\til\MM_{h,n}$ (cf. Section \ref{ss: TCFT}). In particular, for genus $h=0$, $\Xi$ is homotopy equivalent to the component $E_2^\fr(n-1)$ of the framed little disk operad. 
    
    We will denote the induced isomorphism on homology 
    \begin{equation}\label{f_*}
    f_*\colon H_\bt(\Xi)\ra H_\bt(E_2^\fr(n-1)).
    \end{equation}
    We assume that this isomorphism is compatible with gluing of cobordisms/operadic composition in $E_2^\fr$.
    \item Faces of $\Xi$ correspond to ``admissible'' polygonal decompositions of $\Sigma$, inducing the given triangulation of the boundary, into polygons with $\geq 3$ sides equipped with extra geometric data -- a linear chart and a configuration chamber of $\geq 0$ \emph{floating points} (plus the boundary vertices of the polygon). 
    
    Sub-faces of a face correspond to subdivisions of a given polygonal decomposition, locally described by the secondary polytope combinatorics.
    \item For $\Sigma$ a cylinder with $k_1=k_2=\colon k$, 
    the BV cycle $\cc_\Delta^k$ (\ref{c_Delta^n}) is a representative of the generator of $H_1(\Xi,\ZZ)\stackrel{f_*}{\simeq} H_1(E_2^\fr(1),\ZZ)=\ZZ$.
\end{enumerate}
\end{conjecture}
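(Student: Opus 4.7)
The plan is to construct $\Xi(\Sigma,\{k_i\})$ as a cell complex whose open cells are indexed by equivalence classes of pairs $(\alpha,c)$, where $\alpha$ is a polygonal decomposition of $\Sigma$ inducing the prescribed triangulation on $\dd\Sigma$ and $c$ is a choice of configuration chamber in $\pi_0\Adm_{|A_i|,n_i}(\RR^2)$ for the floating points inside each bulk polygon $A_i$ of $\alpha$. The dimension is $\sum_i(|A_i|-3)$ and the attaching maps are dictated facet-by-facet by the secondary polytope combinatorics of Section~\ref{ss: def Ahat algebra}. Properties (i), (ii), (iv) are then built into the construction; admissibility is interpreted as regularity of the local bulk configurations, so that each closed bulk cell is genuinely a face of $\Sp(A_i)$. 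Functoriality under gluing (and hence naturality of $f_*$) is immediate, since sewing two such decorated polygonal decompositions along a common triangulated boundary produces a decorated polygonal decomposition of the glued cobordism.

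The main task is (iii), the homotopy equivalence $\Xi\simeq\til\MM_{h,n}$. My plan is to adapt the Harer--Mumford--Thurston--Penner ribbon graph model (Section~\ref{sss: polygonal decompositions, ribbon graphs and the moduli space}) to surfaces with boundary and to interior bulk vertices, and then to quotient out the extraneous vertex data. The key geometric mechanism is the same one that produces the secondary polytope in the first place: a floating point is eliminated from a bulk polygon $p$ by sliding it to the boundary of $p$, which in the dual metric ribbon graph picture corresponds to collapsing a shortest edge. I would show that this collapse defines, at each step, a deformation retraction of the subcomplex of $\Xi$ with $\geq 1$ floating bulk vertex onto the subcomplex with one fewer such vertex, compatible with Zwiebach's formal-coordinate data at the remaining $n$ boundary punctures. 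Iterating, one lands on the subcomplex modelled by trivalent ribbon graphs with $n$ borders and no interior vertices, which is the standard combinatorial model for $\til\MM_{h,n}$ (once the $\RR_+^n$ factors of boundary-edge lengths and the $S^1$ rotation factors at each puncture are correctly accounted for).

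The hard part will be making these iterated collapses assemble coherently into a global homotopy equivalence rather than just a chain-level or Euler-characteristic identity. The natural tool is a Quillen Theorem~A argument applied to the poset of decorated polygonal decompositions, ordered by coarsening of the decomposition (and equality of chamber): one needs to show that the fibers of the ``forget one bulk vertex'' functor have contractible nerves. These fibers are precisely sub-posets of the face lattice of a secondary polytope of the star of the forgotten vertex, and contractibility should reduce to the standard fact that secondary polytopes are convex (hence their face lattices, and the appropriate slices thereof, have contractible geometric realizations). A secondary obstacle is the subtlety at low genus and small $N$ where ribbon graph automorphisms are nontrivial; there the argument must be run orbifold-theoretically, matching the orbi-cell structure on the ribbon graph side.

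Finally, (v) would follow from (iii) by a direct check. The cycle $\cc_\Delta^k$ lies entirely in the subcomplex of $\Xi$ with no floating bulk vertices, where $\Xi$ restricts to $\Xi_\flip$ of the cylinder with $k$ boundary vertices on each side, and by the intuition in the proof of Proposition~\ref{prop: c^k}, $\cc_\Delta^k$ is precisely the sum of $k$ elementary $2\pi/k$-rotations of the in-circle. Under the homotopy equivalence $f$ constructed in (iii), rotations of an in-circle map to rotations of the corresponding inner disk of $E_2^\fr(1)$; hence $[\cc_\Delta^k]$ maps to $k\cdot(2\pi/k)=2\pi$, i.e.\ to the standard generator of $H_1(E_2^\fr(1),\ZZ)=\ZZ$. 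As a byproduct, the cobounding 3-chains in Proposition~\ref{prop: c^k} constructed over $\ZZ_2$ should lift to integral 3-chains in the larger complex $\Xi$, since $[\cc_\Delta^k]^{\ccirc 2}=0$ in $H_2(E_2^\fr(1),\ZZ)=0$.
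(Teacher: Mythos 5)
The paper does not contain a proof of this statement: it is stated as Conjecture \ref{conjecture: Pachner complex} and is only ever used as a hypothesis (for Corollary \ref{corollary: Model 3} and Corollary \ref{cor: BV algebra on the reduced space of states}). So your proposal has to stand on its own, and as written it is a research outline with genuine gaps rather than a proof. Already the construction of $\Xi$ is incomplete: you index cells by polygonal decompositions decorated with a chamber in $\pi_0\Adm_{m,n}(\RR^2)$ per polygon, but you do not address how these decorations behave under the coarsenings that are supposed to give attaching maps (merging several decorated polygons into one larger polygon requires producing a chart and chamber for the union, and the combinatorial type of $\Sp(A)$ jumps across the exceptional walls of \cite{KSS}), nor how charts are matched when gluing cobordisms. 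This is precisely the difficulty that confines the paper's Model 2 to a convex polygon with a fixed $A\subset\RR^2$, and it cannot be waved through by saying the cells ``are dictated facet-by-facet by secondary polytope combinatorics.''

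The central step, the claimed deformation retraction eliminating floating/bulk vertices, is where the argument would fail as stated. Sliding a floating point to the boundary of its polygon is not a face degeneration of the secondary polytope (faces of $\Sp(A)$ correspond to regular subdivisions of a fixed $A$, not to configurations where a point hits $\dd\Conv(A)$), there is no metric data in $\Xi$ to select a ``shortest edge,'' and the chamber walls (cf.\ Figure \ref{fig: SP square with 1 floating point}) obstruct defining the collapse cellwise. Moreover, if such a retraction onto the no-bulk-vertex subcomplex existed, (iii) would reduce to identifying $\Xi_\flip$ of a surface with boundary with $\til\MM_{h,n}$, which the paper itself flags as open (Question \ref{question: Xi_flip vs Ribbon graphs. (b) boundary}); the known dictionary of Section \ref{sss: polygonal decompositions, ribbon graphs and the moduli space} matches triangulation \emph{vertices}, not boundary circles, with marked points, which is exactly why adding bulk vertices changes the homotopy type of the flip strata — the whole content of (iii) is that the stellar/secondary-polytope cells repair this, so the nerve-contractibility of your Quillen Theorem A fibers is not ``the standard fact that secondary polytopes are convex'' once wall-crossing and the unbounded number of bulk vertices enter. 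Finally, (v) is not a direct check: it presupposes an explicit geometric form of $f$ (rotations mapping to rotations) that your sketch never constructs, and the nontriviality of $[\cc_\Delta^k]$ in $H_1$ is itself Conjecture \ref{conjecture: c_Delta^k is nontrivial} of the paper, for which the authors only offer the same heuristic you invoke.
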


In the rest of this section we will assume the Conjecture above to hold.

\begin{remark}
    One can consider a modification of the setup above where the triangulations of boundary circles are not fixed and there are extra 1-cells of $\Xi$ corresponding to  ``elementary shelling'' moves at the boundary.
\end{remark}

\begin{remark}
The Conjecture might require the assumption that $k_i$ are large enough. 

On a related note, it is possible that the appropriate ``admissibility'' property of polygonal decompositions includes in particular that they are regular as CW complexes, which would imply $k_i\geq 2$. It might be necessary that triangulations are simplicial complexes in a strict sense, which would imply $k_i\geq 3$.
\end{remark}

\begin{remark}\label{rem: simple BV operator squares to zero in Xi}
    If $k_i=1$ is allowed, we expect the cycle $\cc_\Delta$ of Section \ref{ss: simplest BV operator} to satisfy (v) above. In particular, we expect it to square to a boundary in $\Xi$ even though it does not square to a boundary in $\Xi_\flip$.

    A similar point applies to $\cc_\Delta^k$ with any odd $k=3,5,\ldots$: we do not know if it squares to a boundary in $\Xi_\flip$ but (v) above implies that it does in $\Xi$.
\end{remark}

\subsubsection{HTQFT on the Pachner complex}
Fix an $\Ahat$ algebra $(V,Q,\{c_{[A]}\})$. Then one can define a $\Hom(\HH_\ii,\HH_\oo)$-valued cochain $Z$ (the HTQFT) on $\Xi$ by the formula (\ref{Z model 2}):
\begin{equation}\label{Z model 3}
   Z(e_\alpha)= \Big\langle \bigotimes_{\mr{polygons\,}(\p_i,A_i)\,\mr{of}\,\alpha} c_{[A_i]} , \bigotimes_{\mr{edges}\backslash\{\mr{in-edges}\}} g^{-1} \Big\rangle,
\end{equation}
for $\alpha$ a polygonal decomposition of the surface and $e_\alpha$ the corresponding face of $\Xi$.
\begin{corollary}[of Conjecture \ref{conjecture: Pachner complex}]\label{corollary: Model 3}
    \begin{enumerate}[(i)]
        \item Formula (\ref{Z model 3}) defines a functor from the category of cobordisms $\Sigma$ equipped with admissible polygonal decompositions (decorated with local configuration chambers) to the category of vector spaces and linear maps.
        \item The cochain (\ref{Z model 3}) satisfies the main equation of HTQFT,
        \begin{equation}
            (\delta+Q)Z=0.
        \end{equation}
    \end{enumerate}
\end{corollary}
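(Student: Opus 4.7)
The plan is to mirror the proofs of Theorem \ref{thm: model 1} and Theorem \ref{thm: model 2}, now transported onto the conjectural Pachner complex $\Xi$ guaranteed by Conjecture \ref{conjecture: Pachner complex}. The conjecture provides the CW-complex structure with the right local combinatorics; the rest of the argument is purely local and algebraic, coming from the $\Ahat$ operad relations (\ref{Ahat Q relation}), (\ref{Ahat Q relation 2}).

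For part (i) (functoriality), I would argue directly from the defining formula (\ref{Z model 3}), as in the strict case of Section \ref{ss: functoriality of strict tqft} and the flip case of Theorem \ref{thm: model 1}(i). Given two composable decorated cobordisms $(\Sigma',\alpha')$ and $(\Sigma'',\alpha'')$ whose boundary triangulations agree on the gluing circles, the composed polygonal decomposition $\alpha$ of $\Sigma$ has polygons equal to the disjoint union of those of $\alpha'$ and $\alpha''$, while the edges on the shared circles become internal edges of $\Sigma$. The tensor contraction in (\ref{Z model 3}) therefore factorizes: contracting the ``new'' internal edges by $g^{-1}$ is precisely the metric-dual of composing linear maps $\HH_1\to\HH_2\to\HH_3$. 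Hence $Z(\Sigma,\alpha)=Z(\Sigma'',\alpha'')\circ Z(\Sigma',\alpha')$.

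For part (ii), I would compute $Q\,Z(e_\alpha)$ by Leibniz. Since $Q g^{-1}=0$, only the tensors $c_{[A_i]}$ are differentiated, and the $\Ahat$ relation (\ref{Ahat Q relation 2}) converts each $Q c_{[A_i]}$ into (minus) the sum over coarse polyhedral subdivisions of the marked polygon $(\p_i,A_i)$, contracted with $g^{-1}$ along the new internal edges introduced by the subdivision. By Conjecture \ref{conjecture: Pachner complex}(iv), the local structure of $\Xi$ at $e_\alpha$ is modelled by the product $\prod_i \Sp(A_i)$ of secondary polytopes, so codimension-one sub-faces of $e_\alpha$ correspond exactly to coarse subdivisions of some single $(\p_i,A_i)$ while keeping the rest of $\alpha$ unchanged. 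Each such sub-face $e_{\alpha'}$ contributes $Z(e_{\alpha'})$ to $(\delta Z)(e_\alpha)$, and these contributions match, term by term, the summands produced above, giving $(\delta+Q)Z=0$.

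The principal obstacle is the sign bookkeeping in the step matching the $\Ahat$-produced sum with the cellular coboundary $\delta$. This requires that the local identification of the star of $e_\alpha$ in $\Xi$ with $\prod_i \Sp(A_i)$ coming from Conjecture \ref{conjecture: Pachner complex}(iv) be orientation-compatible, so that the incidence numbers in the cellular boundary operator of $\Xi$ agree with the orientation-induced signs of the codimension-one faces of the $\Sp(A_i)$'s used in (\ref{Ahat Q relation 2}). Once this local orientation compatibility is folded into the statement of the conjecture, the rest of the proof is a direct bookkeeping adaptation of the flip-theory argument in Theorem \ref{thm: model 1}(ii); everything else follows from the $\Ahat$ relation applied polygon-by-polygon.
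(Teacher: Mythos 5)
Your proposal is correct and follows essentially the same route as the paper: part (i) is immediate from the local contraction formula (\ref{Z model 3}) exactly as in Theorem \ref{thm: model 1}(i), and part (ii) combines the $\Ahat$ relation (\ref{Ahat Q relation 2}) with the fact that codimension-one sub-faces of $e_\alpha$ in $\Xi$ are governed by the secondary-polytope combinatorics of Conjecture \ref{conjecture: Pachner complex}(iv). Your extra remark that the orientation/sign compatibility of the local identification with $\prod_i \Sp(A_i)$ should be folded into the conjecture is a fair point of care which the paper leaves implicit, but it does not change the argument.
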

\begin{proof}
Part (i) is similar to (i) of Theorem \ref{thm: model 1} and is immediate from the local construction of the cochain (\ref{Z model 3}). Part (ii) is
due to the combinatorics of the face poset of $\Xi$. 
\end{proof}

\subsubsection{Chain-level BV algebra structure} 
Fix  $k$ 
and denote $\HH=\HH(S^1_{(k)})=V^{\otimes k}$.
For all triangulated cobordisms in this section we will assume that the induced triangulation of each boundary circle is into $k$ intervals.

\textbf{Product (or ``closed string product").} For $\Sigma$ a pair of pants equipped with any triangulation $T$, the isomorphism $f_*$ (\ref{f_*}) identifies the homology class of $T$ in $H_0(\Xi)$ with the ``product'' generator of $H_0(E_2^\fr(2))$, cf. Section \ref{sss: Efr and BV}. Evaluating $Z$ on $(\Sigma,T)$, we obtain a degree $0$ map
\begin{equation}\label{closed string product}
    *_T=Z(T)\colon 
    \HH\otimes \HH \ra \HH
\end{equation}
-- the ``closed string product'' (as opposed to the $A_\infty$ operation $m_2\colon V\otimes V\ra V$ -- the ``open string product'').

The operation (\ref{closed string product}) is associative modulo $Q$-exact terms as follows from the fact that two triangulations $T\circ_1 T$, $T\circ_2 T$ of a sphere with four holes can be transformed one into the other by a sequence of Pachner moves. A similar argument (comparing $T$ to its rotation by $\pi$, $T'$, which swaps the two in-circles) shows commutativity of (\ref{closed string product}) modulo $Q(\cdots)$.

\textbf{BV operator.} Evaluating $Z$ on the chain (\ref{c_Delta^n}) yields a degree $-1$ BV operator 
\begin{equation}\label{Delta model 3}
\Delta=Z(\cc^k_\Delta)\colon 
\HH\ra \HH
\end{equation} 
which satisfies $\Delta^2=Q(\cdots)$.

\textbf{BV bracket.} Consider a triangulated pair of pants $(\Sigma,T)$ 
and consider a 1-cycle on $\Xi$ given by composing $T$ with the cycle $\cc_\Delta^k$ in all possible ways:
\begin{equation}
    R=\cc_\Delta^k\circ T- T\circ_1 \cc_\Delta^k - T\circ_2 \cc_\Delta^k.
\end{equation}
The isomorphism $f_*$ maps the homology class of $R$ to the ``BV bracket'' generator of $H_1(E_2^\fr(2))$. Evaluating $Z$ on this 1-cycle, we get a degree $-1$ map
\begin{equation}\label{BV bracket model 3}
    \{,\}_{T,\cc^k_\Delta}=Z(R)\colon 
    \HH\otimes \HH\ra \HH
\end{equation}
-- the (chain-level) BV bracket.
Due to relations in homology of $E_2^\fr$, $Z(R)$ satisfies Jacobi and bi-derivation property modulo $Q$-exact terms. The latter is equivalent to the BV operator $\Delta$ satisfying the 7-term relation (\ref{7-term relation}) modulo $Q$-exact terms.

\textbf{Unit.} For $(D,S)$ a triangulated disk, the class of $S$ in $H_0(\Xi)$ maps by $f_*$ to the unit class in $H_0(E_2^\fr(0))$. Thus we have a (chain-level) ``unit''
\begin{equation}\label{Unit model 3}
   1_{S}= Z(S)\in 
    \HH.
\end{equation}
The unit properties -- (\ref{Unit model 3}) being a unit for the product (\ref{closed string product}) and being annihilated by $\Delta$ -- hold up to $Q(\cdots)$, by the same reason as above.

\begin{figure}[h]
    \centering
    \includegraphics[scale=0.7]{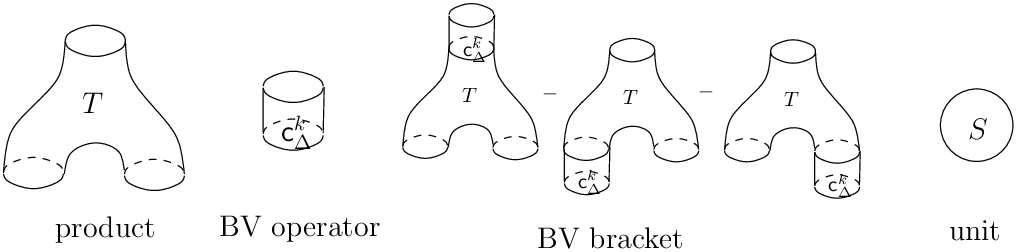}
    \caption{Chain-level BV algebra operations. $T$ is a triangulation of a pair of pants, $S$ is a triangulation of a disk.}
\end{figure}

\subsubsection{Reduction by passing to $Q$-cohomology and by attaching triangulated cylinders. The strict BV algebra.}
All the objects (\ref{closed string product}), (\ref{Delta model 3}), (\ref{BV bracket model 3}), (\ref{Unit model 3}) are $Q$-closed and change by $Q$-exact terms when triangulations $T,S$ are changed (or the BV cycle is changed). In particular, passing to cohomology of $Q$, one has a strict BV algebra structure on the $Q$-cohomology of the space of states, $H_Q(\HH)$.

\textbf{Reduction by triangulated cylinders.}
    Evaluating $Z$ on any triangulated cylinder $(\Sigma,T)$ and passing to $Q$-cohomology, we obtain a map $P\colon H_Q(\HH)\ra H_Q(\HH)$. Since the composition square of $(\Sigma,T)$ can be moved to $(\Sigma,T)$ by a series of Pachner moves, $P$ is a projector: $P^2=P$. By construction, the strict BV algebra on $H_Q(\HH)$ factors through $P$, as one can attach a triangulated cylinder to any end of any of the triangulated cobordisms giving the BV algebra operations. 
    
    Put another way, one has a splitting
    \begin{equation}
        H_Q(\HH)= \underbrace{\HH^\mr{red}}_{\mr{im}(P)}\oplus \mr{ker}(P).
    \end{equation}
    The operations of the strict BV algebra descend to $\HH^\rd$ and vanish on $\mr{ker}(P)$.
    Note that $\HH^\rd$ here plays similar role to the Hochschild homology appearing in Section \ref{sss: strict TQFT: space of states for non-triangulated circle}.\footnote{Also, our construction of the space $\HH^\rd$ is similar to the construction of the space of states as the quotient by the kernel of the partition function for a cylinder in \cite{TV}.} 

Furthermore, we note that the space $\HH^\rd$ and the BV algebra structure on it are independent of $k$. The isomorphism $\HH^\rd_k\ra \HH^\rd_{k'}$ (here we indicate explicitly the triangulation of the circle as a subscript) is given by any triangulated cylinder with the in- and out-circles triangulated into $k$ and $k'$ intervals. 

\begin{remark}
Restricting to $Q$-cohomology and restricting to triangulated cobordisms only (i.e. with no higher polygons and no floating points) yields the 
strict TQFT
of Section \ref{sec: combinatorial strict 2d TQFT}, based on the noncommutative graded Frobenius algebra $H_Q(V)$, with the strictly associative multiplication inherited from the operation $m_2$ on $V$. 

In particular, as in Section \ref{sss: strict TQFT: space of states for non-triangulated circle}, one has a model for the reduced space of states $\HH^\rd$ as the center of $H_Q(V)$,
\begin{equation}
    \HH^\rd=\mc{Z}(H_Q(V)).
\end{equation}

\end{remark}

To summarize the discussion above, we have the following.
\begin{corollary}[of Conjecture \ref{conjecture: Pachner complex}]\label{cor: BV algebra on the reduced space of states}
    For an $\Ahat$ algebra $(V,Q,\{c_{[A]}\})$, there is a canonically defined BV algebra structure induced on the center $\mc{Z}(H_Q(V))$ of the cohomology of $Q$.
\end{corollary}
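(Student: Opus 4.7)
The plan is to use Conjecture \ref{conjecture: Pachner complex} to import the BV operad structure from $H_\bt(E_2^\fr)$ to $H_Q(\HH)$ through the isomorphism $f_*$, then descend to the reduced space of states and identify the latter with $\mc{Z}(H_Q(V))$.

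First, I would fix $k\geq 1$ and work with $\HH=\HH(S^1_{(k)})=V^{\otimes k}$. For each generator of the BV operad I would exhibit an explicit chain on the relevant component of $\Xi$ whose $f_*$-image is the corresponding generator of $H_\bt(E_2^\fr)$: the class of a triangulated disk $S$ for the unit, a triangulated pair of pants $T$ for the product, the BV cycle $\cc_\Delta^k$ for the BV operator (by property (v) of the Conjecture), and the cycle $R=\cc_\Delta^k\circ T - T\circ_1 \cc_\Delta^k - T\circ_2 \cc_\Delta^k$ for the BV bracket. Evaluating $Z$ on each of these chains yields the chain-level operations (\ref{closed string product})--(\ref{Unit model 3}) on $\HH$.

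The next step is to check that every BV algebra axiom (associativity and graded-commutativity of $*_T$, the unit axioms, $\Delta^2=0$, the 7-term relation, and the $\Delta$-Leibniz identity producing the bracket) holds modulo $Q$-exact terms. Each axiom corresponds to an identity in $H_\bt(E_2^\fr)$ and, via $f_*$, lifts to an equality of homology classes in $H_\bt(\Xi)$ obtained by sewing the chosen representatives. Whenever two such chains satisfy $\zeta_1-\zeta_2=\partial\eta$, the main equation $(\delta+Q)Z=0$ together with the sewing-functoriality $Z(\zeta''\ccirc\zeta')=Z(\zeta'')\circ Z(\zeta')$ yields $Z(\zeta_1)-Z(\zeta_2)=\pm QZ(\eta)$. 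Hence every BV relation holds up to $Q$-exact terms, and passing to $Q$-cohomology gives a strict BV algebra on $H_Q(\HH)$, manifestly independent of the choices of triangulations used to represent $S,T$ and of the specific BV cycle chosen for $\Delta$ (different choices differ by boundaries, hence by $Q$-exact terms after applying $Z$).

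Finally, I would reduce to the center. Picking any triangulated cylinder $T_\mr{cyl}$, its sewing-square is Pachner-equivalent to $T_\mr{cyl}$, so $P\colon=[Z(T_\mr{cyl})]$ is a projector on $H_Q(\HH)$; attaching a cylinder at any leg of the cobordisms representing the BV operations produces a Pachner-equivalent cobordism, so all the operations factor through $P$ and descend to $\HH^\rd=\mr{im}(P)$. Restricting the construction to purely triangulated cobordisms (no higher polygons, no floating points) specialises to the strict 2d TQFT of Section \ref{sec: combinatorial strict 2d TQFT} applied to $H_Q(V)$, for which the discussion of Section \ref{sss: strict TQFT: space of states for non-triangulated circle} identifies the reduced space of states with $\mc{Z}(H_Q(V))$; transporting the BV structure along this identification yields the claimed canonical BV algebra. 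The main obstacle I anticipate is precisely this last identification: one must verify that $\HH^\rd\simeq\mc{Z}(H_Q(V))$ remains valid when the local data is only an $\Ahat$ algebra. The crucial inputs are that the product induced on $H_Q(V)$ by $m_2$ is strictly associative (obtained from (\ref{A_infty rel 3}) after passing to $Q$-cohomology) and that the metric $g$ satisfies the trace relation (\ref{g from m2}) on $H_Q(V)$ (obtained from the stellar subdivision relation (\ref{Pachner 2 up to homotopy}) for the $\Ahat$ operation $\mu$ after passing to $Q$-cohomology); with these in hand, the classical argument of Section \ref{sss: strict TQFT: space of states for non-triangulated circle} goes through verbatim.
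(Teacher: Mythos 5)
Your proposal takes essentially the same route as the paper: chain-level representatives $Z(S)$, $Z(T)$, $Z(\cc_\Delta^k)$, $Z(R)$ whose classes are matched under $f_*$ with the generators of the BV operad, BV relations modulo $Q$-exact terms from $(\delta+Q)Z=0$ plus sewing functoriality, reduction by the cylinder projector $P$, and identification of $\mathrm{im}(P)$ with $\mc{Z}(H_Q(V))$ via the strict TQFT of Section \ref{sec: combinatorial strict 2d TQFT} and the argument of Section \ref{sss: strict TQFT: space of states for non-triangulated circle}. One small correction: passing (\ref{Pachner 2 up to homotopy}) to $Q$-cohomology gives the stellar-move relation (\ref{Pachner 2 rel (strict)}) on $H_Q(V)$ (i.e.\ the trace identity with second argument a product), not literally (\ref{g from m2}), which is only a sufficient condition for it; this is also all the paper itself invokes, so the approach and its level of rigor coincide.
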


\end{document}